\newif\iflipics
\lipicsfalse % Use this for LIPICS style: set to \lipicstrue or \lipicsfalse

\iflipics
\documentclass[a4paper,UKenglish,cleveref, autoref, thm-restate,authorcolumns]{lipics-v2019}
\else
\documentclass[11pt]{article}
\fi

%!TEX root = main.tex

\usepackage{amsmath,amsthm,amssymb,xspace,verbatim,multirow,bm,bbm,paralist,enumitem}
\usepackage{array,multirow,multicol,booktabs}

\allowdisplaybreaks[1]

\iflipics
    \usepackage{stmaryrd}
\else
    \usepackage[margin=1in]{geometry}
    \usepackage{txfonts,times}
    \usepackage{paralist}
    \usepackage[T1]{fontenc}
    \usepackage{graphicx}
    \usepackage[usenames,dvipsnames,]{xcolor}
    \usepackage[colorlinks, linkcolor=BrickRed, citecolor=blue]{hyperref}
    \usepackage{thm-restate,cleveref}

    \newtheorem{theorem}{Theorem}[section]
    \newtheorem{lemma}[theorem]{Lemma}
    \newtheorem{corollary}[theorem]{Corollary}
    
    \newtheorem{claim}[theorem]{Claim}
    
    \theoremstyle{definition}  
    
    \theoremstyle{remark}  
\fi

\newtheorem{fact}[theorem]{Fact}
\newtheorem{observation}[theorem]{Observation}

\iflipics
    \makeatletter
    \newcommand{\mypar}[1]{\if@nobreak\else\medskip\fi\noindent{\sffamily\bfseries #1.}~}
    \makeatother
\else
    \newcommand{\mypar}[1]{\medskip\noindent{\bfseries #1.}~}
\fi

\newlist{protocol}{description}{1}
\setlist[protocol]{font=\normalfont\em, itemindent=-2ex, itemsep=1pt}

\newcommand{\hdist}{\widehat{\text{dist}}}

\newcommand{\FF}{\mathbb{F}}
\newcommand{\NN}{\mathbb{N}}
\newcommand{\RR}{\mathbb{R}}
\newcommand{\ZZ}{\mathbb{Z}}

\newcommand{\cB}{\mathcal{B}}

\newcommand{\cH}{\mathcal{H}}

\newcommand{\cV}{\mathcal{V}}

\newcommand{\cX}{\mathcal{X}}
\newcommand{\ta}{\tilde{a}}
\newcommand{\tb}{\tilde{b}}

\newcommand{\tA}{\Tilde{A}}

\newcommand{\tf}{\tilde{f}}

\newcommand{\hg}{\hat{g}}
\newcommand{\hp}{\hat{p}}
\newcommand{\hq}{\hat{q}}
\newcommand{\hT}{\hat{T}}
\newcommand{\hB}{\widehat{B}}

\newcommand{\ba}{\mathbf{a}}
\newcommand{\bb}{\mathbf{b}}
\newcommand{\be}{\mathbf{e}}
\newcommand{\bff}{\mathbf{f}}

\newcommand{\bp}{\mathbf{p}}

\newcommand{\bv}{\mathbf{v}}

\newcommand{\match}{\alpha'}
\newcommand{\fing}{\varphi}

\newcommand{\triCount}{\textsc{TriangleCount}\xspace}
\newcommand{\tricntadj}{\textsc{TriangleCount-Adj}\xspace}
\newcommand{\MaxMatching}{\textsc{MaxMatching}\xspace}
\newcommand{\MIS}{\textsc{MIS}\xspace}
\newcommand{\Acyc}{\textsc{Acyclicity}\xspace}
\newcommand{\Topo}{\textsc{TopoSort}\xspace}

\newcommand{\ShortestPath}{\textsc{st-ShortestPath}\xspace}
\newcommand{\SSSP}{\textsc{SSSP}\xspace}

\newcommand{\CrossEdgeCount}{\textsc{CrossEdgeCount}\xspace}

\newcommand{\InducedEdgeCnt}{\textsc{InducedEdgeCount}\xspace}

\newcommand{\eps}{\varepsilon}
\newcommand{\tO}{\tilde{O}}
\newcommand{\tOmega}{\tilde{\Omega}}

\newcommand{\ang}[1]{\langle{#1}\rangle}

         % = `top' strut
   % = `bottom' strut

% If you want to get \middle| to behave exactly as \mid on all accounts 
% (spacing, stacking in the form \mid\mid, etc.), you need to use 
% \mathrel{} on both sides of \middle|\

\renewcommand{\ge}{\geqslant}
\renewcommand{\le}{\leqslant}
\renewcommand{\geq}{\geqslant}
\renewcommand{\leq}{\leqslant}
\renewcommand{\b}{\{0,1\}}

\DeclareMathOperator{\tdist}{dist}

\DeclareMathOperator{\out}{out}

\newcommand{\eat}[1]{}

\title{Streaming Verification for Graph Problems: Optimal Tradeoffs and Nonlinear Sketches} 

\iflipics

\titlerunning{Streaming Verification for Graph Problems} 

\bibliographystyle{plainurl}% the mandatory bibstyle

\author{Amit Chakrabarti}{Dartmouth College, USA}{}{}{Work supported in part by NSF under award CCF-1907738.}

\author{Prantar Ghosh}{Dartmouth College, USA}{}{}{Work supported in part by NSF under award CCF-1907738.}

\author{Justin Thaler}{Georgetown University, USA}{}{}{Work supported by NSF SPX award CCF-1918989, and NSF CAREER award CCF-1845125. Parts of this work were performed while visiting the Simons Institute for the Theory of Computing.}

%TODO mandatory, please use full name; only 1 author per \author macro; first two parameters are mandatory, other parameters can be empty. Please provide at least the name of the affiliation and the country. The full address is optional

\authorrunning{A.~Chakrabarti and P.~Ghosh and J.~Thaler}%TODO mandatory. First: Use abbreviated first/middle names. Second (only in severe cases): Use first author plus 'et al.'

\Copyright{Amit Chakrabarti and Prantar Ghosh and Justin Thaler}%TODO mandatory, please use full first names. LIPIcs license is "CC-BY";  http://creativecommons.org/licenses/by/3.0/

\ccsdesc[500]{Theory of computation~Streaming models}
\ccsdesc[500]{Theory of computation~Interactive proof systems}
\ccsdesc[300]{Computer systems organization~Cloud computing}

\keywords{data streams, interactive proofs, Arthur-Merlin, graph algorithms}

\category{RANDOM} %optional, e.g. invited paper

\relatedversion{} %optional, e.g. full version hosted on arXiv, HAL, or other respository/website
%\relatedversion{A full version of the paper is available at \url{...}.}

\supplement{}%optional, e.g. related research data, source code, ... hosted on a repository like zenodo, figshare, GitHub, ...

%\funding{(Optional) general funding statement \dots}%optional, to capture a funding statement, which applies to all authors. Please enter author specific funding statements as fifth argument of the \author macro.

\acknowledgements{}%optional

\EventEditors{Jaros{\l}aw Byrka and Raghu Meka}
\EventNoEds{2}
\EventLongTitle{Approximation, Randomization, and Combinatorial Optimization. Algorithms and Techniques (APPROX/RANDOM 2020)}
\EventShortTitle{\mbox{\scriptsize APPROX/RANDOM 2020}}
\EventAcronym{APPROX/RANDOM}
\EventYear{2020}
\EventDate{August 17--19, 2020}
\EventLocation{Virtual Conference}
\EventLogo{}
\SeriesVolume{176}
\ArticleNo{22}

\else 

\author{Amit Chakrabarti\thanks{Department of Computer Science, Dartmouth College. Email: \{ac, prantarg\}@cs.dartmouth.edu. Work supported in part by NSF under award CCF-1907738.} \and Prantar Ghosh$^\fnsymbol{footnote}$ \and Justin Thaler\thanks{Department of Computer Science, Georgetown University. Email: justin.thaler@georgetown.edu. Work supported by NSF SPX award CCF-1918989, and NSF CAREER award CCF-1845125. Parts of this work were performed while visiting the Simons Institute for the Theory of Computing.}}

\fi

\begin{document}

\date{}

\maketitle  
%!TEX root = main.tex

\begin{abstract}
  We study graph computations in an enhanced data streaming setting, where a space-bounded client reading the edge stream of a massive graph may delegate some of its work to a cloud service. We seek algorithms that allow the client to \emph{verify} a purported
  \emph{proof} sent by the cloud service that the work done in the cloud is correct.
  A line of work starting with Chakrabarti et al.~(ICALP~2009) has provided such algorithms, which we call \emph{schemes}, for several statistical and graph-theoretic problems, many of which exhibit a tradeoff between the length of the proof and the space used by the streaming verifier.
  
  This work designs new schemes for a number of basic graph problems---including triangle counting, maximum matching, topological sorting, and single-source shortest paths---where past work had either failed to obtain smooth tradeoffs between these two key complexity measures or only obtained suboptimal tradeoffs. Our key innovation is having the verifier compute certain \emph{nonlinear} sketches of the input stream, leading to either new or improved tradeoffs. In many cases, our schemes in fact provide optimal tradeoffs up to logarithmic factors. 

  Specifically, for most graph problems that we study, it is known that the product of the verifier's space cost $v$ and the proof length $h$ must be at least $\Omega(n^2)$ for $n$-vertex graphs. However, matching upper bounds are only known for a handful of settings of $h$ and $v$ on the curve $h \cdot v=\tilde{\Theta}(n^2)$. For example, for counting triangles and maximum matching, schemes with costs lying on this curve are only known for $(h=\tilde{O}(n^2), v=\tilde{O}(1))$, $(h=\tilde{O}(n), v=\tilde{O}(n))$, and the trivial $(h=\tilde{O}(1), v=\tilde{O}(n^2))$. A major message of this work is that by exploiting nonlinear sketches, a significant ``portion'' of costs on the tradeoff curve $h \cdot v = n^2$ can be achieved.
\end{abstract}
\iflipics
\thispagestyle{empty}
\addtocounter{page}{-1}
\newpage
\fi
%!TEX root = main.tex

\section{Introduction}

It is far easier to verify a proof than to find one. This intuitively clear
fact has been given precise meanings in several settings\eat{in theoretical
computer science}, leading to such landmark results as the IP $=$
PSPACE~\cite{Shamir92} and PCP Theorems~\cite{AroraLMSS98,AroraS98}. There is
a growing body of work on results of this flavor for space-efficient
computations on large data streams~\cite{Thaler-encyclopedia}. In this setting, a
space-bounded client (henceforth named Verifier) that can only process inputs
in the restrictive {\em data streaming} setting has access to a
computationally powerful entity (henceforth named Prover), such as cloud
computing service, that has no such space limitations. As past work has shown,
many fundamental problems that are intractable in the plain data-streaming
model---in the sense that they cannot be solved using sublinear space---do
admit nontrivial solutions in this Verifier/Prover model, {\em without}
Verifier having to trust Prover blindly.

An algorithm in this model specifies a protocol to be followed by Verifier and Prover so that the former may compute some function $f(\sigma)$ of the input stream $\sigma$. Prover, by performing the specified actions honestly, convinces Verifier to output the correct value $f(\sigma)$. However, if Prover fails to follow the protocol, whether out of malice or error (modeling hardware, software, or network faults in the cloud service), then Verifier is highly likely to detect this and {\em reject}. Past work has considered a few different instances of this setup, such as 
(a)~{\em annotated} data streaming algorithms~\cite{ChakrabartiCMT14}---also called {\em online schemes}---where the parties read $\sigma$ together and the protocol consists of Prover streaming a ``help message'' (a.k.a.~proof) to Verifier either during stream processing and/or at the end; (b)~{\em prescient schemes}~\cite{ChakrabartiCGT14, ChakrabartiCMT14}, which are a variant of the above where Prover knows all of $\sigma$ before Verifier sees it; (c)~{\em streaming interactive proofs} (SIPs)~\cite{ChakrabartiCMTV15,CormodeTY11}, where Verifier and Prover engage in multiple rounds of communication.

This work focuses on the first and arguably best-motivated of these models, namely, online schemes. We simply call them {\em schemes}. We give new and improved schemes for several graph-theoretic problems, including triangle counting, maximum matching, topological sorting, and shortest paths. In all cases, the input is a huge $n$-vertex graph $G$ given as a stream $\sigma$ of edge insertions and/or deletions. While most of our problems have been studied before, we give schemes that (a)~have better complexity parameters, in some cases achieving optimality, and (b)~use cleverer algebraic encodings of the relevant combinatorial problems, often exploiting the ability of a streaming algorithm to compute {\em nonlinear} sketches.

\subsection{Setup, Terminology, and Motivation} \label{sec:setup}
We formalize the setup described above. A scheme for a function $f$ specifies three things: (i)~a space-bounded data streaming algorithm used by Verifier to process the input $\sigma$ and compute a summary $\cV_R(\sigma)$, using random coins $R$; (ii)~a help function used by Prover to send a message $\cH(\sigma)$ to Verifier as a ``proof stream'' after the input stream ends;\footnote{A more general (though seldom used) model allows Prover to send help messages after each data item in $\sigma$.}
and (iii)~an output algorithm $\out_R(\cV_R(\sigma),\cH(\sigma))$ capturing Verifier's work during and after the proof stream, which produces values in range$(f) \cup \{\bot\}$, where an output of $\bot$ indicates ``reject.''
If $\cV_R$ and $\out_R$ run in $O(v)$ bits of space and $\cH$ provides $O(h)$ bits of help, then this scheme is called an {\em $(h,v)$-scheme}. A scheme is interesting if we can use $h > 0$ to achieve a value of $v$ asymptotically smaller than what is feasible or known for a basic streaming algorithm, where $h = 0$.
A scheme is said to have
\begin{itemize}[topsep=4pt,itemsep=0pt]
  \item completeness error $\eps_c$ if $\forall \sigma\, \exists \cH: \Pr_R[\out_R(\cV_R(\sigma),\cH(\sigma)) = f(\sigma)] \ge 1-\eps_c$; 
  \item soundness error $\eps_s$ if $\forall \sigma, \cH': \Pr_R[\out_R(\cV_R(\sigma),\cH'(\sigma)) \notin \{f(\sigma), \bot\}] \le \eps_s$.
\end{itemize}
In designing schemes, we will aim for $\eps_s \le 1/3$, which can be reduced further via parallel repetition in standard ways. We will also achieve perfect completeness, i.e., $\eps_c = 0$.
% Removing this para break for LIPICS
For an $(h,v)$-scheme we refer to $h$ as its hcost (short for ``help cost'') and $v$ as its vcost (``verification cost''). We use the notation $[h,v]$-scheme as a shorthand for an $(\tO(h),\tO(v))$-scheme.\footnote{%
The notation $\tO(\cdot)$ hides factors polynomial in $\log n$.}

It is intuitive that the parameters $h$ and $v$ are in tension, suggesting that they can be traded off against one another. Most of our algorithms do obtain such tradeoffs. We emphasize that actually obtaining a smooth tradeoff for large ranges of $h$ and $v$ values is not automatic: indeed, an important contribution of this work is to obtain such tradeoffs for problems where past work gave comparable results only for specific settings of $h$ and $v$. 

When studying the results discussed below, it is useful to keep a few cost regimes in mind. We focus on graph problems on $n$-vertex inputs. An $(h,v)$-scheme for such a problem is
{\em sublinear} if $h = o(n^2)$ and $v = o(n^2)$;
{\em frugal} if it is sublinear and achieves the stronger guarantee $v = o(n)$; and
{\em laconic} if it is sublinear and achieves the stronger guarantee $h = o(n)$.

Many graph problems are {\em intractable} in the basic one-pass streaming model, meaning that they provably require $\Omega(n^2)$ space. Past work~\cite{ChakrabartiCMT14} implies that any $(h,v)$-scheme for such a problem must have $hv = \Omega(n^2)$. Thus, an $[h,v]$-scheme with $hv = O(n^2)$ for an intractable problem has achieved an optimal tradeoff, up to logarithmic factors.
All of the problems we consider in this paper (except for counting connected components) 
are intractable for dense graphs (i.e., 
graphs with $\Omega(n^2)$ edges).

Frugal schemes are important when Verifier is so starved for space that it cannot afford to store even a constant fraction of the vertices. They are also very interesting from a theoretical standpoint, since even ``easy'' graph problems require at least $\Omega(n)$ space in the basic streaming model. 
On the other hand, laconic schemes are naturally motivated by settings where Verifier does not have streaming access to the proof and has to store it in full. Consider for example a retail client that uploads transactions to the cloud as they occur. It makes sense to have uploaded even terabytes of information {\em in total} over a long period of time: days, months, or years. However, it might not be reasonable for the cloud to transfer a proof consisting of, say, tens of gigabytes to the client. From a theoretical standpoint, in solving an intractable problem, if Verifier has to store the proof, there is no reason to ever try to reduce vcost to $o(n)$, since hcost will then blow up to $\omega(n)$.

\subsection{Problems, Results, and Comparisons with Related Work} \label{sec:results}

Throughout, the input graph $G$ will be on the fixed vertex set $V = [n] := \{1,\ldots,n\}$ and will have $m$ edges. Many results will be stated in terms of tunable parameters $t,s \in \ZZ^+$ that must satisfy $ts \ge n$. Since bounds are asymptotic, this condition can be read as $ts = n$.

\begin{table*}[!hbt]
\begin{minipage}{\textwidth} % used for centering table
\centering
\begin{tabular}{c c c c}
\toprule
{\bf Problem}
& {\bf Scheme}  
& {\bf Tradeoff}
& {\bf Reference} \\
\midrule
{}
& {$[h,v]; hv=n^3$} 
& {Suboptimal}
%& {}
& \cite{ChakrabartiCMT14}
\\
{}
& {$[n^2,1]$} 
& {Optimal} 
& {\cite{ChakrabartiCMT14}}
\\
{}
& {$[n,n]$} 
& {Optimal}
%& {First semi-streaming}
& {\cite{Thaler16}}
\\
{}
& {$[t^3,s^2]$; $ts=n$} 
& {Suboptimal}
%& {First frugal}
& {\cite{ChakrabartiG19}}
\\
{\triCount}
& {$[nt^2,s]$; $ts=n$}
& {}
%& {Best frugal} 
& {\Cref{firstthm}}
\\
\vspace{3mm}
{}
& {$[t,ns]$; $ts=n$} 
& {Optimal}
%& {First laconic}
& {\Cref{secondthm}}
\\
{}
& {$[mn/\sqrt{v},~v]$} 
& {Suboptimal}
%& {Imperfect Completeness}
& {\cite{ChakrabartiCGT14}}
\\
\vspace{3mm}
{}
& {$[m+h,v]$; $hv=n^2$} 
& {}
%& {Perfect Completeness}
& {\Cref{thm:tri-cnt-sparse}}
\\
{\tricntadj}
& {$[h,v]$; $hv=n^2$} 
& {}
%& {Adjacency list model}
& {\Cref{thm:tri-cnt-adjlist}}
\\

\midrule

{}
&{$[m,1]$} 
&{Optimal}
%&{}
&{\cite{CormodeMT13}}
\\
{}
& {$[n,n]$} 
& {Optimal}
%& {First semi-streaming}
& {\cite{Thaler16}} 
\\
{\MaxMatching}
& {$[t^3,s^2]$; $ts=n$} 
& {Suboptimal}
%& {First frugal}
& {\cite{ChakrabartiG19}}
\\
{}
& {$[nt,s]$; $ts=n$} 
& {Optimal}
%& {Best frugal}
& {\Cref{thm:maxmatch-frugal}}
\\
{}
& {$[\alpha'+h,v]$; $hv=n^2$} 
& {}%Optimal for $h\ge \alpha'$ \ccomment{JT: is this right? see email chain.}}
%& {}
& {\Cref{thm:maxmatch-laconic}}
\\
\midrule
{\MIS}
& {$[nt,s]$; $ts=n$} 
& {Optimal}
%& {First frugal}
& {\Cref{thm:mis}}
\\
\midrule
{\Acyc/\Topo}
& {$[m,1]$} 
& {Optimal}
%& {}
& {\cite{CormodeMT13}}
\\
{}
& {$[nt,s]$; $ts=n$} 
& {Optimal}
%& {First frugal}
& {\Cref{thm:topo}; \Cref{cor:acyc}}
\\
\midrule
{}
& {$[Dnt,s]$; $ts=n$} 
& {}
%& {simple graphs only}
& {\cite{CormodeMT13}}
\\
{\ShortestPath}
& {$[Kn,n]$}
& {}
%& {multigraphs}
& {\cite{ChakrabartiG19}}
\\
{}
& {$[Knt,s]$; $ts=n$} 
& {}
%& {multigraphs}
& {\Cref{cor:st-shortestpath}}
\\
\midrule
{Unweighted \SSSP}
\vspace{2mm}
& {$[Dnt,s]$; $ts=n$} 
& {}
%& {multigraphs}
& {\Cref{thm:unw-sssp}}
\\
{}
& {$[m+n,1]$} 
& {Optimal}
%& {??}
& {\cite{CormodeMT13}}
\\
{Weighted \SSSP}
& {$[DWn,n]$} 
& {}
%& {turnstile weight update}
& {\Cref{thm:sssp-turnstile}}
\\
{}
& {$[Dn,Wn]$} 
& {}
%& {atomic weight update}
& {\Cref{thm:sssp-atomic}}
\\
\bottomrule
\end{tabular}
\end{minipage}

\caption{Summary of results on the problems considered in this paper. A scheme is deemed \emph{optimal} if it has help cost at most $h$ and space cost at most $v$ for at least one pair $h, v$ such that $h\cdot v \leq \tilde{O}(L)$, whereas it is known that \emph{any} $(h, v)$ scheme that applies to all graphs requires $h \cdot v \geq \Omega(L)$. A blank space in the \emph{Tradeoff} column indicates that it remains open whether
the scheme can be strictly improved. Here, $\alpha'$ is the size of a maximum matching in the input graph, $K$ is the length of a shortest $v_s$--$v_t$ path, $D$ is the maximum distance from the source to the any other reachable vertex, and $W$ is the maximum weight of an edge.}
\label{table:results}
\end{table*}
\mypar{Triangle Counting}
Our starting point is the triangle counting problem (henceforth, \triCount), studied heavily in past work on graph streaming~\cite{BarYossefKS02, BeraC17, BuriolFLMS06, JhaSP13,JowhariG05, KaneMSS12, McGregorVV16,Thaler16}.
Given a multigraph $G$ as a dynamic stream (i.e., insertions and deletions), the goal is to compute~$T$, the number of triangles in $G$. The exact counting version studied in this paper is an intractable problem in the sense of \Cref{sec:setup}: it requires $\Omega(n^2)$ space in basic streaming.

As noted in \Cref{table:results}, we give several new algorithms for \triCount. Our $[nt^2,s]$-scheme improves upon the best known frugal scheme for the problem: for a fixed hcost $h \ge n$, it improves the vcost from $v^{4/3}$ to $v$, where $v=n^{3/2}/\sqrt{h}$, and for a fixed vcost $v \le n$, it improves the hcost from $n^3/v^{3/2}$ to $n^3/v^2$. Our $[t,ns]$-scheme is not only the first laconic scheme for the problem but also achieves smooth optimal tradeoff in its parameter range; thus, it settles the complexity of the problem in the laconic regime. The $[m+h,v]$-scheme whenever $hv = n^2$ generalizes the $[n^2,1]$-scheme from prior work for any $m$-edge graph (for the setting $h=m$ and $v=n^2/m$) and is interesting in the frugal regime for sparse graphs. 

The problem has also been studied in the adjacency-list model (call it \tricntadj) \cite{BuriolFLMS06,KallaugherMPV19,KolountzakisMPT12, McGregorVV16}, where the stream presents the full neighbor list for each vertex contiguously. We give an $[h,v]$-scheme for any $hv = n^2$ for \tricntadj (again, exact counting). In basic streaming, there is no nontrivial algorithm for computing $T$ exactly, or even approximately when $T$ is small; in fact, under a long-standing conjecture in communication complexity, these problems require $\Omega(m)$ space~\cite{KallaugherMPV19}.

\mypar{Maximum Matching}
There is a recent and ongoing flurry of activity on streaming algorithms for \MaxMatching,
the problem of computing the cardinality $\match(G)$ of a maximum-sized matching%
\footnote{The notation $\match(G)$ is by analogy with $\alpha(G)$, which denotes the cardinality of a maximum independent set of {\em vertices}. It can be found, e.g., in the textbook by West~\cite{WestIGT01}.}
in $G$~\cite{AssadiKL17,ChakrabartiK15,FeigenbaumKMSZ08,GoelKK12,Kapralov13,McGregor05,FarhadiHMRR20,KapralovMNT20}. The exact version of the problem (which is what we study here) is intractable.
For the special case of detecting whether a bipartite graph has a perfect matching, there is a frugal $[nt,s]$-scheme \cite{ChakrabartiCMT14}, which achieves optimal tradeoff. See \Cref{table:results} for previous results for the general problem.

In this work, we give (i)~the first optimal frugal $[nt,s]$-scheme for the general \MaxMatching problem, settling its complexity in the frugal regime, and (ii)~an $[\alpha'+h,v]$-scheme whenever $hv = n^2$, which yields a laconic scheme provided $\alpha'(G) = o(n)$. Obtaining a fully general laconic scheme remains an interesting open problem and we suspect that it will require a breakthrough in exploiting the problem's combinatorial structure.

\mypar{Further Graph Problems and a Common Framework}
We obtain new schemes for the \MIS problem, which asks for an inclusion-wise maximal independent set of vertices; the \Acyc problem, which asks whether the input digraph is acyclic; and the \Topo problem, which asks for a vertex ordering of the input DAG that orients all edges ``forwards.'' In each case, we give an $[nt,s]$-scheme. Recent results show that \MIS \cite{AssadiCK19, CormodeDK19} and \Topo \cite{ChakrabartiGMV20} are intractable in basic streaming, so our schemes are optimal in the frugal regime. Importantly, these schemes, the frugal \MaxMatching scheme, and two of the \triCount schemes all fit a common framework: they boil the problem down to counting the number of edges in one or more induced subgraphs of the input graph. Our scheme for this \InducedEdgeCnt problem could be a useful technical result for future work.

\mypar{Shortest Paths}
The single-source shortest path (SSSP) problem is perhaps the most basic problem in classic graph algorithms. In the streaming setting, even the special case of undirected $v_s$--$v_t$ connectivity in constant-diameter graphs is intractable~\cite{FeigenbaumKMSZ08}. As \Cref{table:results} shows,
our $[Dnt,s]$-scheme for unweighted \SSSP (where $D$ is the maximum distance from the source vertex $v_s$ to any vertex reachable from it) generalizes the result of Cormode et al.~\cite{CormodeMT13} from \ShortestPath to \SSSP. Again, as a corollary, we obtain a $[Knt,s]$-scheme for \ShortestPath, where $K$ is the length of a shortest $v_s$--$v_t$ path. This result generalizes the $[Kn,n]$-scheme of Chakrabarti and Ghosh \cite{ChakrabartiG19} and improves upon the $[Dnt,s]$-scheme of Cormode et al.~\cite{CormodeMT13}, since $K$ can be arbitrarily smaller than $D$. The schemes for the weighted version are interesting for small $D$ and $W$, where $W$ is the maximum weight of any edge.

% \iflipics
% We give an account of other related works in \Cref{sec:otherrelated}. \else
\subsection{Other Related Works}
Abdullah et al.~\cite{AbdullahDRV16} studied the \triCount and \MaxMatching problems in the stronger SIP model that allows rounds of interaction between Prover and Verifier. For \triCount, they gave a $(\log^2 n, \log^2 n)$-SIP using $\log n$ rounds of interaction. They also designed an $(n^{1/\gamma}\log n,\log n)$-SIP with $\gamma=O(1)$ rounds.     
For the weighted \MaxMatching problem, they gave a $(\rho+n^{1/\gamma'}\log n,\log n)$-SIP using $\gamma$ rounds of interaction, where $\gamma'$ is a linear function of $\gamma$, and $\rho$ is the weight of an optimal matching.

Early works on the concept of annotated streams include Tucker et al.~\cite{Tucker05} and Yi et al.~\cite{YiLCHKS08}, who studied {\em stream punctuations} and {\em stream outsourcing} respectively. Motivated by these works, Chakrabarti et al.~\cite{ChakrabartiCMT14} then formalised the model theoretically as the {\em annotated streaming model} and gave schemes for statistical
 streaming problems including frequency moments and heavy hitters, along with some basic results for graph problems. This non-interactive model was subsequently studied by multiple works including Klauck and Prakash~\cite{KlauckP13}, Cormode et al. \cite{CormodeMT13}, and Chakrabarti et al.~\cite{ChakrabartiCGT14}. Subsequent works considered generalized versions of the model, allowing rounds of interaction. These include {\em Arthur-Merlin streaming protocols} of Gur and Raz \cite{GurR13} and the {\em streaming interactive proofs} (SIP) of Cormode et al.~\cite{CormodeTY11}.  Chakrabarti et al.~\cite{ChakrabartiCMTV15} and Abdullah et al.~\cite{AbdullahDRV16} further studied this generalized setting. We refer to the expository article of Thaler
\cite{Thaler-encyclopedia} for a more detailed survey of this area.
% \fi

%!TEX root = main.tex

\subsection{Our Techniques}

\mypar{Sum-Check and Polynomial Encodings}
As with much prior work in this area (and probabilistic proof systems more generally),
our schemes are variants of the famous \emph{sum-check protocol} of Lund et al.~\cite{LundFKN92}. 
Specialized to our (non-interactive) schemes, this protocol allows Verifier to make Prover honestly compute $\sum_{x \in \cX} g(x)$ for some low-degree polynomial $g(X)$ derived from the input data and some designated set $\cX$. Verifier has no space to compute $g$ explicitly, nor all values $\ang{g(x): x \in \cX}$, but he can afford to evaluate $g(r)$ at a {\em random} point $r$. The Prover steps in by explicitly providing $\hg(X)$, a polynomial
claimed to equal $g(X)$: this is cheap since $g$ has low degree. Verifier can be convinced of this claim by checking that $\hg(r) = g(r)$.

Hence, the main challenge in applying the sum-check technique is to find a way to encode the data stream problem's output as the sum of the evaluations of a low-degree polynomial $g$ so that Verifier can, in small space, evaluate $g$ at a random point $r$.

\mypar{Sketches: Linearity and Beyond}
A streaming Verifier evaluates $g(r)$ by suitably summarizing the input in a {\em sketch}. Viewing the input as updates to a data vector $\bff = (f_1,\ldots,f_N)$, such a sketch $\bv$ is {\em linear} if $\bv = S\bff$ for some matrix $S \in \FF^{v\times N}$, for some field $\FF$.\footnote{%
This field is finite in the streaming verification literature, whereas traditional data streaming uses $\RR$.} Typically, $S$ is implicit in the sketching algorithm and enables stream processing in $\tilde{O}(v)$ space by translating a stream update $f_i \gets f_i + \Delta$ into the sketch update $\bv \gets \bv + \Delta S \be_i$, where $\be_i$ is the $i$th standard basis vector.
In essentially all prior works on stream verification, the polynomial $g$ was such that
$g(r)$ could be derived from such a linear sketch $\mathbf{v}$.

There is one exception: Thaler~\cite{Thaler16} introduced an optimal $[n,n]$-scheme
for \triCount in which Verifier computes a {\em nonlinear} sketch.\footnote{%
Simliar nonlinearity was used recently in the more powerful model of {\em $2$-pass schemes}~\cite{ChakrabartiG19}.} Roughly speaking, the verifier in Thaler's protocol maintains two $n$-dimensional linear sketches $\bv^{(1)}$ and $\bv^{(2)}$, plus a value $C$ that is \emph{not} a linear function of the input stream but instead depends quadratically on $\bv^{(1)}$ and $\bv^{(2)}$. Moreover, the $j$th increment to $C$ uses information that is available while processing the $j$th stream update, but not after the stream is gone. This is in contrast to linear sketches themselves, where the $j$th sketch update depends only on the $j$th stream update and no others.

\mypar{The Shaping Technique}
Another ubiquitous idea in streaming verification is the {\em shaping technique}, which transforms a data vector into a multidimensional array. This trick realizes $g(X)$ as a summation of an even simpler multivariate polynomial: the latter can be evaluated directly by Verifier at several points, which forms the basis for his sketching. When applied to graph problems, this technique was historically used to reshape the $\binom{n}{2}$-dimensional vector of edge multiplicities. Recently, Chakrabarti and Ghosh~\cite{ChakrabartiG19} introduced the idea of reshaping the graph's {\em vertex space}, rather than just the edge space, thereby transforming the adjacency matrix into a $4$-dimensional array. This trick was crucial to obtaining the first frugal schemes for \triCount and \MaxMatching.

\mypar{Our Contributions}
The new schemes in this work make the following contributions.
\begin{itemize}[topsep=4pt,itemsep=1pt]
  \item We design new polynomial encodings for the graph-theoretic problems we study.
  \item We prominently employ nonlinear sketches, in the above sense, for almost all of our scheme designs.
  \item We use the shaping technique on the vertex space, often combining it with nonlinear sketching, thus expanding the applications of this very recent innovation.
\end{itemize}

Our solutions for \triCount are particularly good illustrations of all of these ideas. Where Thaler's nonlinear-sketch protocol treated each vertex as monolithic, our view of each vertex as an object in $[t]\times [s]$ (for some pair $t, s$ with $t \cdot s=n$) let us do two things. In the laconic regime, we get to use Verifier's increased space allowance in a way that Thaler's protocol cannot, thereby extending his $[n,n]$-scheme to get an optimal tradeoff. In the frugal regime, it is significantly harder to exploit vertex-space shaping because Verifier cannot even afford to devote one entry per vertex in his linear sketches. We overcome this by finding a way for many vertices to ``share'' each entry of each linear sketch (see the string of equations culminating in \cref{eq:tricnt-frugal-inner-poly}), thus extending Thaler's protocol to smoothly trade off communication for space.

We also extend the applicability of nonlinear sketching by identifying many further graph problems for which it yields significant %efficiency 
improvements. Specifically, in \Cref{sec:edge-count}, we describe two technical problems %on graphs 
called $\InducedEdgeCnt$and $\CrossEdgeCount$, which are later used as primitives to optimally solve several important graph problems, including $\MaxMatching$. We show how to apply sum-check with a nonlinear Verifier (see, e.g, \cref{eq:nonlinear}) to optimally solve $\InducedEdgeCnt$ and $\CrossEdgeCount$. 

Finally, our schemes for \SSSP feature a different kind of innovation on top of vertex-space shaping and new, clever encodings of shortest-path problems in a manner amenable to sum-check. They overcome the frugal Verifier's space limitation by exploiting the Prover's room to generate a proof stream that mimics an iterative algorithm. For the Verifier to play along with such an iterative algorithm while lacking even one bit of space per vertex, a careful layering of fingerprint-based checks is needed on top of the sum-checks. We hope that our work here opens up possibilities for other instances of porting iterative algorithms to a streaming setting with the help of a prover.

%!TEX root = main.tex

\subsection{Preliminaries} \label{sec:prelims}

In this work, the input graph, multigraph, or digraph is denoted $G$ and defined on a fixed vertex set $V = [n]$. In the {\em vanilla} streaming model, $G$ is given as a stream of $(u,v)$ tokens, where $u,v \in V$: the token is interpreted as an insertion of edge $\{u,v\}$ or directed edge $(u,v)$. If $G$ is edge-weighted, the tokens are of the form $(u,v,w)$, where $w \in \ZZ^+$ is a weight. In the {\em turnstile} streaming model, tokens are of the form $(u,v,\Delta)$, denoting that the quantity $\Delta \in \ZZ$ (which can be negative) is added either to the multiplicity or the weight of the edge $\{u,v\}$.

An important primitive in all our schemes is sketching a data vector by evaluating its low-degree extension at a random point. Let us explain what this means. Suppose our data vector, which has dimensionality~$N$, is shaped into a $k$-dimensional array $f$ with dimensions $(s_1, \ldots, s_k)$, where $s_1 s_2 \cdots s_k \ge N$. Equivalently, we have a function $f$ on domain $[s_1] \times \cdots \times [s_k]$. We work over a suitable finite field\footnote{%
The characteristic of $\FF$ must be large enough to avoid ``wrap around'' problems under arithmetic in $\FF$.}
$\FF$. By Lagrange interpolation, there is a unique polynomial $\tf(X_1, \ldots, X_k) \in \FF[X_1, \ldots, X_k]$ such that
\begin{itemize}[topsep=4pt,itemsep=0pt]
  \item for all $(x_1, \ldots, x_k) \in [s_1] \times \cdots [s_k]$,
  we have $\tf(x_1, \ldots, x_k) = f(x_1, \ldots, x_k)$, and
  \item for all $i \in [k]$, we have $\deg_{X_i} \tf \le s_i - 1$.
\end{itemize}
We call $\tf$ the low-degree {\em $\FF$-extension} of $f$. Since $f \mapsto \tf$ is a linear map, we can write $\tf$ as a linear combination of ``unit impulse'' functions (also known as Lagrange basis polynomials):
\begin{align} \label{eq:unit-impulse}
  \delta_{u_1, \ldots, u_k}(X_1, \ldots, X_k)
    := \prod_{i=1}^k \prod_{x_i \in [s_i] \setminus \{u_i\}} (u_i - x_i)^{-1} (X_i - x_i) \,.
\end{align}
To be precise, $\tf(X_1, \ldots, X_k)
= \sum_{(u_1, \ldots, u_k) \in [s_1] \times \cdots \times [s_k]}
f(u_1, \ldots, u_k)\, \delta_{u_1, \ldots, u_k}(X_1, \ldots, X_k)$.
In particular, if $f$ is built up from a stream of pointwise updates, where the $j$th update adds $\Delta_j$ to entry $(u_1, \ldots, u_k)_j$ of the array, then
% \iflipics
%   $\tf(X_1, \ldots, X_k)
%     = \sum_j \Delta_j\,
%     \delta_{(u_1, \ldots, u_k)_j}(X_1, \ldots, X_k)$.
% \else
\begin{equation} \label{eq:stream-update}
  \tf(X_1, \ldots, X_k)
    = \sum_j \Delta_j\,
    \delta_{(u_1, \ldots, u_k)_j}(X_1, \ldots, X_k) \,.
\end{equation}
%\fi
% This leads to the following fact that we use in all our protocols. 

\begin{fact} \label{fact:dynamicupdate}
  Given $\bp = (p_1,\ldots,p_k) \in \FF^k$ and a stream of pointwise updates to an initially-zero array with dimensions $(s_1,\ldots,s_k)$, we can maintain the evaluation $\tf(\bp)$ using $O(\log|\FF|)$ space, performing $O(k)$ field arithmetic operations per update. In applications, we usually take $\bp \in_R \FF^k$.\footnote{%
The notation $r \in_R A$ means that $r$ is drawn uniformly at random from the finite set $A$.}
For details and implementation considerations, see Cormode et al.~\cite{CormodeTY11}.
\qed
\end{fact}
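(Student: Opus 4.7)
The plan is to reduce the problem to maintaining a single field-element accumulator by exploiting the linearity of the low-degree extension map $f \mapsto \tf$. Evaluating \cref{eq:stream-update} at the fixed point $\bp$ yields
$\tf(\bp) = \sum_j \Delta_j\,\delta_{(u_1,\ldots,u_k)_j}(\bp)$,
which immediately suggests the algorithm: store a running value $V \in \FF$, initialise it to $0$, and on the $j$th update $((u_1,\ldots,u_k)_j, \Delta_j)$ perform $V \gets V + \Delta_j \cdot \delta_{(u_1,\ldots,u_k)_j}(\bp)$. At the end of the stream, $V = \tf(\bp)$ by construction. Since $V$ is a single field element, the persistent state fits in $O(\log|\FF|)$ bits, matching the claimed space bound.

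What remains is to show that each Lagrange-basis evaluation $\delta_{(u_1,\ldots,u_k)_j}(\bp)$ can be carried out in $O(k)$ field arithmetic operations. From \cref{eq:unit-impulse} this value factors as $\prod_{i=1}^k L_i(u_i)$, where $L_i(u_i) = \prod_{x_i \in [s_i]\setminus\{u_i\}}(p_i-x_i)(u_i-x_i)^{-1}$, decomposing the work into $k$ independent univariate Lagrange evaluations. To get each univariate evaluation in $O(1)$ operations, I would precompute, as soon as $\bp$ and the shape $(s_1,\ldots,s_k)$ are fixed, the totals $P_i := \prod_{x_i\in[s_i]}(p_i-x_i)$ together with a table of inverse factorials. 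Then using the identity $\prod_{x_i \ne u_i}(u_i-x_i) = (-1)^{s_i-u_i}(u_i-1)!\,(s_i-u_i)!$, we obtain
$L_i(u_i) = P_i \cdot (p_i-u_i)^{-1} \cdot \bigl((-1)^{s_i-u_i}(u_i-1)!\,(s_i-u_i)!\bigr)^{-1}$
in constant time per coordinate. Combined with one multiplication by $\Delta_j$ and the additive update to $V$, this amounts to $O(k)$ field operations per incoming token.

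The only subtle point that I would flag for the write-up is the accounting of the precomputed quantities $\{P_i\}$ and the inverse-factorial table: these are determined entirely by $\bp$ and the array shape and are independent of the stream, so they belong to the sketch's static specification rather than the per-update streaming state that the $O(\log|\FF|)$ bound governs. Modulo this standard convention, the above argument establishes the fact; for the fully polished implementation and the routine field-arithmetic bookkeeping, I would defer to the discussion in Cormode et al.~\cite{CormodeTY11}.
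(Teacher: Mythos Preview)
The paper does not prove this statement at all: it is presented as a standalone \textsc{Fact} with a \texttt{\textbackslash qed} symbol and a citation to Cormode et al.~\cite{CormodeTY11} for details, so there is no ``paper's own proof'' to compare against. Your argument is precisely the standard one that underlies the cited reference: instantiate \cref{eq:stream-update} at the fixed point $\bp$, maintain the resulting scalar as an accumulator, and exploit the product structure of \cref{eq:unit-impulse} to evaluate each Lagrange basis term coordinate-by-coordinate.

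Your flagged caveat is the one genuine subtlety and deserves emphasis. Achieving $O(k)$ field operations per update \emph{simultaneously} with $O(\log|\FF|)$ working space does require the precomputed tables (the $P_i$ and inverse factorials) to be treated as part of the algorithm's static description rather than its streaming state; otherwise, computing each univariate factor $L_i(u_i)$ from scratch costs $\Theta(s_i)$ operations, not $O(1)$. The paper sidesteps this bookkeeping entirely by deferring to~\cite{CormodeTY11}, and in its actual applications the distinction is immaterial since the verifier already stores $\Omega(s)$-sized arrays anyway. Your write-up handles this honestly; just be aware that the $O(k)$-and-$O(\log|\FF|)$ pairing is stated somewhat informally in the literature and relies on exactly the convention you describe.
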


Another useful primitive is {\em fingerprinting}, used prominently in our SSSP scheme and subtly in subroutines within other schemes. Its goal is to check equality between two vectors $\ba = (a_1,\ldots,a_N)$ and $\bb = (b_1,\ldots,b_N)$ that are provided via turnstile streams in some possibly intermixed order. This is achieved by checking that $\fing_\ba(r) = \fing_\bb(r)$ for $r \in_R \FF$, where $\fing_\ba(X) = \sum_{j=1}^N a_j X^j$ is the {\em fingerprint polynomial} of $\ba$ and has degree at most $N$.
Both fingerprinting and the eventual uses of \Cref{fact:dynamicupdate} in sum-check protocols depend upon the following basic but powerful result.
\begin{fact}[Schwartz--Zippel Lemma] \label{fact:schwartz-zippel}
\hspace{-1.5mm}  For a nonzero polynomial $P(X_1, \ldots, X_n) \in \FF[X_1, \ldots, X_n]$ of total degree $d$, where $\FF$ is a finite field,
  $\Pr_{(r_1, \ldots, r_n) \in_R \FF^n}
  \left[ P(r_1, \ldots, r_n) = 0\right] \le d/|\FF|$.
\qed
\end{fact}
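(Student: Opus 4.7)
The plan is to prove this by induction on $n$, the number of variables. The base case $n=1$ follows from the fundamental fact that a nonzero univariate polynomial of degree at most $d$ over any field has at most $d$ roots (by iterated application of the factor theorem). Hence a uniformly random $r_1 \in \FF$ hits one of these roots with probability at most $d/|\FF|$.

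For the inductive step with $n \geq 2$, I would single out the variable $X_1$ and write
\begin{equation*}
P(X_1,\ldots,X_n) \;=\; \sum_{i=0}^{k} X_1^{i}\, Q_i(X_2,\ldots,X_n),
\end{equation*}
where $k$ is the largest integer such that $Q_k$ is a nonzero polynomial in $\FF[X_2,\ldots,X_n]$. Such a $k$ exists because $P$ itself is nonzero. Crucially, since each monomial contributing to $Q_k$ is multiplied by $X_1^k$ inside $P$, and $P$ has total degree at most $d$, the total degree of $Q_k$ is at most $d-k$.

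Now I would split the event $P(r_1,\ldots,r_n)=0$ into two cases using the law of total probability. Case (A): $Q_k(r_2,\ldots,r_n)=0$. By the inductive hypothesis applied to the nonzero polynomial $Q_k$ of total degree at most $d-k$ in $n-1$ variables, this occurs with probability at most $(d-k)/|\FF|$. Case (B): $Q_k(r_2,\ldots,r_n)\neq 0$ but still $P(r_1,r_2,\ldots,r_n)=0$. Conditioned on $(r_2,\ldots,r_n)$ with $Q_k(r_2,\ldots,r_n)\neq 0$, the polynomial $p(X_1) := P(X_1, r_2, \ldots, r_n) \in \FF[X_1]$ has its coefficient of $X_1^k$ equal to the nonzero value $Q_k(r_2,\ldots,r_n)$, and thus is a nonzero univariate polynomial of degree exactly $k$. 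By the base case, a uniformly random and independent $r_1$ satisfies $p(r_1)=0$ with probability at most $k/|\FF|$. Combining the two cases via a union bound yields
\begin{equation*}
\Pr[P(r_1,\ldots,r_n)=0] \;\le\; \frac{d-k}{|\FF|} + \frac{k}{|\FF|} \;=\; \frac{d}{|\FF|},
\end{equation*}
completing the induction.

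The main subtlety, rather than a genuine obstacle, is making sure the case analysis is clean: one must pick the \emph{largest} index $k$ with $Q_k \neq 0$ so that $p(X_1)$ truly has degree exactly $k$ in Case (B) (guaranteeing the base case applies with bound $k/|\FF|$ rather than something weaker), and one must remember that $r_1$ is chosen independently of $(r_2,\ldots,r_n)$ so the conditional probability in Case (B) is genuinely bounded by $k/|\FF|$. Beyond that, the argument is a clean induction with no calculation-heavy steps.
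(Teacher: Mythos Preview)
Your proof is correct and is the standard textbook argument for the Schwartz--Zippel Lemma. Note, however, that the paper does not actually prove this statement: it is recorded as a \emph{Fact} with a \qed immediately following, i.e., it is quoted without proof as a well-known result. So there is no ``paper's own proof'' to compare against; your inductive argument is exactly the classical one (as in Schwartz's and Zippel's original papers and most expositions since), and nothing more is needed.
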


At various points, we shall use a couple of schemes from Chakrabarti et al.~\cite{ChakrabartiCGT14,ChakrabartiCMT14}.
\begin{fact}[\textsc{subset} and \textsc{intersection} schemes; Prop.~4.1 of \cite{ChakrabartiCMT14} and Thm.~5.3 of \cite{ChakrabartiCGT14}]\label{fact:set-schemes} Given a stream of elements of sets $S,T \subseteq [N]$ interleaved arbitrarily, for any $h,v$ with $hv\geq N$, there are $[h,v]$-schemes to compute $|S\cap T|$ and to determine whether $S \subseteq T$.
\qed
\end{fact}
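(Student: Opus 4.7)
The plan is to apply the sum-check paradigm with vertex-space shaping, using the low-degree extension machinery of \Cref{fact:dynamicupdate} to let Verifier compute nonzero evaluations of bilinear polynomials under streaming updates. Choose integers $t,s$ with $ts \ge N$ and set the parameters so that $t = \Theta(h/\log|\FF|)$ and $s = \Theta(v/\log|\FF|)$, where $|\FF| = \poly(N)$. Reshape $[N]$ as $[t]\times [s]$ via any efficiently computable bijection, and define $g_S, g_T \colon [t]\times[s] \to \{0,1\}$ to be the indicators of $S$ and $T$ (built up from the interleaved stream in the obvious way). Let $\tg_S, \tg_T \in \FF[X,Y]$ be their low-degree $\FF$-extensions, of degree $\le t-1$ in $X$ and $\le s-1$ in $Y$.

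For the intersection problem, observe that
\[ |S\cap T| \;=\; \sum_{x\in [t]}\sum_{y\in [s]} g_S(x,y)\,g_T(x,y) \;=\; \sum_{x\in [t]} H(x) \,, \qquad H(X) := \sum_{y\in [s]} \tg_S(X,y)\,\tg_T(X,y) \,. \]
Here $H$ is a univariate polynomial in $\FF[X]$ of degree at most $2(t-1)$. The protocol has Prover send (after the stream) both a claimed value $C$ for $|S\cap T|$ and an explicit polynomial $\hat H(X)$, claimed to be $H$. Verifier draws $r \in_R \FF$ at the start of the stream, and maintains the $s$ field elements $\tg_S(r,y)$ and $\tg_T(r,y)$ for each $y \in [s]$ throughout, via $s$ parallel invocations of \Cref{fact:dynamicupdate}; this costs $O(s \log|\FF|) = \tO(v)$ space. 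At the end, Verifier checks (a) $\sum_{x\in[t]} \hat H(x) = C$ and (b) $\hat H(r) = \sum_{y\in[s]} \tg_S(r,y)\,\tg_T(r,y)$; both sides of~(b) are now computable directly. Verifier outputs $C$ if both pass, else $\bot$. The hcost is $O(t \log|\FF|) = \tO(h)$ for transmitting $\hat H$ and $C$. For the subset problem, replace the summand by $g_S(x,y)(1-g_T(x,y))$ (whose extension is $\tg_S(X,Y)(1-\tg_T(X,Y))$) and accept iff the above protocol yields $C=0$; since $|S\setminus T| \le N < |\FF|$, no wrap-around occurs.

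Perfect completeness is clear. For soundness, if $C$ differs from the true sum then Prover must either send $\hat H = H$ (in which case check~(a) fails) or $\hat H \ne H$ (in which case $\hat H - H$ is a nonzero polynomial of degree at most $2(t-1)$, so by Schwartz--Zippel (\Cref{fact:schwartz-zippel}) check~(b) passes with probability at most $2(t-1)/|\FF|$, which is $\le 1/3$ for $|\FF|$ a suitable polynomial in $N$).

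The main obstacle is verifying that the $s$ streaming maintenance tasks together really fit in $\tO(v)$ space and handle each update in small time. For each token $(\text{label}, i)$ mapped to $(u,v)\in[t]\times[s]$, the update to $\tg_S(r,y_0)$ for a \emph{single} $y_0$ is $+\delta_{u,v}(r,y_0) = \prod_{x\in[t]\setminus\{u\}}(u-x)^{-1}(r-x) \cdot \prod_{y\in[s]\setminus\{v\}}(v-y)^{-1}(y_0-y)$, which is $\tO(1)$ field operations after $O(t+s)$ one-time precomputation. Because each token only affects the single cell $(u,v)$, only the $s$ evaluations $\tg_S(r,1),\ldots,\tg_S(r,s)$ (equivalently for $T$) need to be touched; no extra bookkeeping is required. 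Finally, setting $|\FF| = \Theta(\poly(N))$ contributes only logarithmic factors to the stated $\tO$ bounds, which justifies the claim for any $h,v$ with $hv \ge N$.
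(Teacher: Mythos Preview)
The paper does not prove this statement: it is recorded as a \emph{Fact} imported from \cite{ChakrabartiCMT14,ChakrabartiCGT14} and marked with \qed. Your proof is a correct reconstruction of the standard argument underlying those references---namely, express $|S\cap T|$ (or $|S\setminus T|$) as an inner product $\sum_{x,y} g_S(x,y)g_T(x,y)$ over the reshaped domain $[t]\times[s]$, apply one round of sum-check over $x\in[t]$, and have Verifier maintain the $2s$ values $\tg_S(r,y),\tg_T(r,y)$ for $y\in[s]$. One small clarification: in your last paragraph you say ``only the $s$ evaluations \ldots\ need to be touched,'' but in fact a token at $(u,v)$ updates exactly one of them (the one with $y_0=v$), since $\delta_{u,v}(r,y_0)=\delta_u(r)\cdot\delta_v(y_0)$ vanishes for $y_0\in[s]\setminus\{v\}$; this only strengthens your per-update time bound.
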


%!TEX root = main.tex

\section{The Triangle Counting Problem} \label{sec:tri-cnt}

A triangle in a (multi)graph is a set of three edges of the form $\{\{u,v\},\{v,w\},\{u,w\}\}$. The \triCount problem asks for the number of such triangles in the input graph. We solve this problem
for multigraphs given by a turnstile stream, establishing the following two theorems. The first gives
improved (but possibly still not tight)
tradeoffs between hcost~$h$ and vcost $v$ in the parameter regime where $h \geq n$ and $v \leq n$.
The second gives \emph{optimal} tradeoffs (up to logarithmic factors) in the regime where $h \leq n$ and $v \geq n$, based on the known lower bound that $hv$ must be $\Omega(n^2)$.
Both results were previously only known
when $h=\Theta(n)$.

We remind the reader that parameters $t,s \in \ZZ^+$ are tunable, subject to $ts = n$.

\begin{theorem}[Improved frugal schemes] \label{firstthm}
  There is an $[nt^2, s]$-scheme for \triCount.
\end{theorem}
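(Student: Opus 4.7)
The plan is to combine the \emph{vertex-space shaping} idea of Chakrabarti and Ghosh with the \emph{nonlinear sketching} trick underlying Thaler's $[n, n]$-scheme, so as to obtain a smooth tradeoff throughout the frugal regime. Reshape the vertex set $V = [n]$ as $[t] \times [s]$, viewing each vertex as a pair $(a, b)$ with $a \in [t]$ and $b \in [s]$. The adjacency multiplicity array $A$ becomes a four-dimensional array indexed by $([t] \times [s])^2$; let $\tilde{A}(X_1, Y_1, X_2, Y_2)$ be its low-degree $\FF$-extension over a field with $|\FF| = \poly(n)$, of degree $t-1$ in each $X_i$ and $s-1$ in each $Y_i$. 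The central polynomial encoding is
\begin{align*}
  G(X_a, X_b, X_{v_a}, X_{w_a})
  := \sum_{v_b, w_b \in [s]} \tilde{A}(X_a, X_b, X_{v_a}, v_b)\, \tilde{A}(X_{v_a}, v_b, X_{w_a}, w_b)\, \tilde{A}(X_a, X_b, X_{w_a}, w_b).
\end{align*}
For integer arguments the three $\tilde{A}$'s collapse to the three edges of an ordered triangle, so $\sum_{(a,b,v_a,w_a) \in [t] \times [s] \times [t] \times [t]} G(a, b, v_a, w_a) = 6T$. By degree inspection $G$ has degree $2(t-1)$ in each of $X_a, X_{v_a}, X_{w_a}$ and $2(s-1)$ in $X_b$, so $(2t-1)^3 (2s-1) = O(t^3 s) = O(n t^2)$ field elements suffice to specify it.

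Prover's help is a polynomial $\hat{G}$ of these dimensions, claimed to equal $G$, sent as its values on the grid of $O(nt^2)$ integer points $[2t-1] \times [2s-1] \times [2t-1]^2$. Verifier outputs $\frac{1}{6}\sum_{(a,b,v_a,w_a)} \hat{G}(a, b, v_a, w_a)$ as the triangle count in one streaming pass over the proof stream in $O(\polylog n)$ working memory. To verify the claim $\hat{G} = G$, Verifier picks $(r_a, r_b, r_{v_a}, r_{w_a}) \in_R \FF^4$ before the data stream and checks $\hat{G}(r_a, r_b, r_{v_a}, r_{w_a}) = G(r_a, r_b, r_{v_a}, r_{w_a})$ after; evaluating $\hat{G}$ at the random point via Lagrange interpolation is another pass over the proof, again in $O(\polylog n)$ space. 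If $\hat{G} \neq G$, then by \Cref{fact:schwartz-zippel} they agree at the random point with probability at most $O(n/|\FF|) \le 1/3$ for $|\FF| = \poly(n)$, giving soundness; perfect completeness is immediate by taking $\hat G := G$.

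The crux is evaluating $G(r_a, r_b, r_{v_a}, r_{w_a})$ during the data stream itself in only $O(s)$ space. Substituting the randomness yields
\[
  G(r_a, r_b, r_{v_a}, r_{w_a}) = \sum_{v_b, w_b \in [s]} \alpha(v_b)\, \beta(v_b, w_b)\, \gamma(w_b),
\]
where $\alpha(Y) := \tilde{A}(r_a, r_b, r_{v_a}, Y)$ and $\gamma(Y) := \tilde{A}(r_a, r_b, r_{w_a}, Y)$ are univariate of degree $s-1$, and $\beta(Y, Z) := \tilde{A}(r_{v_a}, Y, r_{w_a}, Z)$ is bivariate. Verifier maintains $\alpha$ and $\gamma$ as explicit polynomials---an $s$-dimensional linear sketch each, updated by a Lagrange-weighted increment per stream edge---but cannot afford the $s \times s$ array $\beta$. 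The plan is to generalize Thaler's trick: alongside the two linear sketches, maintain a scalar $C$ that is updated on each stream edge using the \emph{current} values of $\alpha, \gamma$ together with the edge's endpoints, in such a way that $C$ ends exactly at $G(r_a, r_b, r_{v_a}, r_{w_a})$. A stream update $\Delta$ to $A_{(p_1, q_1),(p_2, q_2)}$ perturbs $\beta$ only at the single cell $(q_1, q_2)$, with weight $\Delta \cdot \delta_{p_1}(r_{v_a}) \delta_{p_2}(r_{w_a})$, whose cross-contribution to $G$ pairs naturally with $\alpha(q_1) \gamma(q_2)$. The main technical hurdle is specifying the update rule for $C$ so that the cubic-in-stream dependence of $G$ on the three $\tilde{A}$ factors is tracked correctly without ever materialising $\beta$, directly generalizing the mechanism behind Thaler's $[n, n]$-scheme. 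The ``many vertices share each entry'' phenomenon mentioned in the introduction is what makes this feasible: each of the $s$ coefficients of $\alpha$ is a $\delta_\cdot(r_{v_a})$-weighted aggregate over the $t$ vertices sharing its $[s]$-coordinate, which is precisely what lets the protocol interpolate smoothly between Thaler's $[n, n]$-scheme at $(t,s) = (1,n)$ and the denser-help regime at larger $t$.
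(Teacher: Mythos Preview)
Your encoding is a \emph{static} trilinear function of the final adjacency array: $G = \sum \tilde{A}\cdot\tilde{A}\cdot\tilde{A}$. The step you flag as ``the main technical hurdle''---maintaining $C=\sum_{v_b,w_b}\alpha(v_b)\,\beta(v_b,w_b)\,\gamma(w_b)$ in $O(s)$ space---does not go through. Every stream edge perturbs all three of $\alpha,\beta,\gamma$ simultaneously, and the increment to $C$ contains a term of the form $\Delta\alpha(q_2)\cdot\sum_{w_b}\beta(q_2,w_b)\,\gamma(w_b)$, which needs an entire row of the $s\times s$ array $\beta$; symmetrically, the $\Delta\gamma$ term needs a column of $\beta$. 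You cannot sidestep this by also maintaining the $s$-vectors $\sum_{w_b}\beta(\cdot,w_b)\gamma(w_b)$ or $\sum_{v_b}\alpha(v_b)\beta(v_b,\cdot)$, since updating \emph{those} again requires rows or columns of $\beta$. The single-cell update to $\beta$ that you highlight is the easy part; it is the simultaneous updates to $\alpha$ and $\gamma$ that force $\Omega(s^2)$ storage. Thaler's trick does not ``directly generalize'' here because his accumulator tracks a quantity that is only \emph{bilinear} in the current sketch; yours is trilinear.

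The paper's route is to encode not the static triple product but the time-indexed identity $T=\sum_{j}\Delta_j\sum_{v_3}A_{j-1}(v_{1j},v_3)\,A_{j-1}(v_{2j},v_3)$, which has only two adjacency factors, both at time $j-1$. It then reshapes just $v_{1j},v_{2j}$ into $[t]\times[s]$ (leaving $v_3\in[n]$ monolithic) and uses the Lagrange impulses $\delta_{x_{1j}},\delta_{x_{2j}}$ to free the $[t]$-coordinates into polynomial variables $W_1,W_2$. The resulting polynomial
\[
q(W_1,W_2,V_3)=\sum_{j}\Delta_j\,\tilde{b}_{j-1}(W_1,y_{1j},V_3)\,\tilde{b}_{j-1}(W_2,y_{2j},V_3)\,\delta_{x_{1j}}(W_1)\,\delta_{x_{2j}}(W_2)
\]
has $O(nt^2)$ monomials, and its value at a random $(r_1,r_2,r_3)$ is literally the accumulator: each increment reads one entry from each of two size-$s$ linear sketches $\tilde{b}_{j-1}(r_1,\cdot,r_3)$ and $\tilde{b}_{j-1}(r_2,\cdot,r_3)$. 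The bilinearity in $A_{j-1}$ is exactly what makes the $O(s)$ vcost achievable, and it is what your static cubic encoding gives up.
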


\begin{theorem}[Optimal tradeoff for laconic schemes] \label{secondthm}
  There is a $[t, ns]$-scheme for \triCount. This is optimal up to logarithmic factors.
\end{theorem}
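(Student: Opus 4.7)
I would prove \Cref{secondthm} by extending Thaler's $[n,n]$-scheme for $\triCount$~\cite{Thaler16} using the vertex-space shaping technique of Chakrabarti and Ghosh~\cite{ChakrabartiG19}: identify $[n]$ with $[t]\times[s]$ and work over a finite field $\FF$ of size $\poly(n)$. Let $\tA(X_1,X_2,Y_1,Y_2)$ be the low-degree $\FF$-extension of $G$'s adjacency tensor on $[t]\times[s]\times[t]\times[s]$. Then
\[
6T \;=\; \sum_{\substack{u_1,v_1,w_1\in[t]\\u_2,v_2,w_2\in[s]}} \tA(u_1,u_2,v_1,v_2)\,\tA(v_1,v_2,w_1,w_2)\,\tA(u_1,u_2,w_1,w_2).
\]
Summing out every variable except $X_1$ produces a univariate $g(X_1)$ of degree $\le 2(t-1)$ with $6T=\sum_{u_1\in[t]}g(u_1)$. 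One non-interactive sum-check step suffices: Prover transmits a claimed $\hg(X_1)$ of degree $\le 2(t-1)$ at a cost of $\tO(t)$ bits; Verifier picks $r_1\in_R \FF$ before the stream, accepts iff $\hg(r_1)=g(r_1)$, and outputs $\tfrac16\sum_{u_1\in[t]}\hg(u_1)$. Perfect completeness is immediate and the soundness error is $\le 2t/|\FF|=1/\poly(n)$ by \Cref{fact:schwartz-zippel}.

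The crux is evaluating $g(r_1)$ within $\tO(ns)$ space. For each $u_2\in[s]$, define the slice $\bp^{u_2}\in\FF^{[t]\times[s]}$ by $\bp^{u_2}_{(v_1,v_2)}:=\tA(r_1,u_2,v_1,v_2)$. Expanding $\tA$ along $X_1$ via Lagrange interpolation gives the identity
\[
g(r_1) \;=\; \sum_{u_2\in[s]}(\bp^{u_2})^{\!\top} A\, \bp^{u_2},
\]
where $A$ is the $n\times n$ adjacency matrix. Each $\bp^{u_2}$ is a linear sketch: the update $\{(a_1,a_2),(b_1,b_2)\}$ of multiplicity $\Delta$ modifies only $\bp^{a_2}_{(b_1,b_2)}$ (by $\Delta\,L_{a_1}(r_1)$) and $\bp^{b_2}_{(a_1,a_2)}$ (by $\Delta\,L_{b_1}(r_1)$), where $L_a$ is the Lagrange basis polynomial for $[t]$. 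All $s$ slices together occupy $\tO(ns)$ space with $O(1)$ update work per token, after an $O(t)$ precomputation of the weights $L_a(r_1)$.

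To evaluate the $s$ quadratic forms without ever materializing $A$, I would lift Thaler's nonlinear-sketch trick to each slice: alongside $\bp^{u_2}$, maintain a scalar $C^{u_2}\in\FF$ that is \emph{not} a linear function of the stream but is incremented on every token using only the \emph{current} entries of $\bp^{u_2}$ together with the Lagrange weights attached to that token, so that at the end of the stream $C^{u_2}=(\bp^{u_2})^{\!\top}A\,\bp^{u_2}$. Verifier then accepts iff $\hg(r_1)=\sum_{u_2}C^{u_2}$. Expanding the difference $(\bp^{u_2}+\delta^{u_2})^{\!\top}(A+\Delta E)(\bp^{u_2}+\delta^{u_2})-(\bp^{u_2})^{\!\top}A\,\bp^{u_2}$ for the incremental update shows that every needed quantity is either an entry of $\bp^{u_2}$ at position $(a_1,a_2)$ or $(b_1,b_2)$, or a Lagrange weight for the incoming token---and never an entry of $A$. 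Updating all $s$ slices costs $O(s)$ field operations and $\tO(s)$ extra storage per token, well within the vcost budget.

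\textbf{Optimality and main obstacle.} The scheme has hcost $\tO(t)$ and vcost $\tO(ns)=\tO(n^2/t)$, so $hv=\tO(n^2)$, matching the $hv=\Omega(n^2)$ lower bound implied by \cite{ChakrabartiCMT14} together with the $\Omega(n^2)$ basic-streaming lower bound for $\triCount$. The principal obstacle is verifying correctness of the nonlinear $C^{u_2}$ updates across all slices: for $u_2\notin\{a_2,b_2\}$ the sketch $\bp^{u_2}$ is unchanged by the incoming edge yet $C^{u_2}$ must still shift by $2\Delta\,\bp^{u_2}_{(a_1,a_2)}\bp^{u_2}_{(b_1,b_2)}$ to track the change in $A$, while for $u_2\in\{a_2,b_2\}$ the $\bp^{u_2}$ update interacts multiplicatively with the $A$ update and one must check that every resulting cross-term is computable from the stored data. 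Both cases reduce to a direct algebraic expansion, after which the identity $g(r_1)=\sum_{u_2}(\bp^{u_2})^{\!\top}A\,\bp^{u_2}$ closes the correctness argument.
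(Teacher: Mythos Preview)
Your setup coincides with the paper's almost verbatim: the linear sketches $\bp^{u_2}$ are, after relabeling, exactly the array $\tilde a_{j-1}(v,r_3,y)$ that the paper stores, and the reduction to a univariate sum-check of degree $\le 2(t-1)$ is the same. The gap is in the nonlinear sketch, precisely at the step you flag as the principal obstacle.

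Your claim that the increment
\[
(\bp^{u_2}+\delta^{u_2})^{\!\top}(A+\Delta E)(\bp^{u_2}+\delta^{u_2})-(\bp^{u_2})^{\!\top}A\,\bp^{u_2}
\]
never requires an entry of $A$ is false. Expanding produces, besides the harmless terms you list, the cross-term $2(\delta^{u_2})^{\!\top}A\,\bp^{u_2}$. For $u_2=a_2$ one has $\delta^{a_2}=\Delta L_{a_1}(r_1)\,e_b$, so this term is
\[
2\Delta L_{a_1}(r_1)\,(A\bp^{a_2})_b \;=\; 2\Delta L_{a_1}(r_1)\sum_{w\in[n]} A(b,w)\,\bp^{a_2}_w\,,
\]
an inner product with the entire $b$-th row of the current adjacency matrix, which Verifier does not hold. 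Trying to also maintain $A\bp^{u_2}$ hits the same wall: after the token $\{a,b\}$, updating $(A\bp^{a_2})_c$ for $c\notin\{a,b\}$ needs $A(c,b)$. Summing over $u_2$ does not help, since only $u_2\in\{a_2,b_2\}$ contribute and the two quantities $(A\bp^{a_2})_b$ and $(A\bp^{b_2})_a$ are unrelated. In short, the static form $\bp^\top A\bp$ is cubic in the stream and is not maintainable within the budget; ``both cases reduce to a direct algebraic expansion'' is exactly where the argument breaks.

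The paper avoids this by \emph{not} targeting $6T=\operatorname{tr}(A^3)$. It instead starts from Thaler's telescoping identity
\[
T=\sum_{j}\Delta_j\sum_{v_3\in V}A_{j-1}(v_{1j},v_3)\,A_{j-1}(v_{2j},v_3)\,,
\]
shapes only $v_3$ as $(x_3,y_3)\in[t]\times[s]$, and sums out $x_3$. The accumulator's $j$th increment is then $\Delta_j\sum_{y_3\in[s]}\tilde a_{j-1}(v_{1j},r_3,y_3)\,\tilde a_{j-1}(v_{2j},r_3,y_3)$, which reads only the $v_{1j}$- and $v_{2j}$-rows of the stored $[n]\times[s]$ array. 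The middle factor $A$ in your formula has been replaced by the stream itself, so no entry of $A$ is ever needed. Switching from the final-state quadratic form to this time-indexed sum is the missing idea.
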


\mypar{Overview of Our Methods}
Consider an adjacency matrix $A$ of a graph on vertex set $V$. The addition of a new edge $\{u,v\}$ creates $\sum_{z \in V} A(u,z) A(v,z)$ new triangles.

Suppose that the input stream consists of $L$ edge updates, the $j$th being $(v_{1j}, v_{2j}, \Delta_j)$; recall that its effect is to add $\Delta_j$ to the multiplicity of edge $\{v_{1j},v_{2j}\}$. Suppose that the cumulative effect of the first $j$ updates is to produce a multigraph $G_j$ whose adjacency matrix is $A_j$ and which has $T_j$ triangles (counting multiplicity). As in Thaler's protocol~\cite{Thaler16}, we can then account for the number of triangles added by the $j$th update:
\begin{align*}
  T_j - T_{j-1} &= \sum_{v_3 \in V} \Delta_j\, A_{j-1}(v_{1j}, v_3)\, A_{j-1}(v_{2j}, v_3) \,.
\end{align*}
As a result, the number of triangles $T$ in the final graph $G = G_L$ is
\begin{align} \label{eq:tricnt-basic-sum}
  T &= \sum_{j \in [L]} \sum_{v_3 \in V} \Delta_j\, A_{j-1}(v_{1j}, v_3)\, A_{j-1}(v_{2j}, v_3) \,.
\end{align}

Our two new families of schemes for \triCount apply the shaping technique to the above equation in two distinct ways, resulting in markedly different complexity behaviors.

% % %

\iflipics
\mypar{The Laconic Schemes Regime (Proof of Theorem \ref{secondthm})} \label{sec:tricnt-laconic}
\else
\subsection{The Laconic Schemes Regime (Proof of Theorem \ref{secondthm})} \label{sec:tricnt-laconic}
\fi
Let $t, s \in \NN$ be parameters with $ts = n$. We first consider rewriting the variable $v_3$ in \cref{eq:tricnt-basic-sum} as a pair of integers $(x_3, y_3) \in [t] \times [s]$ using some canonical bijection. This shapes each matrix $A_{j-1}$ into a $3$-dimensional array $a_{j-1}$, i.e., a function with domain $[n] \times [t] \times [s]$. Let $\ta$ be the $\FF$-extension of $a$ for a sufficiently large finite field $\FF$ to be chosen later. Then \cref{eq:tricnt-basic-sum} becomes
\begin{align}
  T &= \sum_{j \in [L]} \sum_{x_3 \in [t]} \sum_{y_3 \in [s]}
    \Delta_j\, \ta_{j-1}(v_{1j}, x_3, y_3)\, \ta_{j-1}(v_{2j}, x_3, y_3) 
  = \sum_{x_3 \in [t]} p(x_3) \,, \quad\text{where} \label{eq:tricnt-laconic-fext-sum} \\
  p(X_3) &= \sum_{j \in [L]} \sum_{y_3 \in [s]} \Delta_j\, \ta_{j-1}(v_{1j}, X_3, y_3)\, \ta_{j-1}(v_{2j}, X_3, y_3) \,.
  \label{eq:tricnt-laconic-inner-poly}
\end{align}

By the properties of $\FF$-extensions observed above, we have the bound $\deg p \le 2(t-1)$. We now design our scheme as follows.
\begin{protocol}
  \item[Stream processing.] Verifier starts by picking $r_3 \in_R \FF$. As the stream arrives, he maintains a $2$-dimensional array of values $\ta_{j-1}(v, r_3, y)$, for all $(v,y) \in [n] \times [s]$, using \Cref{fact:dynamicupdate}. He also maintains an accumulator that starts at zero and, after the $j$th update, is incremented by
  $\Delta_j \sum_{y_3 \in [s]} \ta_{j-1}(v_{1j}, r_3, y_3)\, \ta_{j-1}(v_{2j}, r_3, y_3)$.
  By \cref{eq:tricnt-laconic-inner-poly}, the final value of this accumulator is $p(r_3)$. 
  
  \item[Help message.] Prover sends Verifier a polynomial $\hp(X_3)$ of degree $\le 2(t-1)$ that she claims equals $p(X_3)$.

  \item[Verification and output.] Using Prover's message, Verifier computes the check value $C := \hp(r_3)$ and the result value $\hT := \sum_{x_3 \in [t]} \hp(x_3)$. If he finds that $C \ne p(r_3)$, he outputs $\bot$. Otherwise, he believes that $\hp \equiv p$ and accordingly, based on \cref{eq:tricnt-laconic-fext-sum}, outputs $\hT$ as the answer.
\end{protocol}

The analysis of this scheme proceeds along standard lines long established in the literature.
\begin{protocol}
  \item[Error probability.] An honest Prover ($\hp \equiv p$) clearly ensures perfect completeness. The soundness error is the probability that Verifier's check passes despite $\hp \not\equiv p$, i.e., that the random point $r_3 \in \FF$ is a root of the nonzero degree-$(2t-2)$ polynomial $\hp - p$. By the Schwartz--Zippel Lemma (\Cref{fact:schwartz-zippel}), this probability is at most $(2t-2)/|\FF| < 1/n$, by choosing $|\FF|$ large enough.

  \item[Help and Verification costs.] Prover describes $\hat{p}$ by listing its $O(t)$ many coefficients, spending $O(t \log n)$ bits, since each is an element of $\FF$ and $|\FF| = n^{O(1)}$ suffices above. Verifier maintains an $n\times s$ array whose entries are in $\FF$, for a vcost of $O(ns \log n)$.
  Overall, we get a $[t, ns]$-scheme, as required.
\end{protocol}

% % %

\iflipics
\mypar{The Frugal Schemes Regime (Proof of Theorem \ref{firstthm})} \label{sec:tricnt-frugal}
\else
\subsection{The Frugal Schemes Regime (Proof of Theorem \ref{firstthm})} \label{sec:tricnt-frugal}
\fi
Designing frugal schemes on the basis of \cref{eq:tricnt-basic-sum} is more intricate. This time we rewrite the variables $v_{1j}$ and $v_{2j}$ as pairs $(x_{1j},y_{1j})$ and $(x_{2j},y_{2j})$, each in $[t] \times [s]$ for parameters $t,s$ with $ts = n$. The matrices $A_{j-1}$ are now shaped into $3$-dimensional arrays $b_{j-1}$ that can be seen as functions on the domain $[t] \times [s] \times [n]$. As before, let $\tb$ be an appropriate $\FF$-extension. Working from \cref{eq:tricnt-basic-sum} and cleverly using the ``unit impulse'' function $\delta$ seen in \cref{eq:unit-impulse},
\begin{align}
  T
  &= \sum_{v_3 \in V} \sum_{j \in [L]} \Delta_j\, \tb_{j-1}(x_{1j}, y_{1j}, v_3)\, \tb_{j-1}(x_{2j},
    y_{2j}, v_3) \notag \\
  &= \sum_{v_3 \in V} \sum_{w_1, w_2 \in [t]} \sum_{j \in [L]} \Delta_j\, \tb_{j-1}(w_1, y_{1j}, v_3)\, 
    \tb_{j-1}(w_2, y_{2j}, v_3)\, \delta_{x_{1j}}(w_1)\, \delta_{x_{2j}}(w_2) \notag \\
  &= \sum_{v_3 \in V} \sum_{w_1, w_2 \in [t]} q(w_1,w_2,v_3) \,, \quad \text{where} 
  \label{eq:tricnt-frugal-fext-sum} \\
  q(W_1,W_2,V_3)
  &= \sum_{j \in [L]} \Delta_j\, \tb_{j-1}(W_1, y_{1j}, V_3)\, \tb_{j-1}(W_2, y_{2j}, V_3)\,
    \delta_{x_{1j}}(W_1)\, \delta_{x_{2j}}(W_2) \,. \label{eq:tricnt-frugal-inner-poly}
\end{align}

In contrast to 
\iflipics the laconic case, \else \cref{sec:tricnt-laconic}, \fi 
we have a {\em multivariate} polynomial $q(W_1,W_2,V_3)$. We have the bounds $\deg_{W_1} q \le 2(t-1)$, $\deg_{W_2} q \le 2(t-1)$, and $\deg_{V_3} q \le 2(n-1)$, for a total degree of $O(t+n) = O(n)$. Importantly, the number of monomials in $q$ is at most $(2t-1)^2 (2n-1) = O(nt^2)$. We now present the corresponding scheme and its analysis.

\begin{protocol}
  \item[Stream processing.] Verifier picks $r_1, r_2, r_3 \in_R \FF$. As the stream arrives, he maintains two $1$-dimensional arrays: $\tb_{j-1}(r_1,y,r_3)$ and $\tb_{j-1}(r_2,y,r_3)$, for all $y \in [s]$ (using \Cref{fact:dynamicupdate}). He also maintains an accumulator that starts at zero and, after the $j$th update $(x_{1j}, y_{1j}, x_{2j}, y_{2j})$, is incremented by
  \iflipics
  $\Delta_j \tb_{j-1}(r_1, y_{1j}, r_3) \tb_{j-1}(r_2, y_{2j}, r_3)
    \delta_{x_{1j}}(r_1) \delta_{x_{2j}}(r_2)$.
  \else
  \[
    \Delta_j\, \tb_{j-1}(r_1, y_{1j}, r_3)\, \tb_{j-1}(r_2, y_{2j}, r_3)\,
    \delta_{x_{1j}}(r_1)\, \delta_{x_{2j}}(r_2) \,.
  \]
  \fi
  By \cref{eq:tricnt-frugal-inner-poly}, the final value of this accumulator is $q(r_1,r_2,r_3)$.
  
  Notice that the accumulator is a nonlinear sketch of the input.
  
  \item[Help message.] Prover sends Verifier a polynomial $\hq(W_1,W_2,V_3)$ that she claims equals $q(W_1,W_2,V_3)$. It should satisfy the degree bounds noted above. He lacks the space to store $\hq$, so she streams the coefficients of $\hq$ in some canonical order. 

  \item[Verification and output.] As $\hq$ is streamed in, Verifier computes the check value $C := \hq(r_1,r_2,r_3)$ and the result value $\hT := \sum_{v_3 \in [n]} \sum_{w_1,w_2 \in [t]} \hq(w_1,w_2,v_3)$. If he finds that $C \ne q(r_1,r_2,r_3)$, he outputs $\bot$. Otherwise, he believes that $\hq \equiv q$ and accordingly, based on \cref{eq:tricnt-frugal-fext-sum}, outputs $\hT$ as the answer.

  \item[Error probability.] As before, we have perfect completeness and by the Schwartz--Zippel Lemma (\Cref{fact:schwartz-zippel}, this time using its full multivariate strength), this soundness error is at most $\deg q/|\FF| = O(n)/|\FF| < 1/n$, by choosing $|\FF|$ large enough.

  \item[Help and Verification costs.] Prover can describe $\hq$ by listing its $O(nt^2)$ coefficients. Verifier maintains two $s$-length arrays.
  Overall, we get an $[nt^2, s]$-scheme, as required.
\end{protocol}

%!TEX root = main.tex

\section{A Technical Result: Counting Edges in Induced Subgraphs} \label{sec:edge-count}

We introduce two somewhat technical, though still natural, graph problems: \InducedEdgeCnt and \CrossEdgeCount. We design schemes for these problems giving optimal tradeoffs (as usual, up to logarithmic factors). These schemes are key subroutines in our schemes for more standard, well-studied graph problems---such as \MaxMatching---considered in \Cref{sec:edge-count-apps}.

The \InducedEdgeCnt problem is defined as follows. The input is a stream of edges of a graph $G=(V,E)$ followed by a stream of vertex subsets $\langle U_1,\ldots U_\ell\rangle$ for some $\ell\in \NN$, where $U_i \subseteq V$ for $i\in [\ell]$. To be precise, the latter portion of the stream consists of the vertices of $U_1$ in arbitrary order, followed by a delimiter, followed by the vertices of $U_2$ in arbitrary order, and so on. The desired output is $\sum_{i=1}^\ell |E(G[U_i])|$, the sum of the numbers of edges in the induced subgraphs $G[U_1],\ldots,G[U_\ell]$. Note that $U_1,\ldots,U_\ell$ need not be pairwise disjoint, so the sum may count some edges more than once.

The \CrossEdgeCount problem is an analog of the above for induced bipartite subgraphs. The input is a stream of edges followed by $\ell$ {\em pairs} of vertex subsets $\langle(U_1,W_1),\ldots,(U_\ell,W_\ell)\rangle$, where $U_i\cap W_i=\varnothing$ for $i\in$~$[\ell]$. The desired output is $\sum_{i=1}^\ell |E(G[U_i,W_i])|$, the sum of the number of cross-edges in the induced bipartite subgraphs $G[U_1,W_1],\ldots,G[U_\ell,W_\ell]$. Note that the $U_i$s (or $W_i$s) need not be disjoint among themselves.

Importantly, in both of these problems, the edges {\em precede} the vertex subsets in the stream. This makes the problems intractable in the basic data streaming model. 
We shall prove the following results. %about our
%problems.
% \iflipics
% (proofs appear in \Cref{sec:lipics:edge-count}).
% \begin{restatable}{lemma}{indEdgeCnt} \label{thm:edge-cnt}
%   For any $h,v$ with $hv=n^2$, there is an $[h,v]$-protocol for \InducedEdgeCnt.
% \end{restatable}
% \begin{restatable}{lemma}{crossEdgeCnt} \label{thm:crossedge-cnt}
%   For any $h,v$ with $hv=n^2$, there is an $[h,v]$-protocol for \CrossEdgeCount.
% \end{restatable}
% \else
\begin{lemma} \label{thm:edge-cnt}
  For any $h,v$ with $hv=n^2$, there is an $[h,v]$-protocol for \InducedEdgeCnt.
\end{lemma}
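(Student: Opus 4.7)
The plan is to apply the sum-check technique to a polynomial encoding of $S := \sum_{i=1}^{\ell} |E(G[U_i])|$ obtained by shaping the vertex space, together with a nonlinear Verifier sketch. Given target $(h, v)$ with $hv = n^2$, pick positive integers $t = \lceil\sqrt{h}\rceil$ and $s = \lceil\sqrt{v}\rceil$, noting that $ts \ge n$. Identify $V$ with a subset of $[t]\times[s]$ via a canonical bijection and write each $u \in V$ as $u = (x_u, y_u)$. Over a prime field $\FF$ with $|\FF| = \poly(n)$, let $\ta(X_1, Y_1, X_2, Y_2)$ be the $\FF$-extension of the corresponding shaped adjacency array, so $\deg_{X_i} \ta \le t-1$ and $\deg_{Y_i} \ta \le s-1$. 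Since $G$ is simple, a direct rewrite gives
\[
2S \;=\; \sum_{i \in [\ell]} \sum_{u, v \in U_i} \ta(x_u, y_u, x_v, y_v) \;=\; \sum_{x_1, x_2 \in [t]} p(x_1, x_2),
\]
where, using the impulse polynomials from \cref{eq:unit-impulse},
\[
p(X_1, X_2) \;:=\; \sum_{i \in [\ell]} \sum_{u, v \in U_i} \ta(X_1, y_u, X_2, y_v)\, \delta_{x_u}(X_1)\, \delta_{x_v}(X_2),
\]
and $\deg_{X_1} p,\, \deg_{X_2} p \le 2(t-1)$.

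The scheme then proceeds as follows. Verifier samples $r_1, r_2 \in_R \FF$. During the edge portion of the stream he maintains, via \Cref{fact:dynamicupdate}, a table $T \colon [s]^2 \to \FF$ with $T(y_1, y_2) = \ta(r_1, y_1, r_2, y_2)$, using $O(s^2 \log|\FF|)$ bits. While reading each $U_i$ he maintains two length-$s$ vectors $\phi_i(y) = \sum_{(x,y) \in U_i} \delta_x(r_1)$ and $\psi_i(y) = \sum_{(x,y) \in U_i} \delta_x(r_2)$; at the $U_i$-delimiter he increments a running scalar $C$ by $\sum_{y_1, y_2 \in [s]} T(y_1, y_2)\, \phi_i(y_1)\, \psi_i(y_2)$ and then discards $\phi_i, \psi_i$. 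Expanding the $\delta$-products and swapping sums confirms $C = p(r_1, r_2)$ at the end of the stream; the sketch $C$ is genuinely \emph{nonlinear} because each $U_i$'s contribution is quadratic in the linear sketches $\phi_i$ and $\psi_i$. After the stream, Prover streams the at most $4t^2$ coefficients of a polynomial $\hp(X_1, X_2)$ of bidegree $(2t-2, 2t-2)$ claimed to equal $p$. On the fly, Verifier accumulates both $\hp(r_1, r_2)$ and the candidate answer $\tfrac{1}{2}\sum_{x_1, x_2 \in [t]} \hp(x_1, x_2)$; he outputs the candidate iff $\hp(r_1, r_2) = C$, and $\bot$ otherwise.

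Analysis is standard. Perfect completeness is immediate. If $\hp \not\equiv p$ then by the Schwartz--Zippel lemma (\Cref{fact:schwartz-zippel}) the check $\hp(r_1, r_2) = C$ passes with probability at most $4(t-1)/|\FF|$, which is $o(1)$ for $|\FF| = \poly(n)$ sufficiently large. The costs are hcost $= \tO(t^2) = \tO(h)$ and vcost $= \tO(s^2) = \tO(v)$, giving the promised $[h,v]$-scheme.

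The main conceptual step is arranging that each subset $U_i$'s contribution to $p(r_1, r_2)$ can be computed and absorbed into a single scalar as soon as the $U_i$-delimiter is read, rather than keeping per-subset state across the rest of the stream. Factoring the Lagrange impulses $\delta_{x_u}, \delta_{x_v}$ and separating every vertex into a ``coarse'' $[t]$-coordinate (hidden inside the Lagrange factors of $p$) and a ``fine'' $[s]$-coordinate (tracked explicitly via $T, \phi_i, \psi_i$) is precisely what decouples the four-way sum into a sequence of rank-one updates to $C$ and enables the $O(s^2)$-space nonlinear sketch.
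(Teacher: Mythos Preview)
Your proof is correct and essentially identical to the paper's: the only difference is notational, as the paper writes the polynomial $p$ using the $\FF$-extension $\tb_i$ of the indicator of $U_i$, whereas you express the same object via the impulse functions $\delta_{x_u}$---indeed your $\phi_i(y)$ and $\psi_i(y)$ are precisely the paper's $\tb_i(r_1,y)$ and $\tb_i(r_2,y)$. The stream processing, the nonlinear accumulator update at each delimiter (your rank-one update to $C$ is exactly the paper's \cref{eq:nonlinear}), the degree bounds, and the cost analysis all match.
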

\begin{lemma} \label{thm:crossedge-cnt}
  For any $h,v$ with $hv=n^2$, there is an $[h,v]$-protocol for \CrossEdgeCount.
\end{lemma}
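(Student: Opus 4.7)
The plan is to lift the scheme for \InducedEdgeCnt (\Cref{thm:edge-cnt}) to the bipartite-cross setting by exploiting the tensor-product structure of each pair $(U_i, W_i)$. I write the target quantity as $\sum_{u, w \in [n]} A(u, w) f(u, w)$, where $A$ is the adjacency matrix of the input graph $G$ and $f(u, w) := \sum_{i=1}^\ell \indic[u \in U_i]\, \indic[w \in W_i]$. Given a budget $(h, v)$ with $hv = n^2$, I choose integer parameters $t_1, t_2, s_1, s_2 \ge 1$ with $t_1 s_1 = t_2 s_2 = n$, $t_1 t_2 = h$, and $s_1 s_2 = v$ (any small rounding is absorbed in the $\tO$-notation). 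Reshaping via canonical bijections, I view $u \in [n]$ as $(x_1, y_1) \in [t_1] \times [s_1]$ and $w \in [n]$ as $(x_2, y_2) \in [t_2] \times [s_2]$, so that $A$ and $f$ become $4$-dimensional arrays with low-degree $\FF$-extensions $\tA, \tf$ over a sufficiently large finite field $\FF$ with $|\FF| = n^{O(1)}$.

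I then encode the answer as the double sum
\begin{align*}
  T \;=\; \sum_{(x_1, x_2) \in [t_1] \times [t_2]} p(x_1, x_2), \quad p(X_1, X_2) := \sum_{(y_1, y_2) \in [s_1] \times [s_2]} \tA(X_1, y_1, X_2, y_2)\, \tf(X_1, y_1, X_2, y_2),
\end{align*}
where $\deg_{X_j} p \le 2(t_j - 1)$, so $p$ has at most $O(t_1 t_2) = O(h)$ monomials. Verifier picks $r_1, r_2 \in_R \FF$ at the outset. At the end of the whole stream Prover streams the coefficients of a bivariate polynomial $\hp$, claimed to equal $p$; Verifier accumulates $\hT := \sum_{(x_1, x_2)} \hp(x_1, x_2)$ and $C := \hp(r_1, r_2)$ on the fly in $O(\log|\FF|)$ bits. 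The output is $\hT$ if the check $C = p(r_1, r_2)$ passes (with $p(r_1, r_2)$ computed from sketches of the input, as described below) and $\bot$ otherwise. Completeness is perfect, and by \Cref{fact:schwartz-zippel} applied to the degree-$O(t_1 + t_2)$ polynomial $\hp - p$, soundness error is at most $O((t_1 + t_2)/|\FF|) < 1/n$.

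The main obstacle, and the key difference from \InducedEdgeCnt, is that Verifier must compute $p(r_1, r_2)$ in $O(v)$ space while the subset portion of the stream reveals individual vertices rather than edge-like pairs. The crucial observation is the rank-one factorization
\begin{align*}
  \tf(r_1, y_1, r_2, y_2) \;=\; \sum_{i=1}^\ell \widetilde{\indic_{U_i}}(r_1, y_1) \cdot \widetilde{\indic_{W_i}}(r_2, y_2) \qquad \forall\, y_1 \in [s_1],\; y_2 \in [s_2],
\end{align*}
obtained from uniqueness of low-degree extensions applied to each summand $\indic_{U_i}(x_1, y_1)\, \indic_{W_i}(x_2, y_2)$ of $f$. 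During the edge stream Verifier maintains $\tA(r_1, y_1, r_2, y_2)$ at every $(y_1, y_2) \in [s_1] \times [s_2]$ using $O(v)$ field elements and $O(1)$ work per edge, via \Cref{fact:dynamicupdate}. During the subset stream, for the \emph{current} pair $(U_i, W_i)$ Verifier maintains only two per-pair auxiliary sketches $\hat{u}(y_1) := \sum_{x_1 : (x_1, y_1) \in U_i} \delta_{x_1}(r_1)$ for $y_1 \in [s_1]$ and $\hat{w}(y_2) := \sum_{x_2 : (x_2, y_2) \in W_i} \delta_{x_2}(r_2)$ for $y_2 \in [s_2]$, using $O(s_1 + s_2)$ extra space and $O(1)$ work per incoming vertex. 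On the pair-end delimiter Verifier adds $\sum_{y_1, y_2} \tA(r_1, y_1, r_2, y_2)\, \hat{u}(y_1)\, \hat{w}(y_2)$ to a running accumulator $S$ and resets $\hat{u}, \hat{w}$ to zero; by the factorization, after the full stream $S = p(r_1, r_2)$, which Verifier uses in the final check $C = S$. Overall, vcost is $\tO(v)$ and hcost is $\tO(h)$, proving the lemma.
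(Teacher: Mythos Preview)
Your proposal is correct and follows essentially the same approach as the paper. The paper's proof simply observes that the \InducedEdgeCnt scheme adapts by keeping, for each pair $i$, two per-pair sketch arrays---one for $U_i$ evaluated at $r_1$ and one for $W_i$ evaluated at $r_2$---and combining them with the stored $\ta(r_1,\cdot,r_2,\cdot)$ table on the delimiter; your $\hat u,\hat w$ are precisely these arrays, your factorization of $\tf$ is the same rank-one structure the paper exploits implicitly, and your asymmetric shaping $(t_1,s_1)\times(t_2,s_2)$ is a harmless generalization of the paper's symmetric $t\times s$ choice.
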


\mypar{Scheme for \InducedEdgeCnt (Proof of \Cref{thm:edge-cnt})} \label{sec:induced-edge-cnt}
For the given instance, let $M$ denote the desired output and let $A$ be the adjacency matrix of $G$.
For each $i\in \ell$, let $B_i \in \b^V$ be the indicator vector of the set $U_i$, i.e., $B_i(v) = 1 \iff v \in U_i$. Then,
\begin{equation} \label{eq:edgecnt-basic}
  M = \frac12 \sum_{i=1}^{\ell} \sum_{v_1,v_2\in V} B_i(v_1)\, B_i(v_2)\, A(v_1,v_2) \,.
\end{equation}

Let $t,s$ be integer parameters such that $ts = n$. We apply the shaping technique to \cref{eq:edgecnt-basic} by rewriting the variables $v_j$ as pairs of integers $(x_j,y_j) \in [t] \times [s]$, for $j \in \{1,2\}$. This transforms the matrix $A$ into a $4$-dimensional array $a$ and each $B_i$ into a $2$-dimensional array $b_i$. Let $\ta$ and $\tb_i$ be the respective $\FF$-extensions. \Cref{eq:edgecnt-basic} now gives
\iflipics
\begin{gather}
  2M 
  = \sum_{i=1}^{\ell} \sum_{x_1,x_2\in [t]} \sum_{y_1,y_2\in [s]} 
    \tb_i(x_1,y_1)\, \tb_i(x_2,y_2)\, \ta(x_1,y_1,x_2,y_2)
  = \sum_{x_1,x_2\in[t]} p(x_1,x_2) \,, \label{eq:edgecnt-poly} \\
  \text{where~~} p(X_1,X_2)
  = \sum_{i=1}^{\ell} \sum_{y_1,y_2\in [s]}
    \tb_i(X_1,y_1)\, \tb_i(X_2,y_2)\, \ta(X_1,y_1,X_2,y_2) \,. \label{eq:edgecnt-sum-poly}
\end{gather}
\else
\begin{align}
  2M 
  &= \sum_{i=1}^{\ell} \sum_{x_1,x_2\in [t]} \sum_{y_1,y_2\in [s]} 
    \tb_i(x_1,y_1)\, \tb_i(x_2,y_2)\, \ta(x_1,y_1,x_2,y_2)
  = \sum_{x_1,x_2\in[t]} p(x_1,x_2) \,,\quad\text{where} \label{eq:edgecnt-poly} \\
  p(X_1,X_2)
  &= \sum_{i=1}^{\ell} \sum_{y_1,y_2\in [s]}
    \tb_i(X_1,y_1)\, \tb_i(X_2,y_2)\, \ta(X_1,y_1,X_2,y_2) \,. \label{eq:edgecnt-sum-poly}
\end{align}
\fi

Our scheme exploits this expression in the same general manner as the analogous expressions for the \triCount schemes from \Cref{sec:tri-cnt} (e.g., Equation \eqref{eq:tricnt-laconic-fext-sum}). Prover sends a bivariate polynomial $\hp(X_1,X_2)$, which is claimed to be $p$, by streaming its coefficients. Since $\deg_{X_j} p \le 2(t-1)$ for $j \in \{1,2\}$, Prover need only send $O(t^2)$ coefficients, for a help cost of $\tO(t^2)$. Verifier computes his output using \cref{eq:edgecnt-poly}, giving perfect completeness.
On the soundness side, Verifier checks the condition $\hp(r_1,r_2) = p(r_1,r_2)$ for randomly chosen $r_1,r_2 \in_R \FF$. By the Schwartz-Zippel Lemma (\Cref{fact:schwartz-zippel}), the probability that he is fooled is at most $\deg p/|\FF| = O(t)/|\FF| < 1/n$, for the right choice of $\FF$.
It remains to describe how exactly Verifier evaluates $p(r_1,r_2)$, which we now address.

\begin{protocol}
  \item[Processing the stream of edges.] This is straightforward: Verifier maintains the $2$-dimensional array of values $\ta(r_1,w,r_2,z)$, for all $w,z\in [s]$, using \Cref{fact:dynamicupdate}.
  \item[Processing the stream of vertex subsets.] Verifier initializes an accumulator to zero and allocates workspace for two arrays of length $s$ with entries in $\FF$. For each $i \in [\ell]$, as the vertices of $U_i$ arrive, he maintains $\tb_i(r_1,z)$ and $\tb_i(r_2,z)$ for each $z \in [s]$, using that workspace. Upon seeing the delimiter marking the end of $U_i$, he computes
  \begin{equation} \label{eq:nonlinear}
    \sum_{y_1,y_2\in [s]} \tb_i(r_1,y_1)\, \tb_i(r_2,y_2)\, \ta(r_1,y_1,r_2,y_2)
  \end{equation}
  and adds this quantity to the accumulator. Note that the workspace is reused when the stream moves on from $U_i$ to $U_{i+1}$. By \cref{eq:edgecnt-sum-poly}, after the last set $U_\ell$ is streamed, the accumulator holds $p(r_1,r_2)$.
  
  \item[Help and verification costs.] We argued above that the hcost is $\tO(t^2)$. Meanwhile, Verifier's storage is dominated by the $s \times s$ array he maintains, leading to a vcost of $\tO(s^2)$.
\end{protocol}

Therefore, we obtain a $[t^2,s^2]$-scheme for any parameters $t,s$ with $ts=n$. In other words, we get an $[h,v]$-scheme for any $h,v$ with $hv=n^2$.

\mypar{Scheme for \CrossEdgeCount (Proof of \Cref{thm:crossedge-cnt})}
Our solution for \InducedEdgeCnt can easily be modified to obtain a protocol for \CrossEdgeCount with the same costs. If $B_i$ and $C_i$ are the indicator vectors of the sets $U_i$ and $W_i$, respectively, then the desired output is
\begin{equation}\label{eq:crossedgecnt-basic}
  M = \sum_{i=1}^{\ell} \sum_{v_1,v_2\in V} B_i(v_1)\, C_i(v_2)\, A(v_1,v_2) \,,
\end{equation}
where we used the fact that each $U_i \cap W_i = \varnothing$. Since \cref{eq:crossedgecnt-basic} has essentially the same form as \cref{eq:edgecnt-basic}, a scheme very similar to the previous one solves \CrossEdgeCount: Verifier simply keeps track of arrays corresponding to $C_i$ alongside ones corresponding to $B_i$.

%!TEX root = main.tex

\section{Maximum Matching and Other Applications of Edge Counting} \label{sec:edge-count-apps}

In this section, we show how  \InducedEdgeCnt and \CrossEdgeCount can be used as subroutines
to solve multiple problems that have been widely studied in the basic and annotated data streaming models. These problems include Maximum Matching, Triangle-Counting, Maximal Independent Set, Acyclicity Testing, Topological Sorting, and Graph Connectivity. For the frugal regime where vcost~$=o(n)$, our schemes are often optimal. We specifically discuss the application to \MaxMatching in \Cref{sec:maxmatch},
\iflipics
state the results on the other applications in \Cref{sec:resotherapps}, and give a detailed account on them in \Cref{app:otherapps}.
\else
and give an account of the other applications in \Cref{sec:otherapps}.
\fi

\subsection{The Maximum Matching Problem} \label{sec:maxmatch}

We give the first optimal frugal scheme for computing the cardinality $\alpha'(G)$ of a maximum matching. As noted in prior works \cite{ChakrabartiG19,Thaler16}, checking whether $\alpha'(G)\geq k$ for some $k$ is not hard, given $\tOmega(k)$ bits of help: Prover can simply send a matching of size $k$ and prove its validity. The interesting part is to verify that $\alpha'(G)\leq k$. For this, as in prior works, we exploit the Tutte--Berge formula \cite{BondyM2008}:
\begin{equation}\label{eq:Tutte-Berge}
    \match(G) = \frac12\, \min_{U\subseteq V} \Big(\, |U| + |V| - \text{odd}(G\setminus U) \,\Big) \,,
\end{equation}
where $\text{odd}(G\setminus U)$ denotes the number of connected components in $G\setminus U$ with an odd number of vertices. Thus, to show that $\alpha'(G)\leq k$, Prover needs to exhibit $U^* \subseteq V$ such that $k=\frac12 (|U^*| + |V| - \text{odd}(G\setminus U^*))$. 
%To prove that the equation holds, Prover sends the connected components of $G\setminus U^*$. 
Set $H := G\setminus U^*$. To verify the value of odd$(H)$, the most important sub-check that Verifier must do is to check that all purported connected components of $H$ (sent by Prover) are actually disconnected from each other. Thaler~\cite{Thaler16} gave an $[n,n]$-scheme for this subproblem (thus obtaining the first $[n,n]$-scheme for \MaxMatching), while Chakrabarti and Ghosh~\cite{ChakrabartiG19} gave a $[t^3,s^2]$-scheme for any $ts=n$ (thus designing the first frugal scheme for \MaxMatching, though suboptimal). The latter work notes that all other sub-checks for \MaxMatching can be done by \emph{optimal} frugal schemes (see~\cite{ChakrabartiG19}, Section 4).
% We give an optimal frugal scheme for this final sub-problem as well.

\mypar{Optimal Frugal Scheme} To optimally check that the purported connected components of $H$ are indeed disconnected from each other, we use the \InducedEdgeCnt scheme as a subroutine. Prover streams the vertices in $H$ by listing its connected components in some order $\langle U_1,\ldots,U_\ell\rangle$. Verifier uses \Cref{thm:edge-cnt} to count $m_1 := |E(H)|$ (invoking that lemma with a single subset $V(H)$). In parallel, using the same scheme, Verifier computes the sum $m_2 = \sum_{i=1}^\ell |E(G[U_i])|$. The subsets $U_i$ are pairwise disconnected iff $m_2=m_1$, which Verifier checks. The sub-checks of whether $U_i$s are indeed pairwise disjoint (as sets) and whether $U^*\sqcup V(H) = V(G)$ can be done via fingerprinting (as in \cref{sec:prelims}).
\medskip

\noindent
\textit{Help and verification costs.} Prover streams $U^*$ and the vertices in $H$ in a certain order, which adds $O(n\log n)$ bits to the hcost of the \InducedEdgeCnt protocol. The vcost stays the same, asymptotically, giving us an $[n+h,v]$-scheme for \MaxMatching for any $h,v$ with $hv=n^2$. Overall, we have established the following theorem.

\begin{theorem}\label{thm:maxmatch-frugal}
There is an $[nt,s]$-scheme for \MaxMatching. This is optimal up to logarithmic factors,
  since any $(h,v)$-scheme is known to require $hv = \Omega(n^2)$~\cite{ChakrabartiCMT14}.
  \end{theorem}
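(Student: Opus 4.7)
My plan is to invoke the Tutte--Berge formula of \cref{eq:Tutte-Berge} to reduce \MaxMatching to certifying both bounds $\alpha'(G) \geq k$ and $\alpha'(G) \leq k$. The lower bound is the easier direction: Prover simply streams an explicit matching of size $k$, and Verifier validates its edges and pairwise vertex-disjointness in small space using standard fingerprinting. For the upper bound, Prover commits to a subset $U^* \subseteq V$ achieving the minimum in \cref{eq:Tutte-Berge}, so that $2k = |U^*| + n - \text{odd}(H)$ where $H := G \setminus U^*$; Verifier must then certify the value of $\text{odd}(H)$.

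To certify $\text{odd}(H)$, I would have Prover stream a purported list $\langle U_1, \ldots, U_\ell\rangle$ of the connected components of $H$, tagged by parity. Verifier then performs three checks: (i) $U^*, U_1, \ldots, U_\ell$ form a disjoint partition of $V$, which is a set-equality check doable via fingerprints as in \Cref{sec:prelims}; (ii) each $U_i$ induces a connected subgraph of $H$, which \cite{ChakrabartiG19} shows can be handled by an optimal frugal sub-scheme; and (iii) no edge of $H$ crosses between distinct $U_i$'s. Check~(iii) is the key new ingredient, since a correct answer to all three forces each $U_i$ to equal an actual connected component of $H$, making Prover's reported odd-count correct.

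My central idea is to implement check (iii) by running the \InducedEdgeCnt scheme of \Cref{thm:edge-cnt} twice in parallel on the same stream of edges of $G$: once with the single subset $V(H)$ to obtain $m_1 := |E(H)|$, and once with the sequence $\langle U_1, \ldots, U_\ell\rangle$ to obtain $m_2 := \sum_{i=1}^{\ell} |E(G[U_i])|$. Since the $U_i$'s partition $V(H)$, every edge of $H$ either lies inside some $U_i$ (and is counted once in $m_2$) or crosses between two distinct $U_i$'s (and is not counted in $m_2$); hence the $U_i$'s are pairwise disconnected in $H$ exactly when $m_1 = m_2$, and this is what Verifier checks. Instantiating \Cref{thm:edge-cnt} with $h = nt$ and $v = s$ (so that $hv = n^2$) makes this subroutine an $[nt, s]$-scheme. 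Streaming $U^*$, the partition, and the matching witness contributes only $O(n \log n)$ extra bits of help, which is absorbed into $\tO(nt)$, while the ancillary sub-schemes from \cite{ChakrabartiG19} fit within the same frugal vcost budget.

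The main obstacle I anticipate is precisely check (iii) under the frugal constraint $v = s = o(n)$: Verifier cannot afford even one bit per vertex to record component membership, so any direct traversal-based or coloring-based certification is out of reach. The \InducedEdgeCnt subroutine of \Cref{thm:edge-cnt} sidesteps this by compressing the whole pairwise-disconnectedness test into a single scalar equality between two nonlinearly-sketched values, which is exactly what makes the target tradeoff achievable. Optimality up to logarithmic factors then follows immediately from the known $hv = \Omega(n^2)$ lower bound of \cite{ChakrabartiCMT14}.
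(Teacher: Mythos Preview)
Your proposal is correct and follows essentially the same approach as the paper: reduce via Tutte--Berge, stream $U^*$ and the purported components $\langle U_1,\ldots,U_\ell\rangle$, and implement the crucial pairwise-disconnectedness check by running \InducedEdgeCnt twice in parallel to compare $m_1=|E(H)|$ against $m_2=\sum_i |E(G[U_i])|$, with the remaining sub-checks (partition, connectivity of each $U_i$, matching validity) delegated to fingerprinting and the optimal frugal sub-schemes of~\cite{ChakrabartiG19}. The cost accounting and the appeal to the $hv=\Omega(n^2)$ lower bound for optimality also match the paper's argument.
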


\mypar{Protocol for Space Larger Than \boldmath$n$} There is no laconic scheme known for the general \MaxMatching problem. The barrier seems to be that a natural witness for the problem is an actual maximum matching of the graph, which can be of size $\Theta(n)$. We show that large maximum matching size $\alpha'(G)$ is indeed the sole barrier to obtaining a laconic scheme. In particular, for any graph $G$, we give a scheme for \MaxMatching with hcost $\alpha'(G)$. This yields a laconic scheme for the case when $\alpha'(G)=o(n)$. 
\iflipics
We defer the proof to \Cref{app:max-match-laconic}.
\begin{restatable}{theorem}{thmlacmm}\label{thm:maxmatch-laconic}
For any $h,v$ with $hv\ge n^2$ and $v\geq n$, there is an $[\alpha'+h,v]$-scheme for \MaxMatching, where $\alpha'$ is the size of the maximum matching of the input graph. In particular, there is an $[\alpha',n^2/\alpha']$-scheme. 
\end{restatable}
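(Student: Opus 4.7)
The plan is to augment the optimal frugal \MaxMatching scheme of \Cref{thm:maxmatch-frugal} so that the $\tilde{O}(n)$ bits used there to stream $V \setminus U^*$ are replaced by $\tilde{O}(\alpha')$ bits that describe only $U^*$ together with the \emph{non-trivial} (size-$\ge 2$) components of $H := G \setminus U^*$. The enabling structural fact is that, when $U^*$ is chosen to be the Gallai--Edmonds essential set, one has $|U^*| \le \alpha'$, each non-trivial component of $H$ contributes at least one edge to any maximum matching $M^*$, and a short bookkeeping argument gives $\sum_{|U_i|\ge 2}|U_i| = O(\alpha')$. Consequently Prover transmits $M^*$, $U^*$, and the non-trivial components $U_1,\ldots,U_\ell$ of $H$ in $\tilde{O}(\alpha')$ total bits, while the remaining vertices of $V \setminus U^* \setminus \bigcup_i U_i$ are inferred by Verifier to be isolated---hence odd---singleton components of $H$.

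To verify $\alpha'(G)\ge k$, Verifier checks that the claimed $M^*$ has size $k$ and is vertex-disjoint using a length-$n$ bit vector, which fits because $v \ge n$, and confirms $M^* \subseteq E(G)$ via the \textsc{subset} scheme of \Cref{fact:set-schemes} at hcost $\tilde{O}(h)$ and vcost $\tilde{O}(v)$. To verify $\alpha'(G)\le k$, Verifier confirms the Tutte--Berge identity \eqref{eq:Tutte-Berge}: the odd-component count is read off the claimed partition, after using the bit vector to confirm that $U_1,\ldots,U_\ell$ are disjoint subsets of $V\setminus U^*$. The crucial check, that no edge of $G$ crosses between distinct claimed components, is the equality $\sum_i |E(G[U_i])| = |E(H)|$, enforced by two simultaneous instances of \InducedEdgeCnt (\Cref{thm:edge-cnt}) that share Verifier's $\tilde{O}(v)$-size sketch of the edge stream. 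The second sum uses the non-trivial $U_i$ directly (singletons contribute nothing), while the first uses a \emph{complement trick}: since the $\FF$-extension of the all-ones indicator on $[t]\times[s]$ is the constant polynomial $1$, we have $\tb_{V\setminus U^*}(X,Y) = 1 - \tb_{U^*}(X,Y)$, which lets Verifier simulate an \InducedEdgeCnt call on $V\setminus U^*$ while Prover streams only the $\tilde{O}(\alpha')$ vertices of $U^*$.

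Summing, the hcost is $\tilde{O}(\alpha')$ for the combinatorial certificates plus $\tilde{O}(h)$ for the \InducedEdgeCnt and \textsc{subset} polynomials, and the vcost is $\tilde{O}(v)$ for the sketches plus the bit vector; the special case $h=\alpha'$, $v=n^2/\alpha'$ yields the claimed $[\alpha', n^2/\alpha']$-scheme. The step I expect to require the most care is soundness of the Tutte--Berge upper bound under this compressed certificate: Prover might try to coarsen the true component structure into the claimed $U_i$'s. Merging true components into one $U_i$ can only decrease the claimed odd count, and hence only weakens the upper bound rather than strengthens it; splitting a true component across two $U_i$'s leaves at least one crossing edge uncounted and is caught by the edge-count equality. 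Combined with the matching $M^*$ that forces $\alpha'(G)\ge k$ for any accepted transcript, these observations pin $\alpha'(G)$ to $k$.
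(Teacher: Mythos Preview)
Your argument is correct, but it takes a different route from the paper's. The paper exploits the hypothesis $v\ge n$ far more directly: Verifier simply \emph{stores} $V\setminus U^*$ and a spanning forest $F$ of $H$ (which has $|V(H)|-\ell \le n-\text{odd}(H)\le 2\alpha'$ edges, the same arithmetic that underlies your $\sum_{|U_i|\ge 2}|U_i|=O(\alpha')$), reconstructs the component partition from $F$ in local memory, and then checks $|E\cap(V(H)\times V(H))|=|E\cap\bigl(\bigcup_i U_i\times U_i\bigr)|$ with the plain \textsc{intersection} scheme of \Cref{fact:set-schemes}. No \InducedEdgeCnt, no complement trick, no polynomial extensions of indicator vectors are needed; once Verifier has $\Omega(n)$ space, the problem collapses to a couple of set-intersection queries over~$[n]^2$.

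Your approach instead keeps the \InducedEdgeCnt machinery from the frugal scheme and adds the identity $\tb_{V\setminus U^*}=1-\tb_{U^*}$ so that Prover need only stream $U^*$ and the non-trivial components. This is a nice observation and gives a cleaner line from \Cref{thm:maxmatch-frugal} to \Cref{thm:maxmatch-laconic}; it also shows that the $v\ge n$ hypothesis is used only for the length-$n$ bit vector, not for the edge-containment check itself. The paper's version is more elementary and avoids the extra polynomial reasoning. One small remark: you do not actually need Gallai--Edmonds; the bound $\sum_{|U_i|\ge 2}|U_i|=O(\alpha')$ already follows for \emph{any} Tutte--Berge optimal $U^*$ from $\sum_i(|U_i|-1)=|V(H)|-\ell\le n-\text{odd}(H)\le 2\alpha'$, which is exactly the paper's Observation preceding its proof.
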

\else 
%!TEX root = main.tex

\iflipics
\section{Maximum Matching Redux (Proof of Theorem \ref{thm:maxmatch-laconic})} \label{app:max-match-laconic}
In \Cref{sec:maxmatch}, we gave a frugal scheme for \MaxMatching, i.e., one that has hcost $\geq n$ and vcost $\leq n$. Here, we show that we can get a laconic scheme, i.e., one with hcost $=o(n)$ and vcost $=\omega(n)$ as long as the maximum matching size is $o(n)$.

\else
\fi
Let $H = G\setminus U^*$ \iflipics where $U^*$ is as in the \MaxMatching protocol in Section \ref{sec:maxmatch}, \else as above, \fi and let $U_1,\ldots,U_{\ell}$ be the connected components of $H$.
% \else 
% Suppose Prover claims that $\alpha'(G)=k$. Let $U^*$, $H$ and $U_1,\ldots,U_{\ell}$ be as above.
%\fi
By the Tutte-Berge formula (\cref{eq:Tutte-Berge}), we have $2k = |U^*| + (n - \text{odd}(H))$. This leads to the following observations.

\begin{observation}\label{obs:ustar-size}
$|U^*| =  O(k)$.
\end{observation}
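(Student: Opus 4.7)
The plan is to read off the bound directly from the Tutte--Berge identity $2k = |U^*| + (n - \text{odd}(H))$ by plugging in a trivial upper bound on $\text{odd}(H)$. Specifically, I would first observe that since every connected component of $H$ contributing to $\text{odd}(H)$ contains at least one vertex (an odd component has $\ge 1$ vertex), the number of odd components of $H$ is at most the total number of vertices of $H$, giving
\begin{equation*}
  \text{odd}(H) \;\le\; |V(H)| \;=\; n - |U^*| \,.
\end{equation*}

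Next, I would substitute this into the defining equation $2k = |U^*| + n - \text{odd}(H)$ to obtain
\begin{equation*}
  2k \;\ge\; |U^*| + n - (n - |U^*|) \;=\; 2|U^*| \,,
\end{equation*}
which immediately yields $|U^*| \le k$, and hence $|U^*| = O(k)$ as claimed.

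The proof is essentially a single algebraic manipulation, so there is no substantive obstacle. The only thing to be careful about is the direction of the inequality in step one: it is tempting to bound $\text{odd}(H)$ from below by $0$, which only gives $|U^*| \ge 2k - n$ (a useful but different fact); the useful bound here is the \emph{upper} bound $\text{odd}(H) \le n - |U^*|$, which is what produces the desired $O(k)$ control on $|U^*|$.
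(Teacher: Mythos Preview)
Your argument is correct and follows essentially the same approach as the paper, which simply records the observation as an immediate consequence of the identity $2k = |U^*| + (n - \text{odd}(H))$ without spelling out the algebra. Your version is in fact slightly sharper: by using $\text{odd}(H) \le |V(H)| = n - |U^*|$ rather than the cruder $\text{odd}(H) \le n$, you obtain $|U^*| \le k$ instead of merely $|U^*| \le 2k$, though of course both give $O(k)$.
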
 

\begin{observation}\label{obs:spforest-size}
The number of edges in a spanning forest of $H$ is $|V(H)|-\ell \leq n - \text{odd}(H) = O(k)$.
\end{observation}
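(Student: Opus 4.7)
\textbf{Proof plan for Observation \ref{obs:spforest-size}.} The statement bundles three claims: (i) a spanning forest of $H$ has exactly $|V(H)|-\ell$ edges, (ii) $|V(H)|-\ell \le n-\text{odd}(H)$, and (iii) $n-\text{odd}(H)=O(k)$. I would prove them in this order, each in one or two lines.

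First, claim (i) is a standard fact about forests: any forest on $p$ vertices with $q$ connected components has exactly $p-q$ edges. Since $H$ has vertex set $V(H)$ and connected components $U_1,\ldots,U_\ell$, a spanning forest of $H$ is the disjoint union of spanning trees of the $U_i$'s, contributing $|U_i|-1$ edges each, for a total of $\sum_{i=1}^{\ell}(|U_i|-1)=|V(H)|-\ell$.

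For claim (ii), observe that $\text{odd}(H)$ counts the components of $H$ having odd size, so $\text{odd}(H)\le \ell$. Combined with $|V(H)|=n-|U^*|\le n$, this gives $|V(H)|-\ell \le n-\ell \le n-\text{odd}(H)$.

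For claim (iii), I invoke the Tutte--Berge identity in the form stated at the top of Section \ref{sec:maxmatch} for the particular minimizer $U^*$: since $2k=|U^*|+(n-\text{odd}(H))$ by the choice of $U^*$, we have $n-\text{odd}(H)=2k-|U^*|\le 2k$, hence $n-\text{odd}(H)=O(k)$. There is no real obstacle here; the only subtlety is to remember that $U^*$ is precisely the minimizer realizing the Tutte--Berge formula, so the equality $2k=|U^*|+n-\text{odd}(H)$ holds with equality (not merely an inequality), which is exactly what is needed to bound $n-\text{odd}(H)$ by $2k$. Chaining the three claims yields the observation.
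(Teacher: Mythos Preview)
Your proof is correct and matches the paper's intended reasoning. The paper itself does not spell out a proof of this observation; it simply records the equality $2k = |U^*| + (n - \text{odd}(H))$ from the Tutte--Berge formula and then states Observations~\ref{obs:ustar-size} and~\ref{obs:spforest-size} as immediate consequences, which is exactly the decomposition you carry out.
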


We now describe our protocol, which is along the lines of the protocol
\iflipics
in Section \ref{sec:maxmatch},
\else
above,
\fi
but this time we crucially use the fact that we are allowing Verifier a space usage of $v \geq n$.

To show that $\alpha'(G) \geq k$, 
Prover sends a matching $M$ of size $k$. Verifier stores $M$ explicitly and checks that it is indeed a matching. Then, he verifies that $M \subseteq E$ using the Subset Scheme (Fact \ref{fact:set-schemes}). Therefore, this part of the scheme uses hcost $\tO(k+h)$ and vcost $\tO(v)$ for any $h,v$ with $hv=n^2$ and $v\geq n$.

Recall 
\iflipics
from \Cref{sec:maxmatch}
\else
\fi
that to show that $\alpha'(G) \leq k$, it suffices to compute odd$(H)$. Prover sends the set $U^*$. By \Cref{obs:ustar-size}, this takes $\tO(k)$ hcost. Verifier has $\Omega(n)$ space, and hence, he can store $V\setminus U^* = V(H)$. Next, Prover sends a spanning forest $F$ of $H$. By \Cref{obs:spforest-size}, this again incurs hcost $\tO(k)$. Verifier stores $F$ and verifies that $F\subseteq E$ using the Subset Scheme (Fact \ref{fact:set-schemes}). From $F$, Verifier explicitly knows the purported connected components $U_1,\ldots, U_\ell$ of $H$.
He finally verifies that $U_i$'s are disconnected from each other by checking that all edges in $H$ are contained in these components. He can do this by checking whether $|E \cap (V(H) \times V(H))| = | E \cap (\cup_{i=1}^\ell U_i \times U_i)|$ using the Intersection Scheme (Fact \ref{fact:set-schemes}). If the check passes he goes over the $U_i$s to compute odd$(H)$ and thus, this part can also be solved using a $[k+h,v]$ scheme for any $h,v$ with $hv=n^2$ and $v\geq n$. Hence, we obtain the following theorem.

\iflipics
\thmlacmm*
\else
\begin{theorem}\label{thm:maxmatch-laconic}
For any $h,v$ with $hv\ge n^2$ and $v\geq n$, there is an $[\alpha'+h,v]$-scheme for \MaxMatching, where $\alpha'$ is the size of the maximum matching of the input graph. In particular, there is an $[\alpha',n^2/\alpha']$-scheme. 
\end{theorem}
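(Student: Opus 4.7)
The plan is to have Prover certify both bounds $\alpha'(G) \geq k$ and $\alpha'(G) \leq k$ with $k = \alpha'(G)$, using witnesses whose sizes are controlled by $\alpha' := \alpha'(G)$. The key structural content that makes the hcost only $\tO(\alpha')$ rests on the Tutte--Berge formula (\cref{eq:Tutte-Berge}): if $U^*$ achieves the minimum then $2k = |U^*| + n - \text{odd}(H)$ where $H = G \setminus U^*$. This forces $|U^*| = O(k)$, and since every component of $H$ contributes at least one vertex to $|V(H)|$, a spanning forest of $H$ has at most $|V(H)| - \ell \leq n - \text{odd}(H) = O(k)$ edges. Thus a full Tutte--Berge certificate has total size $O(\alpha')$, and the approach is to verify it while exploiting the comparatively large space budget $v \geq n$.

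For the lower bound, Prover streams a matching $M$ of size $k$. Since $v \geq n \geq k$, Verifier stores $M$ explicitly, checks that no vertex appears twice in $M$, and then invokes the Subset Scheme of \Cref{fact:set-schemes} with universe $\binom{[n]}{2}$ to verify $M \subseteq E$; this contributes $\tO(h)$ to hcost and $\tO(v)$ to vcost for any $hv \geq n^2$.

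For the upper bound, Prover first sends $U^*$ (size $\tO(k)$), which Verifier stores, thereby also materializing $V(H) = V \setminus U^*$ within the $\Omega(n)$-bit space budget. Prover next sends a spanning forest $F$ of $H$, also of size $\tO(k)$; Verifier stores $F$ and uses the Subset Scheme again to verify $F \subseteq E$. From $F$, Verifier reads off the purported components $U_1, \ldots, U_\ell$ of $H$. The one nontrivial sub-check is that these components are actually pairwise disconnected inside $H$, which I would verify via the identity: they are truly disconnected iff every edge of $H$ lies inside some $U_i$, equivalently $|E \cap (V(H) \times V(H))| = |E \cap \bigcup_i (U_i \times U_i)|$. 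Both cardinalities can be computed by the Intersection Scheme of \Cref{fact:set-schemes} at cost $[h,v]$ for $hv \geq n^2$. If this check passes, Verifier computes $\text{odd}(H)$ directly from $U_1, \ldots, U_\ell$ and accepts iff $2k = |U^*| + n - \text{odd}(H)$.

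The main obstacle I anticipate is precisely this last sub-check: Verifier cannot afford to enumerate every pair of components, so the disconnectedness condition has to be reduced to a single quantity computable by an existing primitive. Reframing it as an edge-count equality and delegating to the Intersection Scheme sidesteps the obstacle. Summing contributions across all parts, the hcost is $\tO(\alpha' + h)$ and the vcost is $\tO(v)$ for any $h,v$ with $hv \geq n^2$ and $v \geq n$, which is the claimed $[\alpha' + h, v]$-scheme; specializing to $h = v = n^2/\alpha'$ (valid since $\alpha' \leq n$) yields the $[\alpha', n^2/\alpha']$-scheme.
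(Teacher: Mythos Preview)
Your proposal is essentially identical to the paper's own proof: the same two observations ($|U^*| = O(k)$ and the spanning forest of $H$ has $\le n - \text{odd}(H) = O(k)$ edges), the same use of the Subset Scheme for $M \subseteq E$ and $F \subseteq E$, and the same reduction of the disconnectedness check to the equality $|E \cap (V(H)\times V(H))| = |E \cap \bigcup_i (U_i\times U_i)|$ via the Intersection Scheme.

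One small slip: in the final sentence you specialize to $h = v = n^2/\alpha'$, but that yields an $[n^2/\alpha',\, n^2/\alpha']$-scheme, not the claimed $[\alpha',\, n^2/\alpha']$-scheme (since $\alpha' \le n \le n^2/\alpha'$). The correct specialization is $h = \alpha'$ and $v = n^2/\alpha'$, which satisfies $hv = n^2$ and $v \ge n$ and gives hcost $\tO(\alpha' + \alpha') = \tO(\alpha')$.
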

\fi

\fi

\iflipics
\subsection{Applications to Other Graph Problems}\label{sec:resotherapps}
Here, we state the results we obtain for several graph problems by applying \InducedEdgeCnt and \CrossEdgeCount. The details and proofs appear in \Cref{app:otherapps}.

We apply the edge-counting protocols to solve the triangle-counting problem in both the standard edge arrival and the vertex arrival (adjacently list) streaming models and obtain the following theorems.
\begin{restatable}{theorem}{thmtricntsparse}\label{thm:tri-cnt-sparse}
For any $h,v$ with $hv\ge n^2$, there is an $[m+h,v]$-scheme for \triCount. In particular, there is an $[m,n^2/m]$-scheme.
\end{restatable}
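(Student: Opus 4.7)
\medskip
\noindent\textit{Proof plan.} The plan is to reduce \triCount to \InducedEdgeCnt (\Cref{thm:edge-cnt}) via the identity
\[
  3T \;=\; \sum_{v \in V} \bigl|E(G[N(v)])\bigr|,
\]
which holds because each triangle $\{u,v,w\}$ contributes the edge $\{v,w\}$ to $E(G[N(u)])$, the edge $\{u,w\}$ to $E(G[N(v)])$, and the edge $\{u,v\}$ to $E(G[N(w)])$. Thus, if Verifier can correctly invoke an \InducedEdgeCnt protocol on the family $\langle U_v = N(v) : v \in V\rangle$, then dividing its output by $3$ yields $T$.

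Concretely, Verifier treats the graph's edge stream as the edge portion of the \InducedEdgeCnt input. After the input ends, Prover streams the purported neighborhoods $\langle U_1, \ldots, U_n\rangle$ (with the delimiters demanded by \Cref{thm:edge-cnt}), followed by the $\tO(h)$-bit help message that the \InducedEdgeCnt protocol requires. Since $\sum_v \deg(v) = 2m$, the neighborhood portion consumes $\tO(m)$ bits, so the total hcost comes to $\tO(m+h)$ while the vcost remains $\tO(v)$. Invoking the $[h,v]$-scheme of \Cref{thm:edge-cnt} as a black box for any $hv = n^2$ will then yield the advertised $[m+h, v]$-scheme; setting $h = m$ and $v = n^2/m$ gives the $[m, n^2/m]$-scheme explicitly stated.

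The main obstacle is for Verifier to certify $U_v = N(v)$ for every $v$ without storing any neighborhood. I plan to resolve this by a fingerprint check (\Cref{sec:prelims}) over ordered pairs in $[n]\times[n]$, run in parallel with the \InducedEdgeCnt subroutine. Each edge $\{u,v\}$ in the input stream contributes the two pairs $(u,v)$ and $(v,u)$ to one fingerprint $\fing_1$; each element $u$ listed inside the Prover block $U_v$ contributes the single pair $(v,u)$ to a second fingerprint $\fing_2$. Verifier accepts only when $\fing_1(r) = \fing_2(r)$ at a random $r\in_R\FF$. By \Cref{fact:schwartz-zippel}, this forces the underlying $n^2$-dimensional count vectors to agree except with probability $O(n^2/|\FF|)$; and the only way for those counts to match is for each $u$ to appear in $U_v$ exactly when $\{u,v\}\in E$, i.e., $U_v = N(v)$. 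This parallel check costs only $O(\log n)$ additional space and contributes no hcost.

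Perfect completeness is inherited from \Cref{thm:edge-cnt} and from fingerprinting, and soundness follows by a union bound over the two subroutine failures. Each is bounded by $O(1/n)$ once $|\FF| = n^{O(1)}$ is chosen sufficiently large, which completes the plan.
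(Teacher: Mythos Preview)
Your proposal is correct and follows essentially the same approach as the paper: reduce to \InducedEdgeCnt via the identity $3T=\sum_v |E(G[N(v)])|$, have Prover stream the neighborhoods $U_v=N(v)$ at an extra $\tO(m)$ hcost, and certify their correctness by fingerprinting the multiset of ordered pairs against the input edge stream. Your write-up is in fact more explicit than the paper's about how the fingerprint check is implemented, but the content is the same.
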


\begin{restatable}{theorem}{thmtricntadj}\label{thm:tri-cnt-adjlist}
For any $h,v$ with $hv\ge n^2$, there is an $[h,v]$-scheme for \tricntadj.
\end{restatable}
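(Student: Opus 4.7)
The plan is to adapt the polynomial encoding from the $\InducedEdgeCnt$ scheme (\Cref{thm:edge-cnt}) to the adjacency-list model. The triangle count admits the identity $T = \frac{1}{3}\sum_{v \in V} |E(G[N(v)])|$, which realizes it as an \InducedEdgeCnt instance with subsets $U_v = N(v)$. Applying the shaping $V \cong [t]\times[s]$ with $ts = n$ and mimicking \Cref{eq:edgecnt-poly,eq:edgecnt-sum-poly}, we get
\[
6T = \sum_{x_1, x_2 \in [t]} p(x_1, x_2), \quad p(X_1, X_2) := \sum_{v \in V} \sum_{y_1, y_2 \in [s]} \tb_v(X_1, y_1)\, \tb_v(X_2, y_2)\, \ta(X_1, y_1, X_2, y_2),
\]
where $b_v$ is the shaped indicator of $N(v)$ and $a$ is the shaped adjacency array. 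Prover streams a bivariate $\hp$ of individual degree at most $2(t-1)$ claimed to equal $p$; its $O(t^2)$ coefficients give hcost $\tO(t^2)$. After checking $\hp(r_1, r_2) = p(r_1, r_2)$ at random $(r_1, r_2) \in_R \FF^2$, Verifier outputs $\frac{1}{6}\sum_{x_1, x_2 \in [t]} \hp(x_1, x_2)$.

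The main obstacle is that, unlike in the proof of \Cref{thm:edge-cnt}, edges and ``vertex subsets'' are interleaved in the adjacency-list stream: when Verifier finishes processing $v$'s neighbor list, he knows $b_v$ but has not yet observed edges incident to vertices that appear later. The subset-boundary nonlinear sketch of \Cref{eq:nonlinear}, which requires the final value of $\ta(r_1, \cdot, r_2, \cdot)$ at the moment each subset ends, is therefore not directly available. I would sidestep this with a factorization of the target: defining the $s \times s$ matrices $\alpha(y_1, y_2) := \ta(r_1, y_1, r_2, y_2)$ and $\Sigma(y_1, y_2) := \sum_{v \in V} \tb_v(r_1, y_1)\, \tb_v(r_2, y_2)$, we have $p(r_1, r_2) = \sum_{y_1, y_2 \in [s]} \alpha(y_1, y_2)\, \Sigma(y_1, y_2) = \langle \alpha, \Sigma \rangle$. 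Crucially, $\alpha$ and $\Sigma$ can be maintained independently: $\alpha$ is a linear sketch of the adjacency matrix updated once per edge-token via \Cref{fact:dynamicupdate}, while $\Sigma$ accumulates, once per vertex $v$, the rank-one outer product $B_v C_v^\top$, where $B_v(y_1) := \tb_v(r_1, y_1)$ and $C_v(y_2) := \tb_v(r_2, y_2)$ are built up during $v$'s contiguous adjacency list and discarded at its delimiter. Thus each vertex contributes to $\Sigma$ precisely when $b_v$ is fully known, while $\alpha$ reaches its correct final value by end-of-stream regardless of stream order.

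Verifier's workspace is $\tO(s^2)$ for $\alpha$ and $\Sigma$ plus $\tO(s)$ for the transient vectors $B_v, C_v$, giving vcost $\tO(s^2) = \tO(v)$. At stream's end he computes $\langle \alpha, \Sigma \rangle$, checks it against $\hp(r_1, r_2)$, and either outputs the claimed triangle count or rejects. Perfect completeness is immediate; soundness follows from Schwartz--Zippel (\Cref{fact:schwartz-zippel}) applied to the degree-$O(t)$ polynomial $\hp - p$ over a $\poly(n)$-size field. Since $ts = n$ forces $hv = t^2 \cdot s^2 = n^2$, this yields the desired $[h, v]$-scheme for every $h, v$ with $hv \ge n^2$.
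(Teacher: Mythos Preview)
Your proof is correct, but it takes a genuinely different route from the paper. Both arguments start from the identity expressing $T$ via $\sum_v |E(G[N(v)])|$ and both recognize the obstacle that in the adjacency-list model the subsets $N(v)$ and the edges are interleaved, so the $\InducedEdgeCnt$ Verifier cannot run the nonlinear accumulator of \cref{eq:nonlinear} against the \emph{final} $\ta$.

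The paper does not try to recover the true $p(r_1,r_2)$. It simply runs the $\InducedEdgeCnt$ accumulator as-is, using whatever partial $\ta$ is available when vertex $v$'s list ends, and then reinterprets what was computed: the resulting sum is $\sum_v |E(G_v[N(v)])|$, where $G_v$ is the graph of edges seen up to and including $v$'s list. A short combinatorial argument shows that each triangle is counted exactly twice in this sum (at the arrival of its second and third vertices in the stream order), so the Verifier's accumulator is an evaluation of a polynomial whose sum over $[t]^2$ equals $4T$ rather than $6T$. The Prover then sends \emph{that} polynomial.

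Your route instead keeps the original polynomial $p$ and observes the algebraic factorization $p(r_1,r_2)=\langle\alpha,\Sigma\rangle$, where $\alpha(y_1,y_2)=\ta(r_1,y_1,r_2,y_2)$ is a linear sketch of the full adjacency matrix and $\Sigma=\sum_v B_v C_v^\top$ depends only on the neighborhood indicators. Since the two factors can be maintained independently and combined after the stream ends, the interleaving causes no difficulty. This is arguably cleaner: it avoids the combinatorial double-counting argument and, more broadly, shows that the $\InducedEdgeCnt$ scheme extends to any stream where edges and (contiguous) subsets are arbitrarily interleaved, not just the particular interleaving arising from adjacency lists. The paper's approach, on the other hand, is a quicker black-box reuse of the existing scheme with no change to the Verifier's code.
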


We obtain optimal frugal schemes for the fundamental graph problems of maximal independent set (MIS) and for topological sorting and acyclicity testing in directed graph streams.
\begin{restatable}{theorem}{thmmis}\label{thm:mis}
For any $t,s$ with $ts=n$, there is an $[nt,s]$-scheme for MIS. This is optimal up to logarithmic factors, since any $(h,v)$-scheme is known to require $hv = \Omega(n^2)$.
\end{restatable}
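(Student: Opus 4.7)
The plan is to design a protocol in which Prover certifies a maximal independent set $I\subseteq V$ and Verifier independently checks two properties: (i)~that $I$ is independent, using \InducedEdgeCnt as a subroutine, and (ii)~that $I$ is maximal, via a witness-based reduction to the Subset Scheme and fingerprinting. Concretely, after the edge stream of $G$ concludes, Prover streams the claimed set $I$, followed by a list of ordered witness pairs $F=\{(v,w(v)):v\in V\setminus I\}$, where each $w(v)\in I$ is a claimed neighbor of $v$ in $G$. The \MIS certificate is thus an MIS $I$ together with, for each excluded vertex, one reason why it cannot be added.

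Verifier then runs four concurrent checks. First, he invokes the \InducedEdgeCnt protocol of \Cref{thm:edge-cnt} with $\ell=1$ and $U_1=I$, rejecting unless the returned count equals~$0$; since $nt\cdot s=n^2$, this fits an $[nt,s]$-scheme. Second, he uses the Subset Scheme of \Cref{fact:set-schemes} on the universe of $\binom{n}{2}$ possible edges to verify $F\subseteq E(G)$, again well within an $[nt,s]$-scheme. Third, he maintains a fingerprint of the multiset of first-coordinates of $F$ and compares it to $\fing_V(r)-\fing_I(r)$, which he can accumulate as $I$ and $F$ stream past using $O(\log n)$ space; a match certifies that the ``$v$''-endpoints of $F$ form exactly the set $V\setminus I$, each appearing once. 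Fourth, he uses the Subset Scheme a second time on universe $[n]$ to confirm that every second-coordinate of $F$ lies in $I$, which fits trivially since the required budget is only $hv\ge n$. Streaming $I$ and $F$ themselves costs $\tilde{O}(n)$ bits, dominated by the $\tilde{O}(nt)$ hcost of the subroutines, while Verifier's space stays within $\tilde{O}(s)$.

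Perfect completeness is immediate: an honest Prover submits a true MIS and a genuine neighbor witness for every external vertex, satisfying each of the four checks deterministically. Soundness follows by a union bound over the constituent sub-protocols, each of which has error $o(1)$ by \Cref{fact:schwartz-zippel} once $|\FF|$ is chosen polynomially large. I expect the principal obstacle to be the fourth check: Verifier must certify that each purported witness $w(v)$ actually lies in the prover-supplied set $I$, despite lacking space to store $I$ explicitly. The Subset Scheme of \Cref{fact:set-schemes} is precisely the right primitive here, since it checks set inclusion between two prover-streamed sets over a common universe without storing either. Finally, the optimality claim follows immediately from the stated $hv=\Omega(n^2)$ lower bound for the scheme model together with $nt\cdot s = n(ts)=n^2$, matching up to polylogarithmic factors.
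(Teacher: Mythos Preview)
Your proposal is correct and follows essentially the same approach as the paper: certify independence by invoking \InducedEdgeCnt on the single set $I$ and checking the returned count is~$0$, and certify maximality by having Prover supply one witness neighbor in $I$ for each $v\in V\setminus I$, then verifying the witness edges via the Subset Scheme, the coverage of $V\setminus I$ via fingerprinting, and the membership of the witnesses in $I$ via a set-containment primitive. The only cosmetic difference is that the paper phrases the last check as ``witness vertices are disjoint from $V\setminus I$'' and invokes the Intersection Scheme, whereas you phrase it as ``witness vertices lie in $I$'' and invoke the Subset Scheme; since $V=I\sqcup(V\setminus I)$, these are equivalent and both primitives are supplied by \Cref{fact:set-schemes}.
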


\begin{restatable}{theorem}{thmtopo}\label{thm:topo}
For any $t,s$ with $ts=n$, there is an $[nt,s]$-scheme for \Topo. This is optimal up to logarithmic factors,
  since any $(h,v)$-scheme is known to require $hv = \Omega(n^2)$.
\end{restatable}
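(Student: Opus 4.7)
The plan is for Prover to send a purported topological ordering $\pi = (v_1, \ldots, v_n)$ and for Verifier to check it via a light adaptation of the \InducedEdgeCnt primitive from \Cref{thm:edge-cnt} (in its directed-edge variant, which has the same cost). First, Verifier uses a standard permutation fingerprint (cf.\ \Cref{sec:prelims}) to confirm that $\{v_1, \ldots, v_n\} = V$ as a multiset; this costs $\tO(1)$ additional vcost and no hcost.

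The core task is to verify that no edge is backward under $\pi$. Let $P_k := \{v_1, \ldots, v_k\}$ and $e_k := |E(G[P_k])|$, counting directed edges. The key observation is that $\pi$ is a valid topological sort iff $e_k - e_{k-1} = d^-(v_k)$ for every $k \in [n]$, where $d^-(v_k)$ is the in-degree of $v_k$ in $G$. Indeed, $e_k - e_{k-1}$ counts directed edges incident to $v_k$ whose other endpoint lies in $P_{k-1}$ (both in- and out-going), so the equality forces all of $v_k$'s in-neighbors to lie in $P_{k-1}$ and none of its out-neighbors to lie there; self-loops (always backward) are caught as well. To verify all $n$ equations at once, Verifier picks $r \in_R \FF$ and checks the single fingerprint identity $\sum_{k=1}^n r^k (e_k - e_{k-1}) = \sum_{k=1}^n r^k d^-(v_k)$, which by \Cref{fact:schwartz-zippel} has soundness error at most $n/|\FF|$.

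To compute the left-hand side, I adapt the \InducedEdgeCnt scheme to the nested, weighted prefix subsets $P_1 \subset \cdots \subset P_n$ with weights $r^k$. Since $P_k = P_{k-1} \cup \{v_k\}$, Verifier maintains the indicator-extension arrays $\tb_k(r_1, z)$ and $\tb_k(r_2, z)$ incrementally in $\tO(s)$ space, applying exactly one unit-impulse update per $v_k$; folding the weights $r^k$ into the polynomial of \cref{eq:edgecnt-sum-poly} only changes its coefficients, not its degree in $X_1, X_2$. Thus the adapted scheme inherits the $[nt, s]$ cost of \Cref{thm:edge-cnt}, and the subset description reduces to simply streaming $\pi$ once in $\tO(n)$ bits. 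For the right-hand side, Prover streams $d^-(v_k)$ alongside each $v_k$ (adding $\tO(n)$ hcost), and Verifier validates these in-degrees by maintaining a second fingerprint $F := \sum_{(u,v)\in E} y^v$ during the edge stream and checking $F = \sum_k y^{v_k} d^-(v_k)$ during the help, all in $\tO(1)$ extra space, while simultaneously accumulating $\sum_k r^k d^-(v_k)$.

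The main obstacle is formally justifying the adaptation of \InducedEdgeCnt to nested, weighted subsets with an implicit stream description, and verifying that the weighted sum-check and the two fingerprints compose cleanly into perfect completeness and soundness error $O(n/|\FF|) = o(1)$, all within the claimed $\tO(nt)$ hcost and $\tO(s)$ vcost. The tight optimality claim then follows immediately from the $hv = \Omega(n^2)$ lower bound of \cite{ChakrabartiCMT14}.
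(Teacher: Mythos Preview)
Your approach is correct, but the paper takes a cleaner route. Instead of \InducedEdgeCnt, the paper invokes \CrossEdgeCount (\Cref{thm:crossedge-cnt}) with $U_i = \{v_1,\dots,v_i\}$ and $W_i = \{v_{i+1}\}$; for the directed adjacency matrix this sums exactly the edges going from a prefix into the next vertex, i.e., the total number of \emph{forward} edges under $\pi$. Verifier then simply checks that this count equals $m$. That is the whole test---one scalar compared to a known value---and the nested structure $U_{i+1}=U_i\cup\{v_{i+1}\}$ already gives the incremental $\tO(s)$-space maintenance you describe.

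Your detour through the identities $e_k-e_{k-1}=d^-(v_k)$ also works (your minimality argument is right: at the smallest $k$ that is the target of a backward edge, $e_k-e_{k-1}<d^-(v_k)$), but it costs you extra moving parts: a weighted variant of the edge-counting polynomial, Prover-supplied in-degrees, and a second fingerprint to certify them. None of this is needed once you notice that ``number of forward edges $=m$'' is a single verifiable equality. One minor slip: self-loops are \emph{not} caught by your test, since a loop on $v_k$ contributes $1$ to both $e_k-e_{k-1}$ and $d^-(v_k)$; this is harmless here only because \Topo inputs are promised to be DAGs.
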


\begin{restatable}{corollary}{thmacyc}\label{cor:acyc}
For any $t,s$ with $ts=n$, there is an $[nt,s]$-scheme for \Acyc. This is optimal up to logarithmic factors,
  since any $(h,v)$-scheme is known to require $hv = \Omega(n^2)$.
\end{restatable}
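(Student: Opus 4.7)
The plan is to reduce \Acyc to a two-case protocol: Prover first declares whether $G$ is acyclic, and Verifier invokes different sub-schemes for the two cases. Both branches will reuse tools already established in the paper---\Cref{thm:topo} for the acyclic case and the subset-scheme of \Cref{fact:set-schemes} for the cyclic case---so the work is really just to glue them together while checking that Prover cannot lie about her choice of branch.

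Concretely, I would have Prover begin by sending a single bit for her claim. If she claims $G$ is acyclic, she streams a purported topological ordering $\pi$ of $V$, and Verifier runs the $[nt,s]$-scheme of \Cref{thm:topo} to certify that $\pi$ orients every edge of $G$ forward. If she instead claims $G$ is cyclic, she streams a purported directed cycle $C = v_1 \to v_2 \to \cdots \to v_k \to v_1$ with $k \le n$. Verifier locally confirms that $C$ is structurally a cycle (a trivial check on the transmitted sequence) and then invokes the subset-scheme of \Cref{fact:set-schemes} to certify that all $k$ directed edges of $C$ lie in $E(G)$. Since the universe of directed edges has size $n^2$, I would take subset-scheme parameters $h' = nt$ and $v' = s$ with $ts = n$, so that $h'v' = n^2$; the $O(n\log n)$ bits used to transmit $C$ get absorbed into $\tO(nt)$, giving cost $[nt,s]$ in both branches.

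Completeness is immediate in each branch by choice of witness. For soundness, if Prover dishonestly claims cyclicity on an acyclic $G$, any purported cycle must include at least one non-edge of $G$, which the subset-scheme rejects with high probability. Conversely, if she dishonestly claims acyclicity on a cyclic $G$, no ordering is consistent with every edge, so the \Topo scheme rejects any $\pi$ with high probability. The main thing I would want to double-check is that the soundness guarantee of \Cref{thm:topo} genuinely kicks in for cyclic inputs rather than being merely a promise-DAG guarantee; I expect this to be automatic, because the \Topo scheme ultimately certifies an edge-wise predicate of the form ``$\pi(u) < \pi(v)$ for every edge $(u,v)$,'' which is universally unsatisfiable on cyclic digraphs, so Verifier's rejection does not depend on any DAG promise. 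The matching $\Omega(n^2)$ lower bound on $hv$ is inherited from the reductions of \cite{ChakrabartiCMT14}, establishing optimality up to logarithmic factors.
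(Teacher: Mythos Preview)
Your proposal is correct and follows essentially the same two-branch approach as the paper: for the cyclic case, Prover exhibits a cycle whose edges are validated via the subset scheme of \Cref{fact:set-schemes}, and for the acyclic case, Prover supplies a topological ordering validated by the \Topo scheme of \Cref{thm:topo}. Your explicit soundness discussion for the cross-branch lies (and the observation that the \Topo check is an unconditional ``all edges forward'' test, not a promise-DAG guarantee) is a nice addition that the paper leaves implicit; one tiny caveat is that ``structurally a cycle'' should be read as ``structurally a closed walk,'' since checking simplicity in $o(n)$ space is nontrivial, but a closed walk of genuine edges already certifies cyclicity, so this does not affect correctness.
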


Finally, we show that we can apply edge-counting schemes to count the number of connected components of a graph.  
\begin{restatable}{theorem}{thmconn}\label{thm:connectivity}
For any $t,s$ with $ts=n$, there is an $[nt,s]$-scheme for counting the number of connected components of an input graph.
\end{restatable}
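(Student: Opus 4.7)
The plan is to mimic the frugal \MaxMatching protocol of \Cref{thm:maxmatch-frugal}: Prover will stream a candidate list of connected components together with a spanning-forest certificate, and Verifier will combine \InducedEdgeCnt with a few auxiliary checks to verify both that the components are pairwise disconnected in $G$ and that each of them is internally connected. Let $\ell$ denote the output value. Prover streams a partition $\langle U_1,\ldots,U_\ell\rangle$ of $V$, one component at a time, with a designated root $r_i$ listed first in each $U_i$; this is followed by a directed edge set $\vec F$ on $V$ of size exactly $n-\ell$, intended to be a union of out-arborescences rooted at the $r_i$'s.

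In parallel, Verifier performs the following sub-checks. Two instances of \InducedEdgeCnt (\Cref{thm:edge-cnt}) with the subsets $\langle U_i\rangle$: one on the input graph $G$, checking $\sum_i |E(G[U_i])|=|E(G)|$, which certifies pairwise-disconnectedness of the $U_i$'s in $G$; and one on the underlying undirected version of $\vec F$, checking $\sum_i |E(\vec F[U_i])|=|\vec F|=n-\ell$, which certifies that every edge of $\vec F$ lies inside some $U_i$. The Subset Scheme (\Cref{fact:set-schemes}) then certifies that the undirected edges of $\vec F$ lie in $E(G)$, and the \Acyc scheme (\Cref{cor:acyc}) applied to $\vec F$ certifies that $\vec F$ is a DAG. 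Finally, an $O(\log n)$-space fingerprinting bundle checks that (a) the $U_i$'s form a partition of $V$, and (b) the multiset of heads of edges of $\vec F$ equals $V\setminus\{r_1,\ldots,r_\ell\}$; this last polynomial-identity test is equivalent to saying every non-root has in-degree exactly $1$ in $\vec F$ and every root has in-degree $0$.

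When all sub-checks pass, the DAG condition together with ``in-degree $\le 1$ everywhere'' forces the underlying undirected $F$ to be a forest. Indeed, any connected component of $F$ that contained a cycle would have at least as many edges as vertices, so the corresponding sub-digraph of $\vec F$ would have total in-degree at least its size; combined with the in-degree-$\le 1$ bound this forces each vertex in the component to have in-degree exactly $1$ in $\vec F$, making the sub-digraph a functional graph on a finite set and so producing a directed cycle, contradicting the \Acyc check. Hence $F$ is a forest with $n-\ell$ edges, all of which lie inside some $U_i$. Letting $f_i:=|F\cap(U_i\times U_i)|$, we have $f_i\le|U_i|-1$ and $\sum_i f_i=n-\ell=\sum_i(|U_i|-1)$, so equality holds term-by-term and $F\cap(U_i\times U_i)$ is a spanning tree of $U_i$. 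Combined with $F\subseteq E(G)$, this certifies that each $U_i$ is internally connected in $G$, so the $U_i$'s are exactly the connected components of $G$ and $\ell$ is the correct output.

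Each sub-scheme is an $[nt,s]$-scheme for $ts=n$: \InducedEdgeCnt and the Subset Scheme have the form $[h,v]$ with $hv=n^2$, and \Acyc is already $[nt,s]$. The streamed certificates $\langle U_i\rangle$, the roots, and $\vec F$ use $O(n\log n)$ bits overall, which fits within $\tO(nt)$ hcost, while the fingerprints contribute only $O(\log n)$ space. Completeness is immediate and soundness follows from a union bound over the error probabilities of the constantly many sub-schemes. The main obstacle in the write-up is arguing the forest structure: a plain \Acyc check on $\vec F$ is not enough, because an orientation of a triangle is itself a DAG, so the in-degree fingerprint check is essential for bootstrapping acyclicity of the directed certificate into acyclicity of the undirected forest.
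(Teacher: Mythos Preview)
Your argument is correct and the cost accounting checks out, but the route you take for the ``internal connectivity'' half is genuinely different from the paper's. The paper handles that half by invoking, as a black box, the $[nt,s]$ connectivity scheme of Chakrabarti~et~al.~\cite{ChakrabartiCMT14}: once \InducedEdgeCnt certifies that the purported components $U_1,\dots,U_\ell$ are pairwise disconnected, the cited scheme is run to certify that each $U_i$ is internally connected, and the two checks together pin down the count. Your version instead builds an explicit spanning-forest certificate and verifies it entirely with tools developed \emph{inside} this paper---a second \InducedEdgeCnt instance on $\vec F$, the \Acyc scheme, the Subset scheme, and an in-degree fingerprint. This makes your proof self-contained (no reliance on the external \cite{ChakrabartiCMT14} primitive) at the price of a somewhat longer argument; the observation that directed acyclicity plus the in-degree-$\le 1$ fingerprint forces the undirected $F$ to be a forest is a nice touch, and is genuinely needed, as you note.

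One small wrinkle worth fixing in a final write-up: your stated proof-stream order lists the partition $\langle U_1,\dots,U_\ell\rangle$ \emph{before} $\vec F$, but the \InducedEdgeCnt protocol (\Cref{thm:edge-cnt}) requires the edges of the graph being queried to precede the vertex subsets. So to run \InducedEdgeCnt on $\vec F$ you need Prover to stream $\vec F$ first, then the $U_i$'s (and then the topological order for the \Acyc sub-check, then the sum-check polynomials). With that reordering all sub-schemes compose cleanly in a single pass, and the costs remain $[nt,s]$ as you claim.
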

\else
\iflipics
\section{Applications of Edge-Counting to Other Graph Problems}\label{app:otherapps}
\else\subsection{Applications to Other Graph Problems}\label{sec:otherapps}
\fi

In \Cref{sec:maxmatch}, we used a scheme for \InducedEdgeCnt to obtain an optimal frugal scheme for \MaxMatching. Below, we give applications of edge-counting schemes to several other well-studied graph problems.
% \else
% Here, we show how the protocols for \InducedEdgeCnt and \CrossEdgeCount can be applied to obtain efficient schemes for a variety of graph problems studied in the streaming setting.
% \fi

\mypar{Triangle-Counting} A scheme for \triCount follows immediately from \InducedEdgeCnt. For $v\in [n]$, set the subsets $U_v = N(v)$, the neighborhood of vertex $v$. Then, observe that \InducedEdgeCnt returns three times the total number of triangles in the graph. The sets $U_v$, however, need to be sent in some order by Prover, and so the additional hcost to \InducedEdgeCnt is $\tO\left(\sum_v |N(v)|\right)=\tO(m)$. As Prover basically repeats the edge stream in a different order, we can check if it's consistent with the input stream by fingerprinting (see \Cref{sec:prelims}). Hence, we get an $[m+h,v]$-scheme for any $h,v$ with $hv=n^2$.

\iflipics
\thmtricntsparse*
\else
\begin{theorem}\label{thm:tri-cnt-sparse}
For any $h,v$ with $hv\ge n^2$, there is an $[m+h,v]$-scheme for \triCount. In particular, there is an $[m,n^2/m]$-scheme.
\end{theorem}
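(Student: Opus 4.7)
The plan is to reduce \triCount to \InducedEdgeCnt (\Cref{thm:edge-cnt}) by using neighborhoods as the induced subsets, exactly as sketched just before the theorem. For each $v \in [n]$, take $U_v := N(v)$. A triangle $\{a,b,c\}$ contributes edge $\{b,c\}$ to $G[U_a]$, edge $\{a,c\}$ to $G[U_b]$, and edge $\{a,b\}$ to $G[U_c]$, and these are the only contributions that triangle makes; hence $\sum_{v=1}^n |E(G[U_v])| = 3T$, where $T$ is the number of triangles. Dividing the output of the \InducedEdgeCnt scheme by $3$ therefore yields $T$.

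Concretely, after the edge stream ends, Prover streams the claimed neighborhoods $N(1), N(2), \ldots, N(n)$, in that order and with delimiters, and Verifier feeds them together with the already-streamed edges into the \InducedEdgeCnt scheme of \Cref{thm:edge-cnt}. Since $\sum_v |N(v)| = 2m$, listing the neighborhoods contributes $O(m \log n)$ bits to the hcost on top of the $\tO(h)$ bits needed by \InducedEdgeCnt, for a total hcost of $\tO(m+h)$; Verifier's space usage stays $\tO(v)$, where $hv = n^2$. If Prover is honest, Verifier outputs $M/3$, where $M$ is the result of the \InducedEdgeCnt scheme, and perfect completeness is immediate.

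The main obstacle is ensuring that the purported neighborhoods truly match $G$; this is the only place where soundness of the reduction could fail. I handle it by a lightweight fingerprint in the spirit of \Cref{sec:prelims}. Verifier samples $r \in_R \FF$ for a field of size $\poly(n)$ and maintains two $O(\log n)$-bit accumulators: one incremented by $r^{\pi(u,v)} + r^{\pi(v,u)}$ for each input edge $\{u,v\}$, where $\pi : V \times V \to [n^2]$ is a fixed ordering, and one incremented by $r^{\pi(v,v')}$ each time Prover lists $v'$ as a member of the claimed $N(v)$. Equality of the two accumulators is equivalent to the multiset of claimed ordered pairs $(v,v')$ agreeing with the multiset $\{(u,v),(v,u) : \{u,v\} \in E\}$, i.e., to every $U_v$ being exactly $N(v)$. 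By the Schwartz--Zippel Lemma (\Cref{fact:schwartz-zippel}), applied to the difference of the two fingerprint polynomials in $r$ (degree at most $n^2$), a dishonest Prover is caught with probability $\ge 1 - n^2/|\FF|$.

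Combining the two checks, perfect completeness holds and the soundness error is the union of the soundness errors of \Cref{thm:edge-cnt} and of the fingerprint test, both of which are $o(1)$ for $|\FF| = n^{\Theta(1)}$ chosen large enough; standard amplification drives the total error below $1/3$. The hcost and vcost are as claimed, establishing the theorem.
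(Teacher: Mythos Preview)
Your proposal is correct and follows essentially the same approach as the paper: reduce to \InducedEdgeCnt with $U_v = N(v)$, observe the sum equals $3T$, have Prover replay the edges grouped by endpoint (costing $\tO(m)$ bits), and use fingerprinting to certify that the claimed neighborhoods match the input stream. Your write-up is slightly more explicit about the fingerprint mechanics, but the argument is the same.
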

\fi

The only other scheme for \triCount achieving $hv=n^2$ tradeoff with vcost $=o(n)$ was an $[n^2,1]$-scheme by Chakrabarti et al. \cite{ChakrabartiCMT14}. Our result generalizes it for any graph with $m$ edges, thus achieving a better hcost and a smooth tradeoff for sparse graphs.  

We note that in the above scheme, Prover needs to send the sets $U_v=N(v)$ because the \InducedEdgeCnt protocol needs the neighborhood of each vertex to arrive contiguously in the stream. This is essentially the input stream order in the \emph{adjacency-list} or the \emph{vertex-arrival} streaming model. Thus, for the problem \tricntadj, Verifier gets the $U_v$s in the desired order as part of the input; so Prover need not repeat them, saving the huge $\tO(m)$ hcost. However, there is another issue in directly applying the \InducedEdgeCnt subroutine in this case. In the definition of \InducedEdgeCnt, we assume that all the edges in the graph arrive before the vertex subsets $U_i$. Here, the $U_v$s and the edges arrive in interleaved manner (although each $U_v$ arrives contiguously). But we show that we can still apply the scheme for \InducedEdgeCnt to get the desired output. Let the order in which the $U_v$s appear be $\langle U_1, \ldots U_n\rangle$, and let $G_v$ denote the graph consisting of edges seen till the arrival of $U_v=N(v)$. Then, applying \InducedEdgeCnt, what we count is 
\begin{align*}
  \sum_{v\in[n]} |E(G_v[N(v)])|
  = \sum_{v\in[n]} \#\{\text{triangles incident on $v$ in } G_v\}
  = 2T \,.
\end{align*}

The last equality follows since every triangle whose vertices appear in the order $\langle v_1,v_2,v_3\rangle$ will be counted twice: once when $v_2$ arrives and once when $v_3$ arrives.  We therefore obtain the following theorem.

\iflipics
\thmtricntadj*
\else
\begin{theorem}\label{thm:tri-cnt-adjlist}
For any $h,v$ with $hv\ge n^2$, there is an $[h,v]$-scheme for \tricntadj.
\end{theorem}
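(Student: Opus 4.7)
The plan is to apply the \InducedEdgeCnt scheme of \Cref{thm:edge-cnt} with the specific choice of subsets $U_v = N(v)$ for each $v \in [n]$. In the adjacency-list (vertex-arrival) model, the neighborhood $N(v)$ of each vertex arrives contiguously as part of the input, so these subsets are already available without Prover re-transmitting them. This saves the additive $\tilde{O}(m)$ hcost that appears in \Cref{thm:tri-cnt-sparse}. The only remaining issue is that \InducedEdgeCnt was formulated with all edges arriving \emph{before} the vertex subsets, whereas here edges and subsets arrive in an interleaved fashion (each $N(v)$ coming with the edges incident to $v$).

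First I would verify that the protocol in \Cref{sec:induced-edge-cnt} tolerates this interleaving without essential modification. Inspecting Verifier's behavior there, the quantity $\tilde{a}(r_1, w, r_2, z)$ is a running streaming sketch, updated by every edge token via \Cref{fact:dynamicupdate}, and is read off only at the moments when delimiters for each $U_i$ arrive. Hence, writing $G_v$ for the subgraph of edges received by the time $N(v)$ finishes streaming, and $\tilde{a}_v$ for the $\mathbb{F}$-extension of its adjacency array, the final value of Verifier's accumulator is $p(r_1, r_2)$, where
\[
  p(X_1, X_2) \;=\; \sum_{v \in [n]} \sum_{y_1, y_2 \in [s]} \tilde{b}_v(X_1, y_1)\, \tilde{b}_v(X_2, y_2)\, \tilde{a}_v(X_1, y_1, X_2, y_2).
\]
This is a well-defined polynomial of degree $O(t)$ in each variable. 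Prover is asked to stream its coefficients as $\hat{p}(X_1, X_2)$; Verifier checks $\hat{p}(r_1, r_2) = p(r_1, r_2)$ and, if the test passes, outputs $\tfrac{1}{2}\sum_{x_1, x_2 \in [t]} \hat{p}(x_1, x_2)$. Perfect completeness and soundness error $O(t)/|\mathbb{F}| < 1/n$ both follow exactly as in \Cref{thm:edge-cnt} via the Schwartz--Zippel lemma (\Cref{fact:schwartz-zippel}).

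Second, I would check that Verifier's output equals $T$. The claim reduces to showing that $\sum_{v \in [n]} |E(G_v[N(v)])| = 2T$. Fix a triangle on $\{a,b,c\}$ whose neighborhoods arrive in the order $N(a), N(b), N(c)$. When $N(b)$ is processed, the edge $ac$ has already been seen (it arrived within $N(a)$) and both endpoints lie in $N(b)$, so it contributes $1$ to $|E(G_b[N(b)])|$. Symmetrically, when $N(c)$ is processed, the edge $ab$ contributes $1$ to $|E(G_c[N(c)])|$. When $N(a)$ is processed, the only candidate edges of the triangle are $ab$ and $ac$ (since $bc \notin G_a$), but neither has both endpoints in $N(a)$, so the contribution is $0$. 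Thus every triangle is counted exactly twice, as claimed.

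The main obstacle I expect is precisely the subtlety of the interleaved version of \InducedEdgeCnt: one must be careful that the polynomial $p$ whose evaluation Verifier is computing in a streaming fashion matches the polynomial whose coefficients Prover is sending, even though the ``edge'' part of $p$ is time-indexed by $v$ rather than a single fixed adjacency array. Once one accepts that Verifier's state simply \emph{is} a nonlinear sketch of the interleaved stream with the semantics above, the rest is bookkeeping: the hcost from streaming $\hat{p}$ is $\tilde{O}(t^2)$ and the vcost from storing the $s \times s$ array of evaluations of $\tilde{a}$ (plus the two length-$s$ arrays for $\tilde{b}_v$) is $\tilde{O}(s^2)$, yielding a $[t^2, s^2]$-scheme for any $ts = n$, i.e.\ an $[h,v]$-scheme for any $hv \geq n^2$.
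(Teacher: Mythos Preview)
Your proposal is correct and follows essentially the same approach as the paper: apply the \InducedEdgeCnt scheme with $U_v = N(v)$, observe that the adjacency-list stream already delivers these subsets contiguously (eliminating the $\tO(m)$ retransmission cost), verify that the interleaved variant of the protocol still evaluates a well-defined low-degree polynomial, and argue that $\sum_v |E(G_v[N(v)])| = 2T$ by the same first/second/third-vertex case analysis. Your treatment of the interleaving (making the time-indexed $\tilde a_v$ explicit in the definition of $p$) is more detailed than the paper's, but the argument is the same; note only that since $\tfrac12\sum_{x_1,x_2}\hat p(x_1,x_2)$ recovers $M = 2T$, Verifier should output half of that quantity to obtain $T$.
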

\fi
% Our results show that with the help of a Prover, one can solve \triCount in this model with an $[h,v]$-scheme for any $hv=n^2$ (By \Cref{thm:edge-cnt}).
% 

\mypar{Maximal Independent Set (MIS)} Recent works \cite{AssadiCK19, CormodeDK19} have studied the problem of finding a maximal independent set in the basic data streaming model. They show a lower bound of $\Omega(n^2)$ for a one-pass streaming algorithm. This implies a lower bound of $hv\geq n^2$ for any $[h,v]$-scheme for MIS. Hence, we aim for $hv=n^2$ and describe a frugal scheme using \InducedEdgeCnt. Since the output size of the problem can be $\Theta(n)$, it would only make sense in the frugal regime if the Prover sends the output as a stream and the Verifier checks that it is valid using $o(n)$ space.  

Let $U$ be an MIS in the graph $G$. Prover sends $U$ and Verifier uses \InducedEdgeCnt to count the number of edges in $G[U]$ and verifies that it equals $0$. If the check passes, $U$ is indeed an independent set. It remains to check the maximality of $U$. If $U$ is maximal, then, for each vertex $v$ in $G\setminus U$, there must be a vertex $u$ in $U$, such that $(v,u)$ is an edge. Prover points out such a vertex $u\in U$ for each $v\in G\setminus U$. Let $F$ denote this set of $|G\setminus U|$ purported edges. Now, we use Subset Scheme (Fact \ref{fact:set-schemes}) to verify that $F\subseteq E$, i.e., all these edges are actually present in $G$. We can use fingerprinting (as in \Cref{sec:prelims}) to check that $F$ contains an edge for each vertex in $G\setminus U$ and the Intersection Scheme to verify that the set of their partners is disjoint from $G\setminus U$, i.e., belong to $U$. Thus, the additional hcost to \InducedEdgeCnt, Subset, and Intersection Schemes is $\tO(n)$, the number of bits required to send $U$ and $F$. Therefore, by \Cref{thm:edge-cnt}, we get an $[n+h,v]$-scheme for MIS for any $h,v$ with $hv=n^2$. Thus, our scheme is optimal for the frugal regime.

\iflipics
\thmmis*
\else
\begin{theorem}\label{thm:mis}
For any $t,s$ with $ts=n$, there is an $[nt,s]$-scheme for MIS. This is optimal up to logarithmic factors, since any $(h,v)$-scheme is known to require $hv = \Omega(n^2)$.
\end{theorem}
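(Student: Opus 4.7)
The plan is to reduce verifying an MIS to two sub-checks---independence and maximality---each of which fits within an $[h,v]$ budget with $hv=n^2$, thereby yielding the claimed $[nt,s]$-scheme when $ts=n$. Since a maximal independent set can have up to $\Theta(n)$ vertices, Prover must stream the candidate set $U$, contributing $\tO(n)$ to the help cost; combined with the $\Omega(n^2)$ lower bound inherited from basic streaming~\cite{AssadiCK19, CormodeDK19}, this forces $h\ge n$ and rules out a laconic scheme, so $[nt,s]$ is the right regime to target.

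For the independence sub-check, I would apply \Cref{thm:edge-cnt} to the single-set instance $U$ and have Verifier accept iff the returned count $|E(G[U])|$ equals $0$. For maximality, Prover streams, for every $v\in V\setminus U$, a purported witness edge $\{v,\pi(v)\}$ with $\pi(v)\in U$; call this collection $F$, of size at most $n$. Verifier runs three sub-schemes concurrently: (a)~the Subset Scheme of \Cref{fact:set-schemes} on $(F,E)$ to certify $F\subseteq E$; (b)~a fingerprint check that the first coordinates of $F$ form the multiset $V\setminus U$, exploiting the identity $\fing_{V\setminus U}=\fing_V-\fing_U$ so that the target fingerprint can be computed on the fly as $U$ streams in; and (c)~the Intersection Scheme of \Cref{fact:set-schemes} applied to the second coordinates of $F$ and to $V\setminus U$, verifying that this intersection is empty. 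Each primitive admits an $[h,v]$-scheme with $hv=n^2$ (or less, for the vertex-indexed ones), so all four invocations fit inside the $[nt,s]$ budget.

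The main obstacle is orchestration rather than any new combinatorial idea: Verifier has only $\tO(s)=o(n)$ space and thus cannot explicitly store $U$, $V\setminus U$, or $F$ at any time. The single stream of $U$ sent by Prover must therefore be \emph{fanned out} simultaneously into the \InducedEdgeCnt sketch, the fingerprint-updater for $V\setminus U$, and the Intersection Scheme's sketch; similarly, the stream $F$ must be consumed once to drive both the Subset Scheme and the remaining sub-sketches. Completeness is immediate for an honest Prover: the edge count comes out to $0$ and all three maximality sub-checks pass. Soundness follows from a union bound over the $O(1/n)$ failure probabilities of the four constituent schemes. The total help cost is $\tO(|U|+|F|+nt)=\tO(nt)$ and the verification space is $\tO(s)$, matching the statement; optimality follows from the cited $\Omega(n^2)$ lower bound on any $(h,v)$-scheme for MIS.
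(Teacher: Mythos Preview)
Your proposal is correct and matches the paper's proof essentially step for step: independence via \InducedEdgeCnt on the single set $U$, and maximality via a witness-edge set $F$ checked by the Subset Scheme ($F\subseteq E$), fingerprinting (first coordinates exhaust $V\setminus U$), and the Intersection Scheme (second coordinates avoid $V\setminus U$). Your added discussion of stream fan-out and orchestration is a nice elaboration that the paper leaves implicit.
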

\fi

\mypar{Acyclicity Testing and Topological Sorting}
We now turn to the \Acyc problem in directed graphs.
%, which has been studied in the basic and annotated data streaming models \cite{ChakrabartiGMV20,CormodeMT13}. We give a new protocol for this problem.
It is easy to prove that a graph is \textit{not} acyclic by showing the existence of a cycle~$C$. Verifier checks that $C\subseteq E$ using Subset Scheme (Fact \ref{fact:set-schemes}). Hence, this can be done using an $[h,v]$-scheme for any $h\geq |C|$. 

The more interesting case is when the graph is indeed acyclic. Note that a directed graph is acyclic if and only if it has a topological ordering. Thus, it suffices to show a valid topological ordering of the vertices. \Topo is a fundamental graph algorithmic problem of independent interest. \Acyc has a one-pass lower bound of $\Omega(n^2)$ in the basic data streaming model. Recently, Chakrabarti et al. \cite{ChakrabartiGMV20} showed that \Topo also requires $\Omega(n^2)$ space in one pass. These translate to a lower bound of $hv\geq n^2$ for any $[h,v]$-scheme for these problems. Hence, we aim for a scheme with $hv=n^2$ and design a protocol for \Topo in the frugal regime. Since this problem has output size $\tilde{\Theta}(n)$, we aim for a protocol where Prover sends a topological ordering of the graph and Verifier checks its validity using $o(n)$ space. Moreover, this protocol can be used for the YES case of \Acyc. 

Verifier uses \CrossEdgeCount to solve this. As Prover sends the topological order $\langle v_1,\ldots,v_n\rangle$, for each $i\in[n-1]$, Verifier sets $U_i = \{v_1,\ldots,v_i\}$ and $W_i=\{v_{i+1}\}$ for \CrossEdgeCount. Thus, the protocol counts precisely the number of forward edges induced by the ordering. If it equals $m$, then the ordering is indeed a valid topological order. Note that since $U_{i+1}=U_i\cup \{v_{i+1}\}$, Prover doesn't need to send $U_{i+1}$ afresh; just $v_{i+1}$ is enough for Verifier to update his sketch. Verifier can use fingerprinting (see \Cref{sec:prelims}) to make sure that precisely the set $V$ was sent in some order. Hence, the additional hcost to \CrossEdgeCount is the number of bits required to express the topological order, i.e., $\tO(n)$. Therefore, by \Cref{thm:crossedge-cnt}, we get a $[n+h,v]$-scheme for any $hv=n^2$.  

\iflipics
\thmtopo*
\thmacyc*
\else
\begin{theorem}\label{thm:topo}
For any $t,s$ with $ts=n$, there is an $[nt,s]$-scheme for \Topo. This is optimal up to logarithmic factors,
  since any $(h,v)$-scheme is known to require $hv = \Omega(n^2)$.
\end{theorem}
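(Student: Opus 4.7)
The plan is to reduce \Topo to \CrossEdgeCount in a streaming-friendly way. Prover will send a claimed topological ordering $\pi = \langle v_1, v_2, \ldots, v_n\rangle$ of $V$; this takes $\tO(n)$ bits. Verifier must certify two things: (i) $\pi$ is a permutation of $V = [n]$, and (ii) every edge of $G$ is a forward edge with respect to $\pi$, meaning that if $(u,w) \in E$ then $u$ precedes $w$ in $\pi$.

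For (i), I would run a fingerprint check as described in \Cref{sec:prelims}: Verifier maintains the fingerprint of the multiset $\{v_1, \ldots, v_n\}$ as these vertices arrive in the help stream, and compares with the fingerprint of $[n]$, which he can compute locally. This uses $O(\log n)$ space. For (ii), the idea is to count forward edges using \CrossEdgeCount. Set $U_i = \{v_1, \ldots, v_i\}$ and $W_i = \{v_{i+1}\}$ for each $i \in [n-1]$; note that $U_i \cap W_i = \varnothing$ because the $v_j$'s are distinct. Then $\sum_{i=1}^{n-1} |E(G[U_i, W_i])|$ counts exactly the number of forward edges, since each edge $(v_a, v_b)$ with $a < b$ is counted exactly once (when $i = b-1$). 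Thus the ordering is a valid topological order if and only if this sum equals $m$, which Verifier knows from a simple edge counter he maintains on the input stream.

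A key efficiency observation is that Prover does not need to re-stream the full sets $U_i$: since $U_{i+1} = U_i \cup \{v_{i+1}\}$, it suffices to stream $\pi$ once and have Verifier's \CrossEdgeCount sketch absorb $v_{i+1}$ into its running sketch for $U$ while also using $v_{i+1}$ as the singleton $W_i$. Concretely, Verifier maintains the polynomial-extension sketches of $U$ (cumulatively) and of $W_i$ (reset each step), using the machinery of \Cref{thm:crossedge-cnt}. The additional hcost beyond \CrossEdgeCount is $\tO(n)$ for transmitting $\pi$, so we obtain an $[n+h, v]$-scheme for any $hv = n^2$. Choosing $h = nt$ and $v = s$ with $ts = n$, and noting that $n + nt = O(nt)$, yields an $[nt, s]$-scheme.

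The main potential obstacle is ensuring that the \CrossEdgeCount subroutine, as stated in \Cref{thm:crossedge-cnt}, supports a proof stream in which the $U_i$ are presented \emph{implicitly} through incremental augmentation rather than as explicit disjoint listings. Revisiting the scheme of \Cref{sec:induced-edge-cnt} (adapted for \CrossEdgeCount), Verifier's work on the second phase is to maintain $\tb_i(r_1, z)$ and $\tc_i(r_2, z)$ for $z \in [s]$, then emit the inner nonlinear combination \eqref{eq:nonlinear} at each delimiter. This is fully compatible with the streaming pattern above: after each $v_{i+1}$ arrives, Verifier updates the cumulative sketch for $U$, constructs the singleton sketch for $W_i$, contributes the inner-product term, and discards the $W$ sketch. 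Hence the reduction goes through without additional overhead. Matching the $\Omega(n^2)$ lower bound from \cite{ChakrabartiGMV20} gives optimality up to logarithmic factors, and \Cref{cor:acyc} for \Acyc follows immediately by having Prover either produce a topological ordering (yes instance) or a cycle certificate, the latter verified in $\tO(n)$ hcost via the Subset Scheme of \Cref{fact:set-schemes}.
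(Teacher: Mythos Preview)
Your proposal is correct and matches the paper's proof essentially step for step: the same reduction to \CrossEdgeCount with $U_i=\{v_1,\ldots,v_i\}$ and $W_i=\{v_{i+1}\}$, the same incremental-sketch observation that $U_{i+1}=U_i\cup\{v_{i+1}\}$ lets Prover stream $\pi$ just once, the same fingerprint check that $\pi$ is a permutation, and the same cost accounting yielding an $[n+h,v]$-scheme for $hv=n^2$. Your extra paragraph justifying that the \CrossEdgeCount verifier can absorb incremental updates is a nice elaboration the paper leaves implicit.
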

\begin{corollary}\label{cor:acyc}
For any $t,s$ with $ts=n$, there is an $[nt,s]$-scheme for \Acyc. This is optimal up to logarithmic factors,
  since any $(h,v)$-scheme is known to require $hv = \Omega(n^2)$.
\end{corollary}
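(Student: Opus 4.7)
The plan is to reduce \Acyc to the pair of problems that naturally split on the answer: topological ordering for the \Yes instances, and cycle exhibition for the \No instances. Because a digraph is acyclic iff it admits a topological ordering, Prover can simply declare which side we are on and then provide the appropriate witness. Specifically: Prover begins by sending a single ``\Yes/\No'' bit. If she claims \Yes, she streams a purported topological ordering $\langle v_1, \ldots, v_n \rangle$ and Verifier invokes the \Topo scheme of \Cref{thm:topo}. If she claims \No, she streams a purported cycle $C = (v_1, v_2, \ldots, v_k, v_1)$ as a sequence of edges, and Verifier performs two checks: (i)~a structural check that the stream forms a closed directed walk, i.e.\ the head of each edge matches the tail of the next and the last head equals the first tail, which needs only $O(\log n)$ state; and (ii)~a check that $C \subseteq E$, performed with the Subset Scheme from \Cref{fact:set-schemes}, which for the edge universe of size $\binom{n}{2} = O(n^2)$ admits $[h,v]$-schemes whenever $hv \ge n^2$. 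Verifier outputs \Yes iff the corresponding branch's checks all pass.

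For completeness, an honest Prover picks the branch matching the true answer and supplies a genuine topological ordering or a genuine cycle, so both routines pass by their own completeness guarantees. For soundness, suppose the graph is acyclic but Prover declares \No. Then any purported cycle $C$ must contain an edge not in $E$, so the Subset Scheme rejects with probability at least $1 - 1/\poly(n)$. Conversely, if the graph has a cycle but Prover declares \Yes, then no valid topological ordering exists, so the \Topo scheme of \Cref{thm:topo} rejects except with $1/\poly(n)$ probability. A union bound over the two branches keeps the soundness error below $1/3$ (and can be driven below any constant by the usual choice of field size).

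For costs, Case 1 uses $[nt,s]$ directly from \Cref{thm:topo}. Case 2 uses $\tO(n)$ hcost to transmit the cycle plus an $[h,v]$-scheme with $hv = n^2$ for the subset check; by setting $h = nt$ and $v = s$ we get $hv = n \cdot n = n^2$, matching our budget. Running both branches in parallel (or in sequence, since the branch is announced up front) yields an $[nt,s]$-scheme overall, as required. For the optimality claim, any $(h,v)$-scheme for \Acyc immediately yields one for \Topo with the same parameters on \Yes instances, so the $hv = \Omega(n^2)$ lower bound for \Topo cited in \Cref{thm:topo} transfers directly.

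The main obstacle is essentially bookkeeping rather than any new idea: one must make sure that the \Topo check and the cycle-witness check can be interleaved with the input stream cleanly and that the structural cycle check uses only $O(\log n)$ state so as not to inflate the vcost. Both are straightforward once the branching by Prover's declared bit is in place, so the corollary falls out as a direct consequence of \Cref{thm:topo} and \Cref{fact:set-schemes}.
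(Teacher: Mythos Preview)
Your construction is correct and matches the paper's own argument essentially line for line: branch on Prover's declared answer, invoke the \Topo scheme of \Cref{thm:topo} for the \Yes case, and exhibit a cycle verified via the Subset scheme (\Cref{fact:set-schemes}) for the \No case.

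One small correction on the optimality clause: your reduction goes the wrong way. You claim that an $(h,v)$-scheme for \Acyc yields one for \Topo, but \Acyc is a decision problem while \Topo must \emph{output} the ordering, so this implication fails. The paper instead invokes the directly known $\Omega(n^2)$ one-pass streaming lower bound for \Acyc itself, which by the standard machinery of~\cite{ChakrabartiCMT14} forces $hv = \Omega(n^2)$ for any scheme. (Also, for a digraph the edge universe has size $n(n-1)$, not $\binom{n}{2}$, though this does not affect the asymptotics.)
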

\fi
For dense graphs, our result generalizes the $[m,1]$-scheme of Cormode et al. \cite{CormodeMT13} for \Acyc by achieving a smooth tradeoff.  

\mypar{Graph Connectivity} The graph connectivity problem has garnered considerable attention in the basic and annotated streaming settings \cite{AhnGM12, ChakrabartiCMT14, Thaler16}. For any $t,s$ with $ts=n$, Chakrabarti et al. \cite{ChakrabartiCMT14} gave an $[nt,s]$-scheme  that determines whether an input graph is connected or not. Their scheme cannot, however, solve the more general problem of returning the number of connected components. The $[t^3,s^2]$-scheme (for any $ts=n$) of Chakrabarti and Ghosh \cite{ChakrabartiG19} does solve this problem, but has a worse tradeoff. As noted in \Cref{sec:maxmatch}, we can use \InducedEdgeCnt to check that all purported connected components are indeed disconnected from each other. On the other hand, the scheme of Chakrabarti et al. \cite{ChakrabartiCMT14} can check whether each component is actually connected. Hence, we can verify the number of connected components claimed by Prover by running these schemes parallelly. Thus, we generalize the result of Chakrabarti et al. \cite{ChakrabartiCMT14} by obtaining an $[nt,s]$-scheme for counting the number of connected components of a graph.  

\iflipics
\thmconn*
\else
\begin{theorem}\label{thm:connectivity}
For any $t,s$ with $ts=n$, there is an $[nt,s]$-scheme for counting the number of connected components of a graph.
\end{theorem}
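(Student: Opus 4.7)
The plan is to have Prover send a claimed partition $\langle U_1, \ldots, U_\ell \rangle$ of $V$ into connected components, with Verifier running two subschemes in parallel to certify (i) that each $U_i$ is internally connected and (ii) that distinct $U_i$'s are pairwise disconnected. If both checks pass, Verifier outputs $\ell$, which he tracks with a simple counter as the delimiters between components stream by.

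First, I would have Prover stream the components one after another, separated by delimiters. Verifier uses fingerprinting (as in \Cref{sec:prelims}) to confirm that the multiset $\bigsqcup_i U_i$ equals $V$, rejecting otherwise; this uses only $O(\log n)$ additional space and adds $\tO(n)$ to the hcost. For check (ii), I would invoke the \InducedEdgeCnt scheme of \Cref{thm:edge-cnt} twice in parallel on the same edge stream: once with the single subset $V$ to obtain $m = |E(G)|$, and once with subsets $U_1, \ldots, U_\ell$ to obtain $m_2 = \sum_{i=1}^\ell |E(G[U_i])|$. Verifier accepts this sub-check iff $m = m_2$, which certifies that every edge of $G$ lies inside some component. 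Since \Cref{thm:edge-cnt} gives an $[h,v]$-scheme whenever $hv = n^2$, the choice $h = nt$, $v = s$ (so that $hv = nts = n^2$) matches the target cost.

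For check (i), I would in parallel invoke the $[nt,s]$-scheme of Chakrabarti et al.~\cite{ChakrabartiCMT14} that tests connectivity of an input graph. Applied on each purported component $U_i$ with the edge stream of $G$ restricted (implicitly) to $G[U_i]$, this subscheme certifies that each $U_i$ is actually connected. Since the components are streamed contiguously and delimiters mark their boundaries, Verifier can reset or advance the connectivity subscheme's state appropriately; crucially, the spanning forest evidence supplied by Prover across all components totals $O(n)$ edges, keeping hcost at $\tO(nt)$, while the verifier's sketches require only $\tO(s)$ space.

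The main obstacle I anticipate is ensuring that the \cite{ChakrabartiCMT14} connectivity subscheme composes cleanly across all $\ell$ components without the vcost scaling with $\ell$ and without interacting badly with the simultaneous \InducedEdgeCnt sketches that share the same edge stream. The right view is that both subschemes are linear-sketch computations on the edge stream whose sketches can be maintained side by side in $\tO(s)$ total space, with Prover's help messages for each subscheme concatenated into a single $\tO(nt)$-bit proof stream. Once this composition is set up, the error bounds follow from a union bound over the constant number of invocations of the Schwartz--Zippel-based soundness guarantees, yielding the claimed $[nt,s]$-scheme.
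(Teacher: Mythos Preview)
Your proposal is correct and follows essentially the same approach as the paper: Prover streams a claimed partition into components, Verifier uses the \InducedEdgeCnt scheme of \Cref{thm:edge-cnt} to certify that no edges cross between parts, and in parallel uses the $[nt,s]$ connectivity scheme of \cite{ChakrabartiCMT14} to certify each part is internally connected. The paper's proof is similarly terse about the composition details you flag at the end; one small wording issue is that the \cite{ChakrabartiCMT14} scheme does not actually need the edge stream ``restricted to $G[U_i]$''---it sketches all of $E$ once for a single subset check against Prover's combined spanning forest---so the ``reset'' you worry about is unnecessary.
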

\fi
\fi
%!TEX root = main.tex

\section{The Single-Source Shortest Path Problem}
\label{sec:sssp}

In the single-source shortest path (\SSSP) problem, the goal is to find the distances from a source vertex $v_s$ to every other vertex reachable from it. In \Cref{sec:unwt-sssp}, we give a $[Dnt,s]$-scheme for the unweighted version, whenever $ts=n$. If $s=o(n)$, Verifier does not have enough space to store the output; therefore, we aim for a protocol where Prover streams the output, and Verifier checks that it is correct using $o(n)$ space, thus achieving a frugal scheme. 
%In \Cref{sec:wtd-sssp}, we give a couple of schemes for the weighted version in different update models.

In \Cref{sec:sssp-weighted}, we state our results for weighted \SSSP for the two different weight update models
\iflipics
(vanilla and turnstile)
\else
\fi
described in \Cref{sec:prelims}
\iflipics
. The proofs follow in \Cref{app:wtd-sssp}.
\else
: (i)~a $[DWn,n]$-scheme for the ``turnstile'' model, and (ii)~a $[Dn,Wn]$-scheme for the ``vanilla'' model. 
\fi
\subsection{Unweighted \SSSP}\label{sec:unwt-sssp}

%We use multiple fingerprints for this protocol. For a set $S$, we will use $f(S)$ to denote the fingerprint of $S$.
We shall design a scheme that works even if the same edge appears multiple times in the stream (unlike prior work~\cite{CormodeMT13} that assumes that an edge appears at most once). 

Prover sends
% the shortest path tree and 
distance labels $\hdist[v]$ for all $v \in V$, claiming that $\hdist[v] = \tdist(v_s,v)$, 
the actual distance from the source vertex $v_s$ to $v$. 
%and $\text{prev}[v]$ is the parent of $v$ in the shortest path tree. 
Let the radius-$d$ ball around $v_s$ be $B_d := \{v \in V:\, \tdist(v_s,v)\leq d\}$ and let
$\cB:=\{B_d:\, d \in [D]\}$ be the family of such balls. Let $\hB_d$ be the corresponding balls implied by Prover's $\hdist$ labels, and $\widehat{\cB}:=\{\hB_d: d \in [D]\}$.

To check correctness, Verifier uses fingerprinting (\Cref{sec:prelims}) modified as follows. Letting $B,\hB$ also denote the respective characteristic vectors, define fingerprint polynomials
\[
  \fing_{\cB}(X,Y):=\sum_{i\in [n]}\sum_{d\in [D]} B_d(i) X^i Y^d \,,
  \quad
  \fing_{\widehat{\cB}}(X,Y):=\sum_{i\in [n]}\sum_{d\in [D]} \hB_d(i) X^i Y^d \,,
\]
As the 
%shortest path tree 
$\hdist$ labels are streamed, Verifier 
constructs the fingerprint $\fing_{\widehat{\cB}}(\beta_1, \beta_2)$ for some $\beta_1, \beta_2 \in_R \FF$. 

Over the course of the protocol, using further help from Prover, Verifier will construct the sets $B_d$ inductively and, in turn, the ``actual'' fingerprint $\fing_{\cB}(\beta_1,\beta_2)$. The next claim shows that comparing this with $\fing_{\widehat{\cB}}(\beta_1,\beta_2)$ validates Prover's $\hdist$ labels.

\begin{claim}\label{clm:SSSP correctness}
If $\hB_d = B_d$ for all $d$, then $\hdist[v] = \tdist(v_s,v)$ for all vertices $v$.
\end{claim}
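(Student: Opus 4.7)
The plan is to argue that Prover's labels and the true distances admit the same characterization in terms of the balls $B_d$ and $\hB_d$ respectively, so if the balls agree for every $d$, the labels must agree with the true distances.

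First, I would unpack the definition of $\hB_d$ implied by the $\hdist$ labels: namely $\hB_d = \{v \in V : \hdist[v] \leq d\}$, in parallel with $B_d = \{v \in V : \tdist(v_s,v) \leq d\}$. This lets me recover each label as the smallest radius placing the vertex inside the corresponding ball: for every reachable vertex $v$,
\[
\tdist(v_s,v) \;=\; \min\{d \in \{0,1,\dots,D\} : v \in B_d\}, \qquad \hdist[v] \;=\; \min\{d : v \in \hB_d\},
\]
with the analogous equality for $\hdist[v]$. The first equality uses the definition of $B_d$ and the fact that $D$ bounds the eccentricity of $v_s$, while the second uses the monotonicity built into how $\hB_d$ is constructed from the labels (if $\hdist[v] \leq d$ then $v \in \hB_{d'}$ for every $d' \geq d$).

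Next, under the hypothesis $\hB_d = B_d$ for every $d \in [D]$, the two $\min$-sets above coincide for every $v$, and hence their minima coincide, giving $\hdist[v] = \tdist(v_s,v)$ for every reachable vertex. For unreachable vertices, I would note that no such $v$ lies in $B_D$, so $\hB_D = B_D$ forces $\hdist[v] > D$ for those same vertices; combined with the convention/encoding used to represent unreachability in the $\hdist$ stream (which, in the protocol description, is the only way to avoid entering some $\hB_d$ with $d \leq D$), this matches $\tdist(v_s,v) = \infty$.

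The argument is essentially a definitional unpacking, so I do not expect a real obstacle; the only subtlety to flag carefully is (i) ensuring the $\hB_d$'s are defined in the natural monotone way from the $\hdist$ labels, and (ii) handling unreachable vertices cleanly via the boundary case $d = D$, since the protocol only ranges $d$ over $[D]$. Once those conventions are spelled out, the claim follows immediately from the equality of the two $\min$-characterizations.
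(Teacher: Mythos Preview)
Your proposal is correct and captures the same idea as the paper's proof: both arguments rest on the observation that membership in the nested balls determines the distance label, so equal balls force equal labels. The paper frames this as a proof by contradiction (taking the minimal $d^*$ at which a mismatch occurs inside $B_{d^*}$ and deriving that the offending vertex would already lie in $B_{d^*-1}=\hB_{d^*-1}$), whereas you phrase it directly via the $\min$-characterization, but the content is the same.
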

\begin{proof}
Suppose not. Let $d^*$ be the smallest $d$ such that $\exists\, u\in B_{d^*}$ with $\hdist[u]\neq \tdist(v_s,u)$. Therefore, $\tdist(v_s,u)=d^*$. Now, $d^*$ cannot be $0$ since $v_s$ is the only vertex in $B_0$ and Verifier would reject immediately if $\hdist(v_s) \neq 0$. 
%Since $d*$ is smallest such $d$, $u\not\in B_{d^*-1}$. Therefore, $\tdist(u)=d*$. Again, 
Since $B_{d^*} = \hB_{d^*}$, we have $u\in \hB_{d^*}$. This means $\hdist(u)\leq d^*$. Since $\hdist(u)\neq d^*$, we have $\hdist(u)\leq d^*-1$. Thus, $u\in \hB_{d^*-1}$, i.e., $u\in B_{d^*-1}$, which is a contradiction to the minimality of $d^*$.  
\end{proof}

As before, $A$ denotes the adjacency matrix of the graph. Putting
\begin{gather}
  q_d(u) := \sum_{v\in V} B_d(v)\, A(v,u) \,, \text{ for each } u \in V \,, \label{eq:q-basic} \\
  \text{we have~~} B_{d+1} = \left\{u \in V:  q_{d}(u) \neq 0\right\}\,. \label{eq:unw-nbhd}
\end{gather}
To apply the shaping technique to \eqref{eq:q-basic}, rewrite $v$ as $(x,y) \in [t] \times [s]$. This reshapes $A$ into a
$t\times s\times n$ array $a(x,y,u)$ and $B_d$ into a $t\times s$ array $b_d(x,y)$. As usual, let $\ta$ and $\tb_d$ be the respective $\FF$-extensions for a suitable finite field $\FF$. Then, \cref{eq:q-basic} gives
\begin{align}
    q_d(u) &= \sum_{x\in [t]} p_{d}(x,u) \,, \quad\text{where} \label{eq:q-sum-poly} \\
    p_d(X,U) &:= \sum_{y \in [s]} \tb_d(X,y)\, \ta(X,y,U) \,. \label{eq:q-tradeoff-poly}
\end{align}
%where , i.e., $b_d(x_v,y_v)=1$ if $v\in B_d$, and
%$b_d(x_v,y_v)=0$ otherwise. Then, define 

% Thus, we have
% \begin{equation}\label{eq:tradeoff-ball-nbhd} 
%     B_{d+1} = \left\{u \in V : q_d(u) \neq 0\right\}\,.
% \end{equation}
\begin{protocol}
\item[Stream processing.] Verifier picks $r_1,r_2\in_R \FF$ and maintains $\ta(r_1,y,r_2)$. When he sees vertices in $B_1$, i.e., $v_s$ and its neighbors, he maintains $b_1(r_1,y)$ for all $y\in [s]$ and also updates the fingerprint $\fing_{\cB}(\beta_1,\beta_2)$ accordingly.

\hspace{3ex} Verifier wants to construct the values $b_d(r_1,y)$ inductively for 
$d\in [D]$. For constructing $b_{d+1}$ values for some $d$, he wants all $u$ such that $q_d(u) \neq 0$ (\cref{eq:unw-nbhd}) in streaming order since he doesn't have enough space to either store the entire polynomial of degree $n-1$ that agrees with $q_d$  (so as to go over all evaluations), or to parallelly evaluate it at $n$ values while its coefficients are streamed. Hence, he asks for the following help message.

\item[Help message processing.] Prover continues her proof stream by sending $\langle \hp_1, Q_1,\ldots, \hp_{D}, Q_{D}\rangle$, where $Q_d := \langle \hq_d(u) : u\in V\rangle$, claiming that $\hp_d \equiv p_d$ and $\hq_d(u) = q_d(u)$ for each $d\in [D]$ and $u\in [n]$.

While $\hp_d$ is streamed, Verifier computes the following in parallel:
\begin{itemize}
    \item $\hp_d(r_1,r_2)$;
    \item $p_d(r_1,r_2)$, using \cref{eq:q-tradeoff-poly};
    \item the fingerprint $g_d := \sum_{u\in [n]}\sum_{x\in [t]} \hp_d(x,u)\beta^u$ (for some $\beta \in_R \FF$).
\end{itemize}
After reading $\hp_d$, he checks whether $\hp_d(r_1,r_2)=p_d(r_1,r_2)$. If so, he believes that $\hp_d \equiv p_d$ and, in turn, that $g_d = \sum_{u\in [n]} q_d(u)\beta^u$ (by \cref{eq:q-sum-poly}). Next, as $Q_d$ is streamed, 
\begin{itemize}
    \item Verifier computes the fingerprint $g_d' := \sum_{u\in [n]} \hq_d(u)\beta^u$.
    \item For each $u$ with $\hq_d(u)\neq 0$, due to \cref{eq:unw-nbhd} (and assuming for now that the $\hq_d$ values are correct), he treats $u$ as a stream update for $B_{d+1}$, and (i) maintains $b_{d+1}(r_1,y)$ for all $y\in [s]$, and (ii)  accordingly updates the fingerprint  $\fing_{\cB}(\beta_1,\beta_2)$.
\end{itemize}
    After reading $Q_d$, he checks if the fingerprints $g_d$ and $g_d'$ match. If they do, he believes that all $\hq_d$ values in $Q_d$ were correct and hence, the $b_{d+1}$ values he constructed are correct as well. He moves on to the next iteration, i.e., starts reading $\hp_{d+1}$.
     
\item[Final Verification.]   After the $D$th iteration, Verifier checks if the two fingerprints $\fing_{\cB}(\beta_1,\beta_2)$ and $\fing_{\widehat{\cB}}(\beta_1,\beta_2)$ match. If the check passes, then he believes that the $\hdist$ labels were correct, at least upto distance $D$ (by \Cref{clm:SSSP correctness}). Finally, he checks if fingerprints for $B_D$ and $B_{D+1}$ match to verify that vertices in $V \setminus B_D$ are indeed unreachable.

\item[Error probability.] Verifier does $O(D)$ fingerprint-checks and $O(D)$ sum-checks, using degree-$O(n)$ polynomials. Using $|\FF|>n^3$ (and a union bound), the soundness error is $< 1/n$.
   
\item[Help and verification costs.] The set of $\hdist$ labels sent by the Prover has size $\tO(n)$. Each polynomial $\hp_d$ has $nt$ monomials and each $Q_d$ has $O(n)$ field elements, and hence, size $\tO(n)$. Therefore, the total hcost is $\tO(Dnt)$. Initially, the $\tA$ and $\tb_1$ values are stored using $\tO(s)$ space. Next, the $\tb_d$ and $g_d$ values are maintained reusing space of $b_{d-1}$ and $g_{d-1}$ values respectively. We also use $O(1)$ many other fingerprints that take $O(\log n)$ space each. Hence, the total vcost is $\tO(s)$.
\end{protocol}

\begin{theorem}\label{thm:unw-sssp}
There is a $[Dnt,s]$-scheme for unweighted \SSSP, where $D = \max\limits_{v \in V} \tdist(v_s,v)$.
\end{theorem}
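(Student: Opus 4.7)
The plan is to have Prover send purported distance labels $\hdist[v]$ for each vertex $v$, and reduce their verification to checking, inductively, that the claimed distance-$d$ balls $\hB_d := \{v : \hdist[v] \le d\}$ coincide with the true balls $B_d := \{v : \tdist(v_s,v) \le d\}$ for every $d \in \{0,1,\ldots,D\}$. A short induction on the smallest $d$ where $\hdist$ disagrees with the true distance shows that matching all balls forces all labels to be correct. Since Verifier has only $\tO(s)$ space, he cannot store any single ball explicitly; instead, as Prover streams the labels, he maintains a two-variable fingerprint $\fing_{\widehat{\cB}}(\beta_1,\beta_2) = \sum_{d,i} \hB_d(i)\,\beta_1^i\,\beta_2^d$ for some $\beta_1,\beta_2 \in_R \FF$, and over the course of the protocol he constructs a parallel fingerprint $\fing_{\cB}(\beta_1,\beta_2)$ of the true balls. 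A single final comparison of the two fingerprints (plus a check that $B_D = B_{D+1}$, to certify that no reachable vertex was missed) suffices.

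To construct $\fing_{\cB}$ inductively I use the recurrence $B_{d+1} = \{u : q_d(u) \neq 0\}$, where $q_d(u) := \sum_{v} B_d(v)\,A(v,u)$. Apply the shaping trick by rewriting $v \in [n]$ as $(x,y) \in [t] \times [s]$, reshaping the adjacency matrix and $B_d$ into multidimensional arrays with $\FF$-extensions $\ta(X,Y,U)$ and $\tb_d(X,Y)$. Then
\begin{align*}
  q_d(u) &= \sum_{x \in [t]} p_d(x,u),\qquad\text{where}\qquad p_d(X,U) = \sum_{y \in [s]} \tb_d(X,y)\,\ta(X,y,U).
\end{align*}
In each round $d$, Prover sends the coefficient list of a polynomial $\hp_d$ (claimed to equal $p_d$, and of size $\tO(nt)$) followed by the list $Q_d = \langle \hq_d(u) : u \in [n]\rangle$ (claimed to equal $\langle q_d(u) : u \in [n]\rangle$). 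Verifier samples $r_1,r_2 \in_R \FF$ and streams-maintains $\ta(r_1,y,r_2)$ for $y \in [s]$ in $\tO(s)$ space; once $B_d$ has been inductively discovered, he likewise maintains $\tb_d(r_1,y)$. A sum-check-style point-evaluation test $\hp_d(r_1,r_2) \stackrel{?}{=} p_d(r_1,r_2)$ certifies $\hp_d \equiv p_d$. To then certify $Q_d$, Verifier computes, as $\hp_d$ streams in, the fingerprint $g_d := \sum_{u,x} \hp_d(x,u)\beta^u$ for a fresh $\beta \in_R \FF$, and as $Q_d$ streams in, computes $g_d' := \sum_u \hq_d(u)\beta^u$; the check $g_d \stackrel{?}{=} g_d'$ validates $Q_d$ entry-by-entry. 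Every $u$ with $\hq_d(u) \neq 0$ is then treated as a membership update for $B_{d+1}$, which Verifier uses to update both $\tb_{d+1}(r_1,\cdot)$ and the outer fingerprint $\fing_{\cB}(\beta_1,\beta_2)$.

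The main obstacle is coordinating several layers of sketches so that Verifier's space stays $\tO(s)$ while he performs $D$ sum-checks on polynomials far too large to store, maintains a running inner fingerprint tying each $\hp_d$ to $Q_d$, updates a low-degree extension of the evolving ball, and maintains an outer ball-sequence fingerprint, all within a single streaming pass through Prover's proof. The reassuring point is that each piece is a standard streaming primitive (evaluation of an $\FF$-extension at a random point via Fact~\ref{fact:dynamicupdate}, or a univariate fingerprint), and the polynomials in the sum-checks have total degree $O(n)$, so taking $|\FF| > n^3$ and applying the Schwartz--Zippel lemma with a union bound over the $O(D)$ checks gives soundness error $< 1/n$. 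Tallying costs: Prover sends the $n$ distance labels plus $D$ polynomials of size $\tO(nt)$ each plus $D$ value-lists of size $\tO(n)$ each, for hcost $\tO(Dnt)$; Verifier holds $O(1)$ size-$s$ arrays over $\FF$ at any time (reusing the workspace for $\tb_d$ and $g_d$ across iterations) plus $O(1)$ fingerprints, for vcost $\tO(s)$, yielding the desired $[Dnt,s]$-scheme.
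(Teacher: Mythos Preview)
Your proposal is correct and follows essentially the same approach as the paper: the same reduction to verifying the sequence of distance balls via a two-variable fingerprint, the same recurrence $B_{d+1} = \{u : q_d(u) \neq 0\}$ with the vertex-space shaping yielding $p_d(X,U) = \sum_{y\in[s]} \tb_d(X,y)\,\ta(X,y,U)$, the same per-round sum-check on $\hp_d$ at $(r_1,r_2)$ coupled with an inner fingerprint tying $\hp_d$ to the streamed $Q_d$, and the same final $B_D = B_{D+1}$ check, error bound via Schwartz--Zippel with $|\FF|>n^3$, and cost accounting.
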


\begin{corollary}\label{cor:st-shortestpath}
There is a $[Knt,s]$-scheme for \ShortestPath, where $K = \tdist(v_s,v_t)$.
\end{corollary}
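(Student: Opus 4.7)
The plan is to reduce $\ShortestPath$ to a depth-truncated instance of the $\SSSP$ scheme from \Cref{thm:unw-sssp}. Prover first declares her claimed distance $K^*$ and then, instead of the full $D$-round proof stream $\langle \hp_1, Q_1, \ldots, \hp_D, Q_D\rangle$, sends only its $K^*$-round prefix $\langle \hp_1, Q_1, \ldots, \hp_{K^*}, Q_{K^*}\rangle$. Verifier runs the $\SSSP$ protocol verbatim up to depth $K^*$, performing each round's sum-check and fingerprint check exactly as before, and omitting the final $B_D$ vs.\ $B_{D+1}$ ``unreachability'' check that is irrelevant for the $v_s$--$v_t$ problem.

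On top of this, Verifier maintains two extra one-bit flags $b_{<}, b_{=}$, both initially $0$. While $Q_d$ streams by, he watches the single coordinate $\hq_d(v_t)$: whenever $\hq_d(v_t) \ne 0$, he sets $b_{<} \gets 1$ if $d < K^*-1$, and $b_{=} \gets 1$ if $d = K^*-1$. Inspecting one field element on the fly for each round costs only $O(\log|\FF|)$ additional space. At the end, Verifier accepts iff every $\SSSP$ check succeeds and $b_{=} = 1$ and $b_{<} = 0$; that is, $v_t \in B_{K^*} \setminus B_{K^*-1}$.

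For correctness, \Cref{clm:SSSP correctness} applied to the truncated scheme guarantees that, except with probability $O(1/n)$, a passing transcript certifies $\hB_d = B_d$ for all $d \le K^*$. Given that, the extra check pins down $\tdist(v_s, v_t) = K^*$: a Prover understating the distance must leave $v_t$ out of $B_{K^*}$, while overstating it places $v_t$ in some earlier $B_d$ and trips $b_{<}$. Verifier workspace is unchanged at $\tO(s)$, and since only $K^*$ of the $D$ iterations are transmitted, the hcost drops to $\tO(K^* n t)$, which is $\tO(Knt)$ in the honest case.

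The point I initially expected to be the obstacle---ensuring the truncation is sound against a Prover who tries to hide an early first-appearance of $v_t$---resolves itself. The value $\hq_d(v_t)$ is a field element literally present in the stream $Q_d$ and is already pinned down by the round-$d$ sum-check of $\hp_d$ against the true $p_d$, which by \cref{eq:q-sum-poly,eq:q-tradeoff-poly} encodes $B_d$ from the input. Thus no new randomness, extra rounds, or sketch primitives are needed beyond the one-coordinate snooping above.
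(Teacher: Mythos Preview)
Your proposal is correct and takes essentially the same approach as the paper: truncate the \SSSP scheme and detect $v_t$'s first appearance via the value $\hq_d(v_t)$ streamed in $Q_d$, which is already pinned down by the per-round sum-check and fingerprint checks. The paper's proof is a two-line version of your explicit flag mechanism; your appeal to \Cref{clm:SSSP correctness} is slightly misplaced (that claim deduces label correctness \emph{from} ball equality, whereas what actually certifies the inductively built balls here is the per-round sum-check plus fingerprint argument), but this does not affect the validity of your reasoning.
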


\begin{proof}
The protocol for \SSSP incurs a factor of $D$ in the hcost since it constructs $B_d$ for each $d\in [D]$. For the simpler \ShortestPath problem, we can inductively construct balls and stop as soon as we find the destination vertex $v_t$ in some $B_d$ (i.e., get $\hq_{d-1}(v_t)\neq 0$). We must find it in $B_K$ where $K$ is the length of a shortest $v_s$--$v_t$ path. Thus, we will only incur a factor of $K$ in the hcost, which implies a $[Knt,s]$-scheme for \ShortestPath.
\end{proof}

Thus, we generalize the $[Dnt,s]$-scheme of Cormode et al.~\cite{CormodeMT13} from \ShortestPath to \SSSP. Our result for \ShortestPath generalizes the $[Kn,n]$-scheme of Chakrabarti and Ghosh \cite{ChakrabartiG19} by giving a smooth tradeoff and also improves upon the $[Dnt,s]$-scheme of Cormode et al.~\cite{CormodeMT13}, since $K$ can be arbitrarily smaller than $D$.

\subsection{Weighted SSSP}\label{sec:sssp-weighted}

\iflipics
We consider the more general weighted version of SSSP in the turnstile and the vanilla weight update models (see \Cref{sec:prelims} for the definitions) and obtain the following results. The details of the schemes along with the proofs of correctness appear in \Cref{app:wtd-sssp}.

\begin{restatable}{theorem}{ssspturnstile}\label{thm:sssp-turnstile}
There is a $[DWn,n]$-scheme for weighted \SSSP in the turnstile weight update model.
\end{restatable}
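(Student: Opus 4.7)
The plan is to generalize the unweighted SSSP scheme of \Cref{sec:unwt-sssp} to the weighted turnstile setting. The adjacency matrix $A$ now has entries in $\{0, 1, \ldots, W\}$, and the ball recurrence reaches back up to $W$ levels. Letting $A^{(w)}(v, u) := \indic[A(v, u) = w]$ and $q_d(u) := \sum_{w=1}^{W} \sum_{v \in V} B_{d-w}(v)\, A^{(w)}(v, u)$, the recurrence becomes $u \in B_d$ iff $u \in B_{d-1}$ or $q_d(u) > 0$. The key observation enabling vcost $n$ is that, unlike in the frugal regime, Verifier can now afford to store the entire vector $\hdist$ of claimed distance labels explicitly; this lets him access $\hB_{d-w}(v) = \indic[\hdist[v] \le d-w]$ for any $d, w, v$ on the fly, rather than having to maintain $W$ separate sketches of past balls in parallel.

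The scheme would proceed as follows. Prover first streams $\hdist[v]$ for all $v$, which Verifier stores using $\tO(n)$ space. In parallel with the edge stream, Verifier maintains the sketch $\tA(r_1, y, r_2)$ of the weighted adjacency matrix for $y \in [n]$ (using the shaping of \Cref{sec:unwt-sssp} specialized to $t = 1$, $s = n$), updated in turnstile fashion. Then for each $d = 1, \ldots, D$ and each weight $w \in [W]$, Prover sends a polynomial $\hp_d^{(w)}$ together with a vector $Q_d^{(w)} = \langle \hq_d^{(w)}(u) : u \in [n] \rangle$ that play the role of $\hp_d, Q_d$ from the unweighted scheme, but restricted to the contribution of weight-$w$ edges originating in $B_{d-w}$. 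Verifier applies sum-check and fingerprint comparisons---re-using the stored $\tA$ sketch and the implicit access to $\hB_{d-w}$ via $\hdist$---to validate each $(\hp_d^{(w)}, Q_d^{(w)})$ pair, aggregates the resulting $\hq_d^{(w)}(u)$ values into a running fingerprint of the recurrence-constructed $B_d$, and finally checks this against a fingerprint of $\hB_d$ computed directly from $\hdist$. The weighted analog of \Cref{clm:SSSP correctness} then ensures that matching fingerprints across all $d \in [D]$ imply $\hdist \equiv \tdist$.

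Each $(d, w)$ pair contributes a help message of size $\tO(n)$ (a polynomial with $O(n)$ coefficients plus an $n$-vector), giving total hcost $\tO(DWn)$. Verifier's storage is $\tO(n)$ for $\hdist$, $\tO(n)$ for the adjacency sketch, and $\tO(\log n)$ for the various fingerprints, for a total vcost of $\tO(n)$. The main obstacle is ensuring that Prover's per-weight-class polynomials $\hp_d^{(w)}$ actually correspond to the true matrices $A^{(w)}$, since Verifier maintains only a sketch of the cumulative $A$ and not of each $A^{(w)}$ separately. I expect to resolve this by cross-checking Prover's weight decomposition against $\tA$ using the identity $\sum_{w=1}^W w \cdot A^{(w)} \equiv A$ (and higher-moment variants $\sum_w w^k A^{(w)} \equiv A^{\odot k}$ if the degree-one identity alone is insufficient) evaluated at the same random points used for the sum-checks; this forces consistency without the Verifier having to maintain $W$ parallel adjacency sketches. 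The error analysis is a straightforward union bound over the $O(DW)$ sum-check and fingerprinting steps, each contributing failure probability $O(\deg / |\FF|)$; taking $|\FF|$ polynomial in $nDW$ drives overall soundness below $1/3$, while perfect completeness holds by construction.
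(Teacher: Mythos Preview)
Your proposal has a genuine gap at exactly the point you flag as ``the main obstacle.'' The first-moment identity $\sum_{w} w\,A^{(w)} = A$ is a single linear constraint on $W$ unknown matrices and does not determine them; a cheating Prover can commit to fake weight-classes $\hat{A}^{(w)}$ satisfying this identity yet differing from the true $A^{(w)}$, thereby corrupting the $q_d$ values. Your fallback to higher-moment identities $\sum_{w} w^k A^{(w)} = A^{\odot k}$ does pin down the decomposition (Vandermonde), but Verifier cannot maintain evaluations of $\widetilde{A^{\odot k}}$ for $k\ge 2$ in the turnstile model with $\tO(n)$ space: updating $\sum_{u} A(v,u)^k\,\delta_u(r)$ after an increment $A(v,u)\gets A(v,u)+\Delta$ requires the current value $A(v,u)$, which in general needs $\Omega(n^2)$ storage. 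So neither branch of your proposed fix goes through.

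The paper avoids the weight-class decomposition entirely. It defines, for each level $d$, a \emph{single} polynomial
\[
  p_d(U) \;=\; \sum_{v\in B_d} \delta_{\,d+1-\tdist[v]}\!\bigl(\tA(v,U)\bigr),
\]
where $\delta_w(X)$ is the degree-$(W-1)$ Lagrange interpolant satisfying $\delta_w(w')=\indic[w'=w]$ for $w'\in[W]$. Since $\tA(v,U)$ has degree $n-1$ in $U$, the composite $p_d$ has degree $O(Wn)$; Prover sends one such polynomial per level, for total hcost $\tO(DWn)$. Crucially, Verifier evaluates $p_d(r)$ using only the \emph{linear} sketch $\tA(v,r)$ for all $v\in[n]$ (which he maintains in $\tO(n)$ space under turnstile updates) together with the stored labels $\tdist[v]$ on $B_d$, applying $\delta_{w(v)}(\cdot)$ pointwise. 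This sidesteps any need to sketch the indicator matrices $A^{(w)}$. Note also that because Verifier has $\tO(n)$ space here, the paper dispenses with the fingerprint-of-balls machinery from \Cref{sec:unwt-sssp}: Verifier simply stores $B_d$ and its distance labels explicitly and grows them iteratively.
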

\begin{restatable}{theorem}{ssspatomic}\label{thm:sssp-atomic}
There is a $[Dn,Wn]$-scheme for weighted \SSSP in the vanilla streaming model.
\end{restatable}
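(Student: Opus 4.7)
The plan is to adapt the unweighted \SSSP scheme of \Cref{thm:unw-sssp}, specialized to $t=1, s=n$ (which by itself gives a $[Dn,n]$-scheme), by stratifying the edges by weight. Let $A_w$ denote the adjacency matrix of the subgraph formed by edges of weight exactly $w$, for $w\in[W]$. In the vanilla model each edge arrives once together with its weight, so at stream time Verifier picks $r\in_R \FF$ and routes each update into the sketch $\tilde{a}_w(y,r)$ for $y\in[n]$ of $A_w$'s low-degree extension via \Cref{fact:dynamicupdate}. Maintaining this for every $w\in[W]$ costs $\tO(Wn)$ space and needs no deletion support, which is why the scheme is phrased only for the vanilla model.

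For the recursion, observe that $u\in B_d\setminus B_{d-1}$ iff there exist $w\in[W]$ and $v\in V$ with $v\in B_{d-w}$ and $A_w(v,u)=1$. Accordingly I would define
\[
  q_d(u) := \sum_{w=1}^{\min(W,d)} \sum_{v\in V} B_{d-w}(v)\, A_w(v,u)\,,
\]
so that $B_d = B_{d-1}\cup\{u : q_d(u)\ne 0\}$, and its polynomial encoding
\[
  p_d(U) := \sum_{w=1}^{\min(W,d)} \sum_{y\in[n]} \tilde{b}_{d-w}(y)\, \tilde{a}_w(y,U)\,,
\]
which has degree at most $n-1$ in $U$ and satisfies $p_d(u)=q_d(u)$ for $u\in[n]$. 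The protocol then mirrors the one in \Cref{sec:unwt-sssp}: Prover first streams the labels $\hdist[v]$, from which Verifier builds the fingerprint $\fing_{\widehat{\cB}}(\beta_1,\beta_2)$. Then, for $d=1,\ldots,D$, Prover sends $\langle \hp_d, Q_d\rangle$, where $Q_d=\langle \hq_d(u) : u\in[n]\rangle$. Verifier evaluates $p_d(r)$ directly from the stored sketches $\tilde{a}_w(y,r)$ and the in-memory ball vectors $\tilde{b}_{d-w}(y)=B_{d-w}(y)$, sum-checks $\hp_d(r)\stackrel{?}{=}p_d(r)$, then fingerprint-checks $Q_d$ against $\hp_d$ at a random $\beta\in_R\FF$.

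The key space-saving trick is a \emph{sliding window}: Verifier stores only the $W$ most recent ball characteristic vectors $B_{d-W},\ldots,B_{d-1}$, totaling $Wn$ bits. Once $Q_d$ is accepted, he builds $B_d$ by copying $B_{d-1}$ and flipping on each $u$ with $\hq_d(u)\ne 0$, simultaneously updating $\fing_{\cB}(\beta_1,\beta_2)$; the now-stale $B_{d-W}$ is discarded. After $D$ iterations, Verifier checks $\fing_{\cB}(\beta_1,\beta_2)\stackrel{?}{=}\fing_{\widehat{\cB}}(\beta_1,\beta_2)$ and runs one final iteration to confirm $B_D=B_{D+1}$, certifying unreachability outside $B_D$. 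The hcost tallies to $\tO(n)$ per iteration ($\tO(n)$ for $\hp_d$'s coefficients and $\tO(n)$ for $Q_d$), so $\tO(Dn)$ overall; the vcost is the $\tO(Wn)$ spent on $W$ adjacency sketches and the $W$-entry ball window; and choosing $|\FF|>n^3$ together with a union bound over $O(D)$ sum-checks and fingerprint-checks of degree-$O(n)$ polynomials keeps the soundness error below $1/n$.

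The part I expect to require the most care is the correctness proof, i.e., the analog of \Cref{clm:SSSP correctness} for the weighted case. I would argue by induction on $d$ that once $B_0,\ldots,B_{d-1}$ agree with their hatted counterparts, passing the iteration-$d$ checks forces $B_d=\hB_d$ with high probability; the subtlety is verifying that $\{u\notin B_{d-1} : q_d(u)\ne 0\}$ is exactly the distance-$d$ shell, which uses the fact that any shortest path's last edge has weight at most $W$ so that the relevant predecessor lies in $B_{d-w}$ for some $w\in[W]$. Once this inductive step is pinned down, the hcost, vcost, and soundness bookkeeping go through by the same arguments as in \Cref{sec:unwt-sssp}.
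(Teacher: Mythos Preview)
Your overall plan is sound and close to the paper's: both store a $Wn$-sized table indexed by vertex and weight (the paper phrases it as evaluations $\tilde f(v,r,w)$ of the single indicator $f(u,v,w)=[A(u,v)=w]$; you phrase it as $W$ separate adjacency sketches $\tilde a_w(\cdot,r)$, which is literally the same data), and both rebuild the balls $B_d$ by a degree-$O(n)$ sum-check per level. Your sliding window of the last $W$ ball vectors is a clean variant of the paper's choice to keep every vertex's distance label explicitly; either way the storage is $\tilde O(Wn)$.

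There is, however, one genuine gap. Your termination test ``one final iteration to confirm $B_D=B_{D+1}$'' is lifted from the unweighted protocol and is \emph{not} sound once edge weights exceed $1$. Concretely, take $v_s$ with a single outgoing edge of weight $3$ to some $u$; a dishonest Prover can declare $D'=1$, and then $B_0=B_1=B_2=\{v_s\}$ so $B_{D'}=B_{D'+1}$ passes, yet $u$ is reachable at distance $3$. In general, $B_{D}=B_{D+1}$ only rules out new vertices reachable via a final edge of weight $1$; vertices whose last shortest-path edge has weight $w\ge 2$ can first appear in $B_{D+w}$. The fix is easy and does not change the asymptotics: either (i) run $W$ extra iterations and check $B_D=B_{D+1}=\cdots=B_{D+W}$ (since any vertex outside $B_D$ reachable from it is at distance at most $D+W$ along its first boundary edge), which adds $\tilde O(Wn)$ hcost; or (ii) do what the paper does and verify $E\cap(B_D\times(V\setminus B_D))=\varnothing$ via an $[n,n]$ \textsc{intersection} scheme (\Cref{fact:set-schemes}), which fits inside your $\tilde O(Wn)$ vcost.

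A minor bookkeeping point: in the weighted setting $D$ can be as large as $(n-1)W$, so taking $|\FF|>n^3$ does not automatically give soundness $<1/n$ after a union bound over $O(D)$ checks. As the paper does, choose $|\FF|\gg Dn$.
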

\else
\iflipics
\section{Details of Weighted SSSP Schemes} \label{app:wtd-sssp}

Here, we give full details of the schemes for weighted SSSP in our two streaming models, thereby proving \Cref{thm:sssp-turnstile,thm:sssp-atomic}.
\else
Here, we consider the general weighted version of \SSSP and give schemes for the problem in the vanilla streaming model as well as the turnstile weight update model.  
\fi

\mypar{Turnstile weight update\iflipics (Proof of \Cref{thm:sssp-turnstile})\else\fi}
Assume that the
edge weights are positive integers. Each stream update increments/decrements the weight of an edge. The
\emph{distance} from vertex $u$ to vertex~$v$ refers to the weight of the shortest
path from $u$ to $v$. Let $D$ be the
longest distance from the source $s$ to any
other vertex reachable from it, and $W$ be
the maximum weight of an edge. 

Define
\[
  \delta_w(X) := 
  {\prod\limits_{\substack{w'\in [W]\\ w'\neq w}} (X-w')} \bigg/ {\prod\limits_{\substack{w'\in [W]\\ w'\neq w}} (w-w')} \,.
\]

Let $A$ denote the adjacency matrix of the
weighted graph $G$, i.e., $A(u,v)$ is
the weight of the edge $(u,v)$. Let $B_d$ (resp.~$N_d$) denote the set of vertices at a distance of at most (resp.~exactly) $d$ from the source vertex $v_s$. Then,
\begin{gather}
    N_{d+1} = \left\{u \in V\setminus B_d:\, p_{d}(u) \neq 0\right\} \,, \label{eq:turn-exact-nbhd} \\
    \text{where~~}
    p_d(U) = \sum_{v\in B_d} \delta_{w(v)}(\tA(v,U))
    \text{~~and~~} w(v)=d+1-\tdist[v] \,.
    \label{eq:dist-poly}
\end{gather}
%Now, $N_{d+1}$ is defined as in \cref{eq:exact-nbhd} with the
%definition of $p_d$ changed to 

\begin{protocol}
\item[Stream processing.] Verifier chooses
$r\in_R \FF$ and maintains $\tA(v,r)$ for
all $v$. He stores $B_1$ with $\text{dist}[v]$ labelled as $1$ for each $v\in B_1$.
\item[Help message processing and verification.] Prover sends polynomials $\hp_d$ and claims that $\hp_d\equiv p_d$ for each $d \in [D]$. 
%Verifier
%computes $p_d(r)$ and checks if the polynomial $p_d$ is correct. 
Verifier computes $B_d$ inductively for $d\in [D]$ as follows. 
%with each vertex $v$ labeled with $\tdist[v]$. 

Assume that, for some $d\in [D-1]$, he has the set $B_d$ with $\tdist[v]$ labeled on each vertex $v\in B_d$; this holds initially as he has stored $B_1$. He computes $p_d(r)$ using \cref{eq:dist-poly} and checks whether $\hp_d(r)=p_d(r)$. If the check passes, he believes that $\hp_d \equiv p_d$ and evaluates $\hp_d(u)$ for each $u\in V\setminus B_d$ and constructs $N_{d+1}$ using \cref{eq:turn-exact-nbhd}. Then, $B_{d+1}$ is given by $N_{d+1}\uplus B_d$. 

After $B_D$ is obtained, we get all vertices reachable from $s$ along with their distances from~$s$. Finally, Verifier checks if the other vertices are indeed unreachable from~$s$ by verifying that there is no cross-edge between $B_D$ and $V\setminus B_D$, i.e., if $E\cap (B_D\times (V\setminus B_D)) = \varnothing$. (Intersection scheme, see Fact \ref{fact:set-schemes})
\item[Error probability.] Verifier uses the same element $r$ for $O(D)$ invocations of the sum-check protocol, where each application
of the sum-check protocol is to a univariate polynomial of degree $O(Wn)$. Choosing $|\FF|> DWn^2$, the soundness error for each invocation of the sum-check protocol is at most $1/(Dn)$. Taking a union bound over all $O(D)$ invocations, we get that the total error probability of the protocol is at most $O(1/n)$. 

\item[Help and verification costs] We have $\deg p_d = 
O(Wn)$ for each $d\in [D]$ and hence, hcost is $\tO(DWn)$. Verifier needs to store all vertices and $\tA(v,r)$ for each $v\in [n]$, and hence, vcost is 
$\tO(n)$. The final disjointness can be checked by an $[n,n]$ intersection scheme.
\end{protocol}

\iflipics
\ssspturnstile*
\else
\begin{theorem}\label{thm:sssp-turnstile}
There is a $[DWn,n]$-scheme for \SSSP in the turnstile weight update model.
\end{theorem}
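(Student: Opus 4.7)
The plan is to closely parallel the unweighted \SSSP scheme from \Cref{sec:unwt-sssp}, modifying the polynomial encoding to account for integer edge weights drawn from $[W]$. As before, I would have Verifier inductively construct the balls $B_d := \{v:\tdist(v_s,v)\le d\}$ for $d = 1, 2, \ldots, D$, but now an element $u$ joins the newly-added layer $N_{d+1} = B_{d+1}\setminus B_d$ precisely when there exists some $v\in B_d$ with $\tdist[v] = d'$ and an edge $(v,u)$ of weight exactly $d+1-d'$. To capture ``exactly that weight'' algebraically, I would use the weight-isolating polynomial $\delta_w(X)$ (the Lagrange interpolator that is $1$ at $w$ and $0$ on $[W]\setminus\{w\}$) and define $p_d(U) = \sum_{v\in B_d} \delta_{w(v)}(\tA(v,U))$ with $w(v) = d+1-\tdist[v]$, so that $p_d(u)\ne 0$ iff $u$ has an eligible predecessor, giving \cref{eq:turn-exact-nbhd}.

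For the protocol itself, Verifier picks one random $r\in_R\FF$ at the start and maintains $\tA(v,r)$ for every $v\in V$ throughout the edge stream using \Cref{fact:dynamicupdate}, costing $\tO(n)$ space. He also stores $B_1$ explicitly (source plus its neighbors with the correct labels). Then, for $d=1,\ldots,D-1$, Prover sends a univariate polynomial $\hp_d(U)$ claimed to equal $p_d$; Verifier evaluates the honest $p_d(r)$ using the stored $B_d$, its distance labels, and the stored values $\tA(v,r)$, then runs a single sum-check-style identity test $\hp_d(r) \stackrel?= p_d(r)$. If it passes, he trusts $\hp_d$ and evaluates it at each $u\in V\setminus B_d$ to read off $N_{d+1}$, thereby forming $B_{d+1}$ with the correct distance label $d+1$ on each new vertex. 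After $D$ iterations, an invocation of the intersection scheme (\Cref{fact:set-schemes}) rules out cross-edges between $B_D$ and $V\setminus B_D$, certifying unreachability of the remaining vertices.

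The cost accounting is straightforward: $\deg p_d \le (W-1)+(n-1) = O(Wn)$ because $\tA(v,U)$ has degree $n-1$ in $U$ and $\delta_w$ has degree $W-1$, so each $\hp_d$ requires $\tO(Wn)$ bits and the total hcost is $\tO(DWn)$. Soundness follows from the Schwartz--Zippel lemma applied to each of the $D$ sum-checks of a degree-$O(Wn)$ polynomial: choosing $|\FF| = \poly(D,W,n)$ and taking a union bound gives error $o(1)$. Verifier's space is dominated by the length-$n$ array $\tA(\cdot,r)$, the set $B_d$ with labels, and $O(1)$ fingerprints, totaling $\tO(n)$.

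The only delicate step is the invariant that, conditional on the sum-check passing, $B_{d+1}$ is reconstructed correctly. Two subtleties deserve care. First, I must verify that $p_d(u)\ne 0$ really characterizes membership in $N_{d+1}$: since all terms $\delta_{w(v)}(\tA(v,u))$ for $v\in B_d$ are nonnegative integers (each is $0$ or $1$ when evaluated at integer arguments), the sum cannot vanish from cancellation, so the characterization is clean. Second, the inductive correctness requires that at step $d$ Verifier already has \emph{exact} labels on $B_d$ (not just the set); this is what justifies computing $w(v) = d+1-\tdist[v]$ honestly in the definition of $p_d(r)$. This is where I expect the main technical attention to go, since everything else is a direct weighted analogue of the unweighted protocol.
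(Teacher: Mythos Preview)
Your proposal is essentially the paper's own proof: the same polynomial $p_d(U)=\sum_{v\in B_d}\delta_{w(v)}(\tA(v,U))$, the same single random point $r$ with stored values $\tA(v,r)$, the same inductive reconstruction of $B_{d+1}$ from streamed $\hp_d$, and the same final intersection-scheme check for unreachability. One small slip: the degree of $\delta_{w(v)}(\tA(v,U))$ in $U$ is $(W-1)(n-1)$, not $(W-1)+(n-1)$, since this is a composition; your stated bound $O(Wn)$ is still correct, but the additive formula you wrote is not.
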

\fi

\mypar{Vanilla Stream\iflipics (Proof of \Cref{thm:sssp-atomic})\else\fi} We now describe a protocol for \SSSP in the model where the edges arrive with their weights, without any further update on them. This is the ``vanilla'' streaming model. 

At the end of the stream, Prover sends the distances $\tdist[v]$ and prev$[v]$--- the parent of $v$ in the shortest path tree rooted at $s$---for all $v\in V$. Verifier checks whether the edges and their weights implied by this proof are correct, using a $[Wn,n]$ subset scheme. Thus, if Prover is honest, we get the distance as well as shortest path from $s$ to each vertex. But we also need to check that there is no path to any vertex shorter than the ones claimed by Prover. We describe a protocol for this. 

For $u,v\in V$ and
$w\in [W]$, define the indicator
function $f$ as $f(u,v,w) =1$ iff $A(u,v)=w$. Let $\tf$ be the
$\FF$-extension of~$f$, for some large
finite field $\FF$.

Retain the definitions of $B_d$ and $N_d$ from last section with the definition of the polynomial $p_d$ changed to 
\begin{equation}\label{eq:alt-dist-poly}
    p_d(U) = \sum_{v\in B_d} \tf(v,U,d+1-\text{dist}_s[v])
\end{equation} 
Hence, it still holds that 
\begin{equation}\label{eq:van-exact-nbhd}
    N_{d+1}= \left\{u \in V\setminus B_d :  p_{d}(u) \neq 0\right\}\,. 
\end{equation}
\begin{protocol}
\item[Stream processing.] The stream updates are of the form $(u,v,w)$ denoting that $A(u,v)=w$. Verifier picks $r\in_R \FF$ and maintains $\tf(v,r,w)$ for each $v\in V$ and $w\in [W]$. He also stores the set $B_1$ with $\text{dist}_s$ labels set to $1$ for each vertex in the set. 
\item[Help message processing and verification.] This part is similar to the turnstile weight update protocol. Of course, this time, the Verifier computes $p_d(r)$ using \Cref{eq:alt-dist-poly}. 
\item[Error probability.] Each polynomial $p_d$ has degree $O(n)$. Verifier does sum-checks for $O(D)$ such polynomials. Choosing $|\FF| \gg Dn$, we can make the error probability small by union bound. 

\item[Help and Verification costs.]
Since the degree of each $p_d$ is at most $n$, the total hcost
is $\tO(Dn)$. Verifier stores $\tf(v,r,w)$ for each $v\in V$ and
$w\in [W]$, which requires $\tO(Wn)$ space. We also need to
store all vertices as we go on assigning the distance labels.
Hence, the total vcost of this protocol is $\tO(Wn)$. 
\end{protocol}

\iflipics
\ssspatomic*
\else
\begin{theorem}\label{thm:sssp-atomic}
There is a $[Dn,Wn]$-scheme for \SSSP in the vanilla streaming model.
\end{theorem}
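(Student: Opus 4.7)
The plan is to adapt the layered ball-growing strategy used for the turnstile case (Theorem~\ref{thm:sssp-turnstile}), exploiting the fact that in the vanilla model edge weights are revealed exactly so as to shrink the per-layer help from $\tO(Wn)$ down to $\tO(n)$. First, at the end of the edge stream Prover transmits the values $\tdist[v]$ and $\mathrm{prev}[v]$ for every $v\in V$, costing $\tO(n)$ bits. Verifier invokes the $[Wn,n]$ subset scheme of Fact~\ref{fact:set-schemes} (with edge-weight universe of size $O(Wn^2)$) to confirm that each purported parent-edge $(\mathrm{prev}[v],v)$ of weight $\tdist[v]-\tdist[\mathrm{prev}[v]]$ really appeared in the stream; this certifies the ``$\tdist[v]\ge \tdist(v_s,v)$'' direction by exhibiting valid witness paths.

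The remaining task is to certify that no vertex has a strictly shorter path than claimed, and for this I would reuse the inductive ball construction. Verifier picks $r\in_R \FF$ once, maintains the table $\{\tf(v,r,w): v\in V,\, w\in[W]\}$ during the edge stream via \Cref{fact:dynamicupdate} (each update touches $O(1)$ entries), and stores $B_1$ with the correct labels. For $d=1,\ldots,D$ in turn, Prover streams the coefficients of a polynomial $\hp_d$ claimed to equal the $p_d$ of \cref{eq:alt-dist-poly}. The critical observation is that $\deg p_d \le n-1$, independent of $W$: the individual degree of $\tf$ in its second argument is at most $n-1$, and in the definition of $p_d$ the first and third arguments of $\tf$ are plugged in as constants. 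This degree saving drives the theorem. Verifier then computes $p_d(r)$ from his table and the stored $B_d$, accepts $\hp_d$ only when $\hp_d(r)=p_d(r)$, and evaluates $\hp_d(u)$ at each $u\in V\setminus B_d$ to recover $N_{d+1}$ via \cref{eq:van-exact-nbhd}, thereby extending $B_d$ to $B_{d+1}$ with the correct distance labels. After the $D$-th round, an $[n,n]$ intersection scheme confirms that no edge crosses between $B_D$ and $V\setminus B_D$, certifying unreachability of the remaining vertices, and a direct comparison of the inductively-built labels against Prover's original $\tdist$ claim completes the verification.

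Soundness comes from a union bound over $O(D)$ Schwartz--Zippel events on degree-$O(n)$ polynomials evaluated at the single random $r$: taking $|\FF|=\poly(D,n)$ drives the total error below $1/n$. The cost accounting is then immediate: the $D$ polynomials $\hp_d$ contribute $\tO(Dn)$ to hcost; the subset and intersection subroutines add $\tO(Wn)$ to hcost and $\tO(n)$ to vcost; and the dominant space is the table $\tf(v,r,w)$ of size $\tO(Wn)$. I expect the main conceptual hurdle---and the real departure from the turnstile protocol---to be the polynomial encoding itself: it is precisely by folding weight information into a dedicated coordinate of the ternary indicator $\tf(v,u,w)$ (an option unavailable when edge weights only emerge through turnstile updates) that we avoid the degree-$(W-1)$ factor $\delta_{w(v)}(\tA(v,U))$ which forced the turnstile protocol to pay $\tO(Wn)$ per layer.
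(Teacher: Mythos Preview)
Your proposal is correct and follows essentially the same approach as the paper: the same ternary indicator encoding $\tf(v,U,w)$, the same $\tO(Wn)$-sized table $\{\tf(v,r,w)\}$, the same inductive ball-growing with degree-$O(n)$ polynomials $p_d$, and the same use of the subset scheme to certify the witness tree. Your explicit articulation of why the degree drops from $O(Wn)$ to $O(n)$---namely that the weight argument of $\tf$ is plugged in as a constant rather than appearing through a $\delta_{w(v)}$ factor---is exactly the point of the construction.
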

\fi
\fi

\bibliography{refs}

\iflipics

\appendix

\else
\bibliographystyle{alpha}
\fi

\end{document}